  \def\ps@pprintTitle{%
 \let\@oddhead\@empty
 \let\@evenhead\@empty
 \def\@oddfoot{\centerline{\thepage}}%
 \let\@evenfoot\@oddfoot}
\DeclareMathOperator*{\argmax}{arg\,max}
\newtheorem{theorem}{Theorem}
\newtheorem{proposition}[theorem]{Proposition}
\newtheorem{definition}[theorem]{Definition}
\newtheorem{remark}[theorem]{Remark}
\newenvironment{proof}[1][Proof]{\begin{trivlist}
\item[\hskip \labelsep {\bfseries #1}]}{\end{trivlist}}
\newcommand{\bs}{\boldsymbol}
\journal{TBA}
\begin{document}
\begin{frontmatter}

\title{Liquidity Competition Between Brokers and an Informed Trader
\tnoteref{t1}}
\tnotetext[t1]{}

\author[author1]{Ryan Donnelly}\ead{ryan.f.donnelly@kcl.ac.uk}
\author[author1]{Zi Li}\ead{zi.2.li@kcl.ac.uk}
\address[author1] {Department of Mathematics, King's College London, \\ Strand, London, WC2R 2LS, United Kingdom}

\date{}

\begin{abstract}
We study a multi-agent setting in which brokers transact with an informed trader. Through a sequential Stackelberg-type game, brokers manage trading costs and adverse selection with an informed trader. In particular, supplying liquidity to the informed traders allows the brokers to speculate based on the flow information. They simultaneously attempt to minimize inventory risk and trading costs with the lit market based on the informed order flow, also known as the internalization-externalization strategy. We solve in closed form for the trading strategy that the informed trader uses with each broker and propose a system of equations which classify the equilibrium strategies of the brokers. By solving these equations numerically we may study the resulting strategies in equilibrium. Finally, we formulate a competitive game between brokers in order to determine the liquidity prices subject to precommitment supplied to the informed trader and provide a numerical example in which the resulting equilibrium is not Pareto efficient.
\end{abstract}

\begin{keyword}
    informed trading, market making, equilibrium, algorithmic trading
\end{keyword}
\end{frontmatter}
\setstcolor{red}
\section{Introduction}

In financial markets, it is common that different parties have different information, a phenomenon referred to as information asymmetry. When dealing with order flow from informed traders, counterparties are generally adversely selected against and provide liquidity at a loss. There is extensive literature on informed and uninformed trading. Early models that deal with information asymmetry include \cite{glosten1985bid} and \cite{kyle1985continuous}. In \cite{kyle1985continuous}, the informed trader has superior information about the payoff of the asset, while noise traders trade randomly. Both the informed and uninformed traders interact with a dealer who specifies execution prices contingent on total order flow. The problem is analyzed by solving for an equilibrium between the dealer's price strategy and the informed trader's order flow strategy. Recently, \cite{herdegen2023liquidity} studies a one-shot Nash competition between an arbitrary number of identical dealers that compete for the order flow of a client. When quoting their price schedules, the dealers do not know the client’s type but only the distribution of clients. \cite{cartea2025brokers} characterizes a perfect information Stackelberg equilibrium between a broker and her clients — an informed trader and an uniformed trader in an over-the-counter market. Different from \cite{herdegen2023liquidity}, the broker can distinguish the type of traders and therefore quote bespoke prices as well as extract superior information from the informed trading flow. In addition to the game setting, some work studies brokers' optimization of trading performance despite the presence of an exogenous informed flow. \cite{barzykin2024unwinding} considers a central trading desk which aggregates the inflow of clients’ orders with unobserved toxicity, and formulates this optimal unwinding problem as a partially observable stochastic control problem. The academic work on detection and prediction of toxic flows also receives a significant amount of attention, see \cite{cartea2023detecting} and \cite{easley2011microstructure}.

In addition to utilizing information to make profitable trades, agents are concerned with minimizing trading costs and managing inventory risks. This type of optimal execution problem has also been heavily studied. \cite{bertsimas1998optimal} and \cite{almgren2001optimal} provide a groundwork for this type of problems in discrete time by modeling the permanent and temporary price impact caused by an agent's trades. Analogous models and results in continuous time are provided in \cite{gueant2012optimal} and \cite{forsyth2012optimal}. Optimal execution models can also be adapted to scenarios that include trading based on access to information, for example statistical arbitrage. One of the early studies in this line of research is \cite{cartea2014buy}, which develops a market-making model with a mean-reverting alpha component in the underlying asset price. Other market signals based on book volume imbalance have also been used, e.g. \cite{cartea2018enhancing,donnelly2020optimal}. 

The actions of an individual agent in a financial market will impact the behaviour of other agents. It is essential to formulate and study models in which multiple agents act simultaneously. There are numerous works in optimal execution which model the interaction between multiple agents, including \cite{schied2019market}, \cite{huang2019mean}, and \cite{donnelly2020optimal}. Our model is a generalization of the single broker and the single informed trader model considered in \cite{cartea2025brokers}. While \cite{bergault2025mean} is an extension of \cite{cartea2025brokers} where a large group of informed traders and one broker are considered in a mean-field setting, here we consider a setting in which a finite number of brokers trade strategically with a single informed trader.

This paper studies the optimal strategies of a group of brokers who act as liquidity providers to other agents. These brokers have as clients an informed trader and an uninformed trader, both of whom trade with all of the brokers simultaneously on bespoke quotes streamed by each broker. The informed trader has privileged information about the trend component of the asset price, while the uninformed trader has no informational advantage and their trades are exogenous and non-directional. In our model, the brokers can extract this privileged information from the informed order flow. Therefore, they can speculate based on the signal in the lit market while externalizing flow from both the informed and uninformed traders. The brokers and the informed trader maximize their expected wealth, while minimizing inventory holdings. When the finite time horizon is reached, each agent that has unsold units of the asset recovers a salvage cost for their remaining inventory. Once the optimal actions of each agent are determined, we demonstrate how the brokers select the liquidity price they offer to the informed trader through an equilibrium mechanism.


The remainder of this paper is organized as follows. Section \ref{sec:model} introduces the model and optimization problems solved by the broker and the informed trader. We first study the informed trader's problem, and derive his optimal strategy in closed-form. With the informed trader's optimal strategy at hand, we then turn to the brokers' problem. In Section \ref{sec:numerical} we conduct a numerical simulation of the strategies of all agents. In Section \ref{sec:liquidity_game} we demonstrate how brokers will select their liquidity prices in equilibrium in order to maximize their own value, and investigate how risk-aversion parameters affect the selection of these liquidity prices in equilibrium. Section \ref{sec:conclusion} concludes.

\section{Model}\label{sec:model}

Consider a market with $N$ brokers, an informed trader, and an uninformed trader.\footnote{The model treats each the informed and uninformed trader as an individual agent, but they could represent an aggregate of all informed and uninformed trading activity.} In this market only the $N$ brokers can provide liquidity by streaming quotes to the informed and uninformed traders. The informed and uninformed traders submit trades to all brokers simultaneously. The brokers know the identity of clients, and so they can supply different quotes to informed and uninformed traders. Additionally, the brokers are able to submit trades to a lit market.

Let $T>0$ be a finite-time horizon and $\mathcal{T}=[0,T]$. We work in a probability space $\left(\Omega,\mathcal{F}, \mathbb{F}=\left(\mathcal{F}_t\right)_{t\in\mathcal{T}},\mathbb{P}\right)$ satisfying the usual conditions and supporting $N+2$ Brownian motions $W^S$, $W^{\alpha}$, and $W^{U,j}$ for $j=1,\dots,N$. The Brownian motions $W^S$ and $W^\alpha$ are independent of all others, but the collection $W^{U,j}$ for $j=1,\dots,N$ has constant correlation matrix $\rho = (\rho_{i,j})$. The mid-price $\left(S^\nu_t\right)_{t\in\mathcal{T}}$ of the asset in the lit market satisfies
\begin{align}
    dS^\nu_t &= \left(\sum_{j=1}^N b_j\,\nu_t^{j}+\alpha_t\right)dt+\sigma\,dW^S_t\,,\quad S^\nu_0\in\mathbb{R}^+\,,\label{eqn:dS}\\
    d\alpha_t &= -\theta\,\alpha_t\,dt+\eta\,dW_t^{\alpha}\,,\quad \alpha_0\in\mathbb{R}\,,\label{eqn:dAlpha}
\end{align}
where $\sigma$, $\eta$, $\theta$, and $b_j$ for $j\in\{1,\dots,N\}$ are positive constants. In equation \eqref{eqn:dS}, the first $N$ terms in the drift represent the impacts each broker has on the mid-price, where $(\nu^{j}_t)_{t\in\mathcal{T}}$ represents broker $j$’s trading speed in the lit market and $b_j$ is the price impact parameter for broker $j$. The additional information held by the informed trader flows into the price of the asset via the last term of the drift in equation \eqref{eqn:dS}, represented by $\alpha_t$. This additional information is generally known as `alpha' and we assume that it is mean-reverting with dynamics given by equation \eqref{eqn:dAlpha}. 

Temporary price impacts occur when the brokers trade in the lit market. Specifically, the transaction price achieved by broker $j$ is
\begin{align*}
    \hat{S}^{j}_t = S^\nu_t+k_j\,\nu^{j}_t\,,\quad j\in\{1,\dots,N\}\,,
\end{align*}
where $k_j>0$ is the temporary price impact parameter. The informed and uninformed agents do not trade in the lit market, but instead submit trades to each of the brokers who provide them with liquidity. We assume that the identity of each trader is known to the brokers and a customized quote is made correspondingly. In particular, at time $t$, broker $j$ receives orders from the informed trader at speed $\omega^{j}_t$ and from the uninformed trader at speed $u^{j}_t$. Broker $j$ quotes the prices $\hat{S}^{I,j}_t$ and $\hat{S}^{U,j}_t$ for each type of trader. The cost of liquidity provided by each broker to both types of agent are specified as a function of each trader's rate of trading. The quotes (i.e., execution prices if there is a trade) for the informed and uninformed traders are
\begin{align}
    \hat{S}^{I,j}_t &= S^\nu_t + \kappa_j\,\omega^{j}_t\quad\text{and}\quad\hat{S}^{U,j}_t = S^\nu_t + c_j\,u^{j}_t\,,\quad j\in\{1,\dots,N\}\,,\label{eqn:ExecutionPrices}
\end{align}
respectively, where the liquidity cost parameters $\kappa_j>0$ are known by the informed trader and $c_j>0$ are known by the uninformed; when the trading rate is positive (negative) the trader buys (sells) the asset.

In the remainder of this section, subsection \ref{subsec:ITraderStratgy_multi} specifies and solves the informed trader’s problem, and subsection \ref{subsec:BrokerStratgy_multi} specifies and solves the problem faced by the brokers.

\subsection{Informed Trader's Strategy}\label{subsec:ITraderStratgy_multi}

In this section we adopt much of the analysis of \cite{cartea2025brokers} but where the informed trader has a multivariate control process distributing trades among the multiple brokers. After deriving the optimal trading speeds, we show that each broker can deduce total order flow even though they only directly observe the trades submitted to themselves.

The informed trader knows the alpha component \eqref{eqn:dAlpha} of the mid-price but not the brokers' trading rates $\nu^{j}$ in the lit market. The filtration of the informed trader $\left(\mathcal{F}^I_t\right)_{t\in\mathcal{T}}$ is given by
\begin{align*}
    \mathcal{F}^I_t\coloneqq\sigma\left[\left\{S_u\right\}_{u\le t}, \left\{\alpha_u\right\}_{u\le t}\right]\,.
\end{align*}
Here we follow an analysis similar to \cite{cartea2025brokers} in which the informed trader incorporates model ambiguity with respect to the assets dynamics because he is unable to observe the trades of the brokers in the lit market. To this end, we specify a new probability measure $\mathbb{P}^I$ fixed by the informed trader under which the dynamics of the asset price are
\begin{align*}
    dS_t &= \alpha_t\,dt+\sigma\,d\widetilde{W}^S_t\,,\quad S_0\in\mathbb{R}^+\,,\\
    d\alpha_t &= -\theta\,\alpha_t\,dt+\eta\,dW_t^{\alpha}\,,\quad \alpha_0\in\mathbb{R}\,,
\end{align*}
where $(\widetilde{W}^S_t)_{t\in\mathcal{T}}$ and $\left(W^{\alpha}_t\right)_{t\in\mathcal{T}}$ are $\mathbb{P}^I-$Brownian motions independent of each other, and he works on a complete filtered probability space $(\Omega,\mathcal{F},\mathbb{F}^I=(\mathcal{F}^I_t)_{t\in\mathcal{T}},\mathbb{P}^I)$. The set of admissible strategies for the informed trader is
\begin{align*}
    \mathcal{A}^I\coloneqq\Bigg\{&\bs{\omega}=\left(\bs{\omega}_t\right)_{t\in\mathcal{T}}=\left(\omega^{1}_t,\dots,\omega^{N}_t\right)_{t\in\mathcal{T}}\bigg\vert\bs{\omega}\text{ is }\mathbb{P}^I-\text{progressively measurable,}\\ 
    &\hspace{2mm}\text{and }\mathbb{E}^{\mathbb{P}^I}\left[\int^T_0|\bs{\omega}_s|^2 ds\right]<\infty\Bigg\}\,,
\end{align*}
where $\mathbb{E}^{\mathbb{P}^I}[\cdot]$ denotes $\mathbb{P}^I-$expectation. Recall that $\omega_t^j$ is the speed at which the informed trader submits trades to broker $j\in\{1,\dots,N\}$ at time $t$, and so the informed trader’s inventory process $(Q^{I,\omega}_t)_{t\in\mathcal{T}}$ and cash process $(X^{I,\omega}_t)_{t\in\mathcal{T}}$ satisfy
\begin{align*}
    dQ_t^{I,\omega} &= \sum_{j=1}^N \omega^{j}_t\,dt\,, \qquad Q_0^{I,\omega}=0\,,\\
    dX_t^{I,\omega} &= -\sum_{j=1}^N \left(S_t + \kappa_j\,\omega^{j}_t\right)\omega^{j}_t\,dt\,, \quad X_0^{I,\omega}=0\,.
\end{align*}
where $\kappa_j>0$ is the price that broker $j$ charges the informed trader for liquidity for any $j\in\{1,\dots,N\}$.

The informed trader knows that his information about the drift of the mid-price process under measure $\mathbb{P}^I$ is incomplete, and therefore considers alternative models of the mid-price dynamics under a candidate measure $\mathbb{Q}^I$ equivalent to $\mathbb{P}^I$. Define a Radon-Nikodym derivative process by
\begin{align*}
    \frac{d\mathbb{Q}^{I}(y^S)}{d\mathbb{P}^I}\Big\vert_t &= \exp\left\{-\frac{1}{2}\int^t_0\left(y^S_u\right)^2du-\int^t_0 y^S_u\,d\widetilde{W}^{S}_u\right\}\,,
\end{align*}
and define the full class of candidate measures considered by the informed trader by
\begin{align*}
    \mathcal{Q}^I = \Bigg\{
    &\mathbb{Q}^I(y^S):y^S\text{ is }\mathbb{F}^I-\text{adapted, and }\left(\frac{d\mathbb{Q}^{I}(y^S)}{d\mathbb{P}^I}\Big\vert_t\right)_{t\in\mathcal{T}}\text{ is a martingale under }\mathbb{P}^I\Bigg\}\,.
\end{align*}
In the new measure $\mathbb{Q}^I(y^S)$, the dynamics of the mid-price is changed to
\begin{align*}
    dS_t=\left(\alpha_t-\sigma y^S_t\right)dt+\sigma\,d\overline{W}^S_t\,,
\end{align*}
where $(\overline{W}^S_t)_{t\in\mathcal{T}}$ is a $\mathbb{Q}^I-$Browian motion. Within the set of candidate measures $\mathcal{Q}^I$, the informed trader aims to rank the candidates and choose the one that makes his strategy the more robust to the misspecification. Therefore, a penalty is introduced to measure the `cost' of deviation from the reference measure, i.e. rejecting the reference measure $\mathbb{P}^I$ and accepting a candidate measure $\mathbb{Q}^I$. A popular choice for the penalty function is the relative entropic penalty function
\begin{align*}
    \mathcal{H}^I_{t,T}\left(\mathbb{Q}^I\vert\mathbb{P}^I\right) = \frac{1}{\psi_I}\log\left\{\left(\frac{d\mathbb{Q}^I}{d\mathbb{P}^I}\right)_T\middle/\left(\frac{d\mathbb{Q}^I}{d\mathbb{P}^I}\right)_t\right\}\,,
\end{align*}
where $\psi_I>0$ is the ambiguity aversion parameter that shows the confidence of the informed trader towards his reference model. In the case where $\psi_I$ is very small, the informed trader is extremely confident about the reference model as any deviation from the reference model is costly. In the limiting case where $\psi_I\to0$, the informed trader rejects any alternative models. On the other hand, when $\psi_I$ is very large, considering alternative models results in a very small penalty, indicating that the informed trader is extremely ambiguous about the reference model. In the extreme case where $\psi_I\to\infty$, the informed trader focuses on the worst scenario when trading.

In this way, the informed trader's optimization problem becomes
\begin{align}
    \begin{split}
        H^I\left(t,\alpha,S,q^I,x^I\right)=\sup_{\bs{\omega}\in\mathcal{A}^I}\inf_{\mathbb{Q}^I\in\mathcal{Q}^I}\mathbb{E}^{\mathbb{Q}^I}_{t,\alpha,S,q^I,x^I}\biggl[X^{I,\omega}_T+Q^{I,\omega}_T\,S_T-a_I\left(Q^{I,\omega}_T\right)^2\hspace{10mm}\\
          -\phi_I\int^T_t\left(Q^{I,\omega}_u\right)^2 du+\mathcal{H}^I_{t,T}\left(\mathbb{Q}^I\vert\mathbb{P}^I\right)\biggr]\,,
    \end{split}\label{eqn:InformedValueFunction}
\end{align}
where $\mathbb{E}^{\mathbb{Q}^I}_{t,\alpha,S,q^I,x^I}[\cdot]$ denotes $\mathbb{Q}^I-$expectation conditional on $\alpha_t=\alpha$, $S_t=S$, $X^{I,\omega}_t=x^I$ and $Q^{I,\omega}_t=q^I$, and $a_I\ge0$ and $\phi_I\ge0$ are the terminal and running inventory penalty parameters, respectively. By the dynamic programming principle, the associated HJB–Isaacs (HJBI) equation is
\begin{align}
    \begin{split}
        &\partial_t H^I+\mathcal{L}^\alpha H^I-\phi_I\left(q^I\right)^2+\alpha\,\partial_S H^I+\frac{1}{2}\sigma^2\partial^2_{S}H^I+\inf_{y^S}\left\{-\sigma\,y^S\,\partial_S H^I+\frac{\left(y^S\right)^2}{2\,\psi_I}\right\}\\
        &\hspace{5mm} + \sup_{\omega^{1},\dots,\omega^{N}}\left\{\left(\sum_{j=1}^N \omega^{j}\right)\partial_{q^I}H^I-\left(\sum_{j=1}^N(S+\kappa_j\,\omega^{j})\,\omega^{j}\right)\partial_{x^I}H^I\right\}=0\,,\label{eqn:HJBI}
    \end{split}
\end{align}
with terminal condition $H^I(T,\alpha,S,q^I,x^I)=x^I+S\,q^I-a_I(q^I)^2$, where
\begin{align}
    \mathcal{L}^\alpha = -\theta\,\alpha\,\partial_\alpha+\frac{1}{2}\eta^2\partial^2_{\alpha}\label{eqn:InfinitesimalOperatorAlpha}
\end{align}
is the infinitesimal generator of the process $\alpha$. The optimizers within \eqref{eqn:HJBI} are
\begin{align}
    \begin{split}
        y^{S*} = \psi^I\,\sigma\,\partial_S H
        \quad\text{and}\quad\quad \omega^{j*} = \frac{1}{\kappa_j}\frac{\partial_{q^I}H^I-S\,\partial_{x^I}H^I}{2\,\partial_{x^I}H^I}.\label{eqn:OptimalInformedTradingSpeedBeforeAnsatz}
    \end{split}
\end{align}
Substituting these controls back into \eqref{eqn:HJBI}, we get the partial differential equation (PDE)
\begin{align}
    \begin{split}
        &\partial_t H^I+\mathcal{L}^\alpha H^I-\phi_I\left(q^I\right)^2+\alpha\,\partial_S H^I+\frac{1}{2}\sigma^2\partial^2_{S}H^I\\
        &\hspace{15mm} -\frac{1}{2}\,\psi_I\left(\sigma\,\partial_S H^I\right)^2+\left(\sum_{j=1}^N\frac{1}{\kappa_j}\right)\frac{\left(\partial_{q^I}H^I-S\,\partial_{x^I}H^I\right)^2}{4\,\partial_{x^I}H^I}=0\,.
    \end{split}\label{eqn:HJBI_sub}
\end{align}

\begin{proposition}[Solution to HJBI Equation]
    The HJBI equation \eqref{eqn:HJBI_sub} admits the ansatz
    \begin{align}
        H^I\left(t,\alpha,S,q^I,x^I\right)=x^I+S\,q^I+h^I_0\left(t,\alpha\right)+h^I_1\left(t,\alpha\right)\,q^I+h^I_2\left(t\right)\left(q^I\right)^2\,,\label{eqn:InformedValueFunctionCandidate}
    \end{align}
    with the functions $h_0^I$, $h_1^I$, and $h_2^I$ given by
    \begin{align*}
        h^I_0\left(t,\alpha\right) &= f^I_0(t)+\alpha^2\,f^I_2(t)\,,\\
        h^I_1\left(t,\alpha\right) &= \alpha\,m^I(t)\,,\\
        h^I_2(t) &= -\sqrt{\kappa\,\Phi}\,\,\,\frac{\zeta\,e^{\gamma(T-t)}+e^{-\gamma(T-t)}}{\zeta\,e^{\gamma(T-t)}-e^{-\gamma(T-t)}}\,\,,
    \end{align*}
    where
    \begin{align*}
        \frac{1}{\kappa} &= \sum_{j=1}^N \frac{1}{\kappa_j}\,,\quad \Phi=\frac{1}{2}\,\psi_I\sigma^2+\phi_I\,,\\
        \gamma&=\sqrt{\frac{\Phi}{\kappa}}\,,\quad\quad\,\zeta = \frac{a_I+\sqrt{\kappa\,\Phi}}{a_I-\sqrt{\kappa\,\Phi}}\,,
    \end{align*}
    and
    \begin{align*}
        m^I(t) &= \frac{\zeta}{\theta+\gamma}\biggl(\frac{e^{-\theta(T-t)}-e^{\gamma(T-t)}}{e^{-\gamma(T-t)}-\zeta\,e^{\gamma(T-t)}}\biggr)-\frac{1}{\theta-\gamma}\biggl(\frac{e^{-\theta(T-t)}-e^{-\gamma(T-t)}}{e^{-\gamma(T-t)}-\zeta\,e^{\gamma(T-t)}}\biggr)\\
        f^I_2(t) &= \int^T_t\frac{\left(m^I(u)\right)^2}{4\,\kappa}\,e^{-2\,\theta\,(u-t)}\,du\,,\\
        f^I_0(t) &= \int^T_t\eta^2 f^I_2(u)\,du\,.
    \end{align*}
\end{proposition}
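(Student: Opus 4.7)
The plan is to verify the proposition by direct substitution of the polynomial ansatz \eqref{eqn:InformedValueFunctionCandidate} into the reduced HJBI equation \eqref{eqn:HJBI_sub} and then solve the resulting system of ODEs in cascade. The ansatz is chosen so that $\partial_S H^I = q^I$ and $\partial^2_S H^I = 0$, giving $\partial_{x^I} H^I = 1$ and $\partial_{q^I} H^I = S + h^I_1 + 2\,h^I_2\,q^I$; the optimizers \eqref{eqn:OptimalInformedTradingSpeedBeforeAnsatz} then collapse to
\begin{align*}
    y^{S*} = \varphi_I\,\sigma\,q^I\qquad\text{and}\qquad\omega^{j*} = \frac{h^I_1 + 2\,h^I_2\,q^I}{2\,\kappa_j}\,,
\end{align*}
so the two nonlinear terms in \eqref{eqn:HJBI_sub} become $-\tfrac12\varphi_I\sigma^2(q^I)^2$ and $(h^I_1 + 2\,h^I_2\,q^I)^2/(4\,\kappa)$ with $1/\kappa=\sum_j 1/\kappa_j$. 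After substitution the explicit $S$-dependence cancels (the $S$ in $\partial_{q^I}H^I$ is exactly cancelled by $S\,\partial_{x^I}H^I$), leaving a polynomial identity in $q^I$ whose coefficients must vanish for each fixed $(t,\alpha)$.

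Collecting powers of $q^I$ produces three equations. The $(q^I)^2$ coefficient is a Riccati ODE for $h^I_2$ alone, namely $\partial_t h^I_2 + (h^I_2)^2/\kappa - \Phi = 0$ with $\Phi=\tfrac12\varphi_I\sigma^2+\phi_I$ and terminal condition $h^I_2(T)=-a_I$ from matching $H^I(T,\cdot)=x^I+S\,q^I-a_I(q^I)^2$. The $q^I$ coefficient is linear in $h^I_1$ with source proportional to $\alpha$, which motivates the subordinate ansatz $h^I_1(t,\alpha)=\alpha\,m^I(t)$; applying $\mathcal{L}^\alpha$ from \eqref{eqn:InfinitesimalOperatorAlpha} and matching powers of $\alpha$ reduces this to the linear ODE $\partial_t m^I + (h^I_2/\kappa - \theta)\,m^I + 1 = 0$ with $m^I(T)=0$. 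The $(q^I)^0$ coefficient, combined with the ansatz $h^I_0(t,\alpha)=f^I_0(t)+\alpha^2 f^I_2(t)$, separates by powers of $\alpha$ into $\partial_t f^I_2 - 2\theta\,f^I_2 + (m^I)^2/(4\kappa) = 0$ with $f^I_2(T)=0$, together with $\partial_t f^I_0 + \eta^2\,f^I_2 = 0$ and $f^I_0(T)=0$.

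The principal obstacle is solving the Riccati ODE in closed form. Its constant solutions are $\pm\sqrt{\kappa\,\Phi}$, which suggests the M\"obius substitution $h^I_2 = -\sqrt{\kappa\,\Phi}\,(Z+1)/(Z-1)$; under this change of variables $Z$ satisfies the linear equation $\partial_t Z = -2\gamma\,Z$ with $\gamma=\sqrt{\Phi/\kappa}$, so $Z(t)=Z(T)\,e^{2\gamma(T-t)}$, and the terminal condition $h^I_2(T)=-a_I$ fixes $Z(T)=\zeta=(a_I+\sqrt{\kappa\Phi})/(a_I-\sqrt{\kappa\Phi})$ as stated. Rearranging by multiplying numerator and denominator by $e^{-\gamma(T-t)}$ produces the displayed form of $h^I_2$. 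With $h^I_2$ in hand, the linear ODE for $m^I$ is solved by the integrating factor $\exp\bigl(\int_t^T(h^I_2(u)/\kappa-\theta)\,du\bigr)$; evaluating this integral by partial fractions using the explicit $h^I_2$ yields the two-term expression for $m^I$ in the statement, and the formulas for $f^I_2$ and $f^I_0$ then follow by direct quadrature, with integrating factor $e^{-2\theta(u-t)}$ for $f^I_2$. To close the argument, I would verify the terminal conditions $h^I_0(T,\alpha)=0$, $h^I_1(T,\alpha)=0$, and $h^I_2(T)=-a_I$ by inspection of the solution formulas, and note that the second-order conditions in \eqref{eqn:HJBI}, which reduce to $\partial_{x^I}H^I=1>0$, guarantee the stationary points are genuinely a sup and inf, so that $\bs{\omega}^*\in\mathcal{A}^I$.
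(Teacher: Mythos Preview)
Your proposal is correct and follows the same direct-substitution approach the paper uses; the paper's own proof is in fact the single sentence ``This can be checked by direct substitution of \eqref{eqn:InformedValueFunctionCandidate} into \eqref{eqn:HJBI_sub},'' so you have supplied considerably more detail than the authors do, deriving the Riccati and linear ODEs and solving them explicitly rather than leaving the verification to the reader. The only remark is that your closing comments on second-order conditions and admissibility of $\bs{\omega}^*$ belong to the subsequent verification theorem rather than to this proposition, which asserts only that the ansatz solves the PDE; they are correct but unnecessary here.
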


\begin{proof}
    This can be checked by direct substitution of \eqref{eqn:InformedValueFunctionCandidate} into \eqref{eqn:HJBI_sub}. \qed
\end{proof}

\begin{theorem}[Verification Theorem]\label{the:VerificationForInformedTrader} The control problem in \eqref{eqn:InformedValueFunction} has a classical solution. The value function in \eqref{eqn:InformedValueFunction} is given by \eqref{eqn:InformedValueFunctionCandidate} and the optimal trading strategy of the informed trader with broker $j\in\{1,\dots,N\}$ is the admissible control given in feedback form by
    \begin{align}
        \omega^{j*}_t = \omega^j_0(t)\,\alpha_t + \omega^j_1(t)\,Q^{I,\omega^*}_t\,,\quad\text{for } j\in\{1,\dots, N\}\,,\label{eqn:OptimalInformedTradingSpeed}
    \end{align}
    where
    \begin{align}
        \omega^j_0(t) = \frac{1}{\kappa_j}\,\frac{m^I(t)}{2}\,,\quad\quad \omega^j_1(t) = -\frac{1}{\kappa_j}\sqrt{\kappa\,\Phi}\,\frac{\zeta\,e^{\gamma(T-t)}+e^{-\gamma(T-t)}}{\zeta\,e^{\gamma(T-t)}-e^{-\gamma(T-t)}}\,.\label{eqn:FunctionG}
    \end{align}
    The optimal change of measure is the admissible control given in feedback form by
    \begin{align}
        y^{S*}_t = \psi_I\,\sigma\,Q^{I,\omega^*}_t\,.\label{eqn:OptimalMeasure}
    \end{align}
\end{theorem}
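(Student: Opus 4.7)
The plan is to carry out a standard stochastic-control verification argument: we already know from the preceding proposition that the ansatz \eqref{eqn:InformedValueFunctionCandidate} is a classical (i.e.\ $C^{1,2,2,2,1}$) solution of the HJBI PDE \eqref{eqn:HJBI_sub} with the correct terminal condition, and the pointwise maximizer/minimizer in \eqref{eqn:OptimalInformedTradingSpeedBeforeAnsatz} evaluated at this ansatz produces exactly the feedback controls \eqref{eqn:OptimalInformedTradingSpeed} and \eqref{eqn:OptimalMeasure}. What remains is (i) to show these candidate feedbacks are admissible, and (ii) to show that the candidate $H^I$ dominates the performance of any admissible $\bs\omega$ under any candidate measure and coincides with the performance under the candidate optima.

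First I would verify admissibility. Substituting $\omega^{j*}_t=\omega_0^j(t)\alpha_t+\omega_1^j(t)Q^{I*}_t$ into $dQ^I_t=\sum_j\omega^{j*}_t\,dt$ gives a linear (in $(\alpha,Q^{I*})$) ODE driven by $\alpha$; since $\omega_0^j$ and $\omega_1^j$ are bounded on $[0,T]$ and $\alpha$ is an Ornstein--Uhlenbeck process with $\sup_{t\le T}\mathbb E^{\mathbb P^I}[\alpha_t^{2p}]<\infty$ for every $p\ge 1$, a Gronwall argument yields $\sup_{t\le T}\mathbb E^{\mathbb P^I}[(Q^{I*}_t)^{2p}]<\infty$, whence $\mathbb E^{\mathbb P^I}[\int_0^T|\bs\omega^*_t|^2\,dt]<\infty$ and $\bs\omega^*\in\mcA^I$. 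For the measure change, $y^{S*}_t=\varphi_I\sigma\,Q^{I*}_t$ is $\mathbb F^I$-adapted and the same moment bound gives Novikov's condition on $[0,T]$, so $(d\mathbb Q^{I*}/d\mathbb P^I|_t)$ is a true $\mathbb P^I$-martingale and $\mathbb Q^{I*}\in\mcQ^I$.

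Next I would pick an arbitrary $\bs\omega\in\mcA^I$ and $\mathbb Q^I(y^S)\in\mcQ^I$, and apply It\^o's formula to $H^I(t,\alpha_t,S_t,Q^I_t,X^I_t)$ under $\mathbb Q^I$, adding the running penalty $-\phi_I(Q^I_u)^2$ and the running relative-entropy density $\tfrac{1}{2\varphi_I}(y^S_u)^2$. The drift that appears is exactly the left-hand side of \eqref{eqn:HJBI} evaluated at the chosen $(\bs\omega,y^S)$ rather than at the optimizers; by definition of the $\sup\inf$ in \eqref{eqn:HJBI} this drift is $\le 0$, and it equals $0$ when $(\bs\omega,y^S)=(\bs\omega^*,y^{S*})$. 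Using the polynomial form of the ansatz together with the moment bounds on $\alpha$, $S$, $Q^I$, $X^I$ (the latter two controlled through the $L^2$ bound on $\bs\omega$), the stochastic integrals are true $\mathbb Q^I$-martingales of zero mean, and taking conditional expectations at time $t$ yields
\begin{align*}
H^I(t,\alpha,S,q^I,x^I)\ \ge\ \mathbb E^{\mathbb Q^I}_{t,\alpha,S,q^I,x^I}\!\left[X^I_T+Q^I_T S_T-a_I(Q^I_T)^2-\phi_I\!\int_t^T(Q^I_u)^2du+\mcH^I_{t,T}(\mathbb Q^I\vert\mathbb P^I)\right],
\end{align*}
with equality under $(\bs\omega^*,y^{S*})$. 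Taking $\sup_{\bs\omega}\inf_{\mathbb Q^I}$ on the right gives the reverse inequality, identifying the ansatz with the value function and confirming $(\bs\omega^*,y^{S*})$ as optimal.

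The main obstacle is the martingality step: the stochastic integrals involve products such as $Q^I_t\,\partial_S H^I$ and $\alpha_t\,\partial_\alpha H^I$, whose $L^2(\mathbb Q^I\otimes dt)$ norms must be uniformly controlled over all admissible $\bs\omega$ and all admissible $y^S$. The delicate point is that under a generic $\mathbb Q^I\in\mcQ^I$ we have no a priori moment bounds for $S$ and $\alpha$, so one must either truncate the class to measures for which these moments are finite (and then pass to the limit using the entropy penalty to control the Radon--Nikodym density) or use a localisation sequence of stopping times $\tau_n\uparrow T$, apply the inequality on $[t,\tau_n]$, and pass to the limit using monotone/dominated convergence together with the quadratic coercivity provided by $a_I(Q^I_T)^2$ and $\phi_I\int_t^T(Q^I_u)^2du$. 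Once the localisation is justified the identification of the value function and of the optimal feedback controls follows from the pointwise optimiser calculation already recorded in \eqref{eqn:OptimalInformedTradingSpeedBeforeAnsatz}.
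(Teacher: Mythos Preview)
Your admissibility argument and your localisation discussion are more careful than what the paper actually writes (the paper simply asserts that the stochastic integrals have zero mean), so on those points you are fine. The genuine gap is in the sign of the It\^o drift and, consequently, in how you obtain the two inequalities needed for a saddle point.

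You claim that for an \emph{arbitrary} pair $(\bs\omega,y^S)$ the drift coming from \eqref{eqn:HJBI} is $\le 0$. This is false: the HJBI equation reads
\[
0=\partial_t H^I+\dots+\inf_{y^S}F(y^S)+\sup_{\bs\omega}G(\bs\omega),
\]
so the drift at a generic $(\bs\omega,y^S)$ equals $\bigl(F(y^S)-\inf F\bigr)+\bigl(G(\bs\omega)-\sup G\bigr)$, a sum of a nonnegative and a nonpositive term with indeterminate sign. A single inequality $H^I\ge\text{perf}(\bs\omega,\mathbb Q^I)$ for all pairs, together with equality at $(\bs\omega^*,y^{S*})$, would in fact make $\mathbb Q^{I*}$ the \emph{maximising} measure for $\bs\omega^*$, not the minimising one, and does not yield $H^I\le\sup_{\bs\omega}\inf_{\mathbb Q^I}\text{perf}$.

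The fix, which is exactly what the paper does, is to run the It\^o argument twice: once with $\bs\omega=\bs\omega^*$ and $y^S$ arbitrary (then $G(\bs\omega^*)=\sup G$ and the drift is $\ge 0$, giving $\hat H^I\le\text{perf}(\bs\omega^*,\mathbb Q^I)$ for every $\mathbb Q^I$, hence $\hat H^I\le\inf_{\mathbb Q^I}\text{perf}(\bs\omega^*,\cdot)\le\sup_{\bs\omega}\inf_{\mathbb Q^I}\text{perf}$), and once with $y^S=y^{S*}$ and $\bs\omega$ arbitrary (then $F(y^{S*})=\inf F$ and the drift is $\le 0$, giving $\hat H^I\ge\text{perf}(\bs\omega,\mathbb Q^{I*})$ for every $\bs\omega$, hence $\hat H^I\ge\sup_{\bs\omega}\text{perf}(\cdot,\mathbb Q^{I*})\ge\inf_{\mathbb Q^I}\sup_{\bs\omega}\text{perf}\ge\sup_{\bs\omega}\inf_{\mathbb Q^I}\text{perf}$). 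Sandwiching these two chains identifies $\hat H^I$ with the value function and certifies $(\bs\omega^*,y^{S*})$ as a saddle point. Once you split the argument this way, the rest of your proposal (moment bounds, Novikov, localisation) goes through.
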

\begin{proof}
    This follows analogously to the proof of Theorem 2.2 in \cite{cartea2025brokers} where multivariate controls are substituted as appropriate.
\end{proof}
\begin{remark}
    Taking a closer look at functions in \eqref{eqn:FunctionG}, we note that $\omega^j_1(t)$ is always negative. This means that in the informed trader's optimal strategy \eqref{eqn:OptimalInformedTradingSpeed}, the second term $\omega^j_1(t)\,Q^{I,\omega^*}_t$ is responsible for managing the risk associated with holding inventory through time and limiting the liquidation cost at maturity. Furthermore, as time approaches maturity, function $\omega^j_0(t)$ converges to zero. Hence, near the terminal time, the informed trader's strategy is dominated by inventory liquidation, with diminishing emphasis on exploiting the alpha component. \label{rem:FunctionG}
\end{remark}

Considering the earlier discussion before \eqref{eqn:InformedValueFunction}, we observe that the value function and optimal controls are well defined for all positive finite $\psi_I$. However, allowing $\psi_I\rightarrow\infty$ would introduce the problem where the optimal drift would approach $\pm \infty$ based on the sign of $Q^{I,\omega^*}_t$. To address this, one can place large but finite bounds on the drift which permits an infinite ambiguity parameter. While the exact form of the optimal controls in \eqref{eqn:OptimalInformedTradingSpeed} would change in this case, the overall qualitative behaviour of the optimal speeds would remain similar.

In addition, from \eqref{eqn:OptimalInformedTradingSpeedBeforeAnsatz} and \eqref{eqn:OptimalInformedTradingSpeed} we can observe that
\begin{align}
    \kappa_i\,\omega^{i*}_t = \kappa_j\,\omega^{j*}_t\, \quad \mbox{ for all } i,j\in\{1,\dots,N\}\,.\label{eqn:proportional_trades}
\end{align}
This means that under the optimal strategy the informed trader's temporary price impact of trades to each broker, given by $\kappa_{i} \omega_t^{i}$ for $i\in\{1,\dots,N\}$, and, in turn, transaction prices at all brokers, are equal. Furthermore, denoting the total trading speed process of the informed trader by $(\omega_t)_{t\in\mathcal{T}}$, we observe under the optimal strategy
\begin{align}
    \omega^{*}_t = \sum^N_{i=1}\omega^{i*}_t=\left(1+\sum^N_{i\neq j}\frac{\kappa_j}{\kappa_i}\right)\,\omega^{j*}_t\,, \quad \forall\, j\in\{1,\dots,N\}\,.\label{eqn:total_speed}
\end{align}

Thus, when broker $j$ supplies liquidity to the informed trader, she is able to deduce the total trading speed of the informed trader, and can therefore infer the informed trader's inventory by integrating the total flow.\footnote{This requires the assumption that broker $j$ knows the informed trader's initial inventory, which is stated below and is subjected to a robustness analysis in a later section.}. Additionally, because trades submitted to each broker by the informed trader are proportional as in equation \eqref{eqn:proportional_trades}, each broker can deduce the trading speed that the insider submits to the other brokers.

\subsection{Brokers' Strategy}\label{subsec:BrokerStratgy_multi}

We consider $N$ brokers who stream quotes to the informed and uninformed traders. We assume that the brokers know the informed trader’s model and the value of the model parameters, and the initial inventory of the informed trader\footnote{The initial inventory of the informed trader to be known by the brokers is a mild restriction in practice, as traders typically aim to end the day with a flat position.}. The brokers observes the mid-price process $(S^\nu_t)_{t\in\mathcal{T}}$ of the LOB in the lit market, but do not directly observe the alpha component \eqref{eqn:dAlpha}. The signal is costly to obtain. The brokers `pay' for information to learn this signal by trading with the informed trader. Based on the discussion after \eqref{eqn:proportional_trades} the brokers are able to deduce the alpha component by observing the order flow that they receive from the informed trader. In particular, broker $i$ assumes that the informed trader behaves optimally and derives the form of the informed trader’s optimal strategy in \eqref{eqn:OptimalInformedTradingSpeed}. With the informed trading speed $\omega^{i*}_t$ submitted to her, she computes the total informed trading speed $\omega^*_t$ via \eqref{eqn:total_speed}, and then infers the informed trader's inventory as
$Q_t^{I,\omega^*} = Q^{I,\omega^*}_0+\int^t_0\,\omega^*_u\,du$. Equipped with this information, it is straightforward to extract the alpha component of the mid-price as
\begin{align}
    \alpha_t = \frac{\omega^{i*}_t-\omega^i_1(t)\,Q^{I,\omega^*}_t}{\omega^i_0(t)}\,.\label{eqn:AlphaExtraction}
\end{align}
This broker $i$ also provides liquidity to the uninformed trader whose rate of trading follows the Ornstein–Uhlenbeck process
\begin{align}
    du^i_t=-\theta_i\,u^i_t\,dt+\eta_i\,dW^{U,i}_t\,,\quad u^i_0\in\mathbb{R}\,,\label{eqn:uninformed_trading}
\end{align}
where $\theta_i$ and $\eta_i$ are positive constants.\footnote{Recall $(W^{U,i}_t)_{t\in\mathcal{T}}$ is a Brownian motion independent of $W^S$ and $W^\alpha$, but $W^{U,i}$ is correlated to $W^{U,j}$ with parameter $\rho_{i,j}$.} We assume for simplicity that $(u^j_t)_{t\in\mathcal{T}}$ is also visible to broker $i$ for each $j\neq i$. When broker $i$ trades in the lit market, her cash process $(X^{i,\nu^i,\omega^i}_t)_{t\in\mathcal{T}}$ follows
\begin{align*}
    dX^{i,\nu^i,\omega^i}_t = u^i_t\left(S^\nu_t+c_i\,u^i_t\right)dt+\omega^i_t\left(S^\nu_t+\kappa_i\,\omega^i_t\right)dt-\nu^i_t\left(S^\nu_t+k_i\,\nu^i_t\right)dt\,,\quad X^{i,\nu^i,\omega^i}_0=0\,,
\end{align*}
and her inventory process $(Q^{i,\nu^i,\omega^i}_t)_{t\in\mathcal{T}}$ follows
\begin{align*}
    dQ^{i,\nu^i,\omega^i}_t=\left(\nu^i_t-u^i_t-\omega^i_t\right)dt\,,\quad Q^{i,\nu^i,\omega^i}_0=0\,.
\end{align*}
We aim to find an equilibrium of trading strategies between brokers. Each broker assumes that the others are acting using a Markov strategy, and since each broker observes all uninformed trading, an individual broker is able to deduce the trading speeds of all other brokers in the lit market. Thus, we may assume each broker knows the cash, inventory, and trading speed of other brokers. In this way, the filtration $(\mathcal{F}^i_t)_{t\in\mathcal{T}}$ of broker $i$ is given by 
\begin{align*}
    \mathcal{F}^i_t\coloneqq\sigma\left[\left\{S^\nu_u\right\}_{u\le t}, \left\{\alpha_u\right\}_{u\le t}, \left\{X^{j,\nu^j,\omega^j}_u\right\}_{u\le t}, \left\{Q^{I,\omega}_u\right\}_{u\le t}, \left\{Q^{j,\nu^j,\omega^j}_u\right\}_{u\le t}, \left\{u^j_u\right\}_{u\le t}:j\in\{1,\dots,N\}\right]\,,
\end{align*}
and she works in a completed filtered probability space $(\Omega,\mathcal{F},(\mathcal{F}_t)_{t\in\mathcal{T}},\mathbb{P})$ under which $(S^\nu_t)_{t\in\mathcal{T}}$ and $(\alpha_t)_{t\in\mathcal{T}}$ satisfy \eqref{eqn:dS} and \eqref{eqn:dAlpha}, respectively. The set of admissible strategies is given by
\begin{align*}
    \mathcal{A}^i\coloneqq\left\{\nu^i\coloneqq\left(\nu^i_t\right)_{t\in\mathcal{T}}\big\vert\nu^i\text{ is a Markov control and }\mathbb{E}\left[\int^T_0\left(\nu^i_s\right)^2 ds\right]<\infty\right\}\,.
\end{align*}
The value function of broker $i$ is given by
\begin{align}
    \begin{split}
        &\hspace{2.5mm}H^i\left(t,\alpha,S,q^I, \bs{q}^B, \bs{x}^B, \bs{u}; \nu^{-i}\right)\\
        &=\sup_{\nu^i\in\mathcal{A}^i}\mathbb{E}_{t,\alpha,S,q^I,\bs{q}^B, \bs{x}^B, \bs{u}}\left[X^{i,\nu^i,\omega^i}_T+Q^{i,\nu^i,\omega^i}_T\,S^\nu_T-a_i\left(Q^{i,\nu^i,\omega^i}_T\right)^2-\phi_i\int^T_t\left(Q^{i,\nu^i,\omega^i}_u\right)^2 du\right]\,,
    \end{split}\label{eqn:BrokerValueFunction}
\end{align}
where $\nu^{-i}$ represent a fixed collection of strategies for all brokers other than broker $i$, $a_i\ge0$ is a terminal liquidation parameter, $\phi_i\ge0$ is a running inventory penalty parameter, and $\bs{q}^B$, $\bs{x}^B$, and $\bs{u}$ represent the collection of all $q^j$, $x^j$, and $u^j$ for $j\in\{1,\dots,N\}$, respectively. As the informed trader adopts the strategy of the form given in \eqref{eqn:OptimalInformedTradingSpeed}, his trading speed $\omega^j$ at time $t$ to broker $j$ is a function of $t$, $\alpha_t$, and $Q^I_t$. With a slight abuse of notation we denote this function by $\omega^j(t,\alpha,Q^I)$. The associated HJB equation for broker $i$ is
\begin{align}
    \begin{split}
        &\partial_tH^i - \phi_i (q^i)^2 + \mathcal{L}^\alpha H^i + \mathcal{L}^{u}H^i - \sum_{j=1}^N (u^j + \omega^j)\partial_{q^j}H^i\\
        &+ \alpha \partial_SH^i + \frac{1}{2}\sigma^2\partial^2_{S}H^i + \sum_{j=1}^N \biggl(u^j(S + c_ju^j) + \omega^j(S + \kappa_j\omega^j) \biggr)\partial_{x^j}H^i\\ 
        &+  \sum_{j=1}^N\omega^j\partial_{q^I}H^i  + \sup_{\nu^i}\biggl\{ \sum_{j=1}^N \biggl(\nu^j\partial_{q^j}H^i + b_j\nu^j\partial_SH^i - \nu^j(S+k_j\nu^j)\partial_{x^j}H^i\biggr) \biggr\} = 0\,,\label{eqn:HJB}
    \end{split}
\end{align}
with terminal condition 
$H^i(T,\alpha,S,q^I,\bs{q}^B,\bs{x}^B,\bs{u}; \nu^{-i})=x^i+S\,q^i-a_i(q^i)^2$, 
where the infinitesimal operator $\mathcal{L}^\alpha$ of the alpha component is in \eqref{eqn:InfinitesimalOperatorAlpha} and 
\begin{align*}
    \mathcal{L}^{u} = -\sum_{j=1}^N\theta_j\,u^j\,\partial_{u^j}+\frac{1}{2}\sum_{j=1}^N\sum_{\ell = 1}^N\eta_j\,\eta_\ell\,\rho_{j,\ell}\,\partial^2_{u^j,u^\ell}\,,
\end{align*}
is the infinitesimal generator of the collection of trading rates $\{u^j\}$ for $j=1,\dots,N$ of the uninformed trader. The optimizer within \eqref{eqn:HJB} is given by
\begin{align}
    \nu^{i*}=\frac{\partial_{q^i}H^i-S\,\partial_{x^i}H^i+b_i\,\partial_S H^i}{2\,k_i\,\partial_{x^i}H^i}\,.\label{eqn:nu_i_star}
\end{align}
Given that all brokers are attempting to maximize their performance simultaneously, we substitute expressions analogous to \eqref{eqn:nu_i_star} for all brokers into the HJB equation. That is, for $i\neq j$ we take
\begin{align}
    \nu^{j*}=\frac{\partial_{q^j}H^j-S\,\partial_{x^j}H^j+b_j\,\partial_S H^j}{2\,k_j\,\partial_{x^j}H^j}\,.\label{eqn:nu_j_star}
\end{align}
Substituting \eqref{eqn:nu_i_star} and \eqref{eqn:nu_j_star} into \eqref{eqn:HJB} and making the ansatz
\begin{align}
    H^i(t,\alpha,S,q^I,\bs{q}^B,\bs{x}^B,\bs{u}; \nu^{-i*}) &= x^i + q^i\,S + h^i(t, \alpha, q^I, \bs{q}^B, \bs{u})\label{eqn:BrokerAnsatz1}
\end{align}
results in a system of PDEs for $h^i$, $i\in\{1, \dots, N\}$ given by
\begin{align}
    \begin{split}
    \partial_t h^i + \alpha q^i - \phi_i(q^i)^2 - \sum_{j=1}^N (u^j + \omega^j )\partial_{q^j}h^i + \kappa_i(\omega^i)^2 + c_i(u^i)^2 + \mathcal{L}^\alpha h^i + \sum_{j=1}^N\mathcal{L}^{u^j}h^i \hspace{10mm}\\
    + \sum_{j=1}^N \omega^j \partial_{q^I}h^i + \sum_{j=1}^N \frac{1}{2k_j}(\partial_{q^j}h^j + b_j q^j)(\partial_{q^j}h^i + b_j q^i) -\frac{1}{4k_i}(\partial_{q^i}h^i + b_i q^i)^2 = 0\,,
    \end{split}\label{eqn:HJB_h}
\end{align}
with terminal condition $h^i(T, \alpha, q^I, \bs{q}^B, \bs{u}) = -a_i(q^i)^2$.

\begin{proposition}\label{prop:Broker}
    Define
    \begin{align*}
        \delta_{j,k} &= \left\{\begin{array}{rr} 1\,, & j=k\\ 0\,, &j\neq k\end{array}\right.\,, & \delta_{j,k,\ell} &= \left\{\begin{array}{rr} 1\,, & j=k=\ell \\ 0\,, & \mbox{otherwise}\end{array}\right.\,,
    \end{align*}
    and suppose that for all $i,r,s\in\{1,\dots,N\}$ and $t\in[0,T]$ the functions $f^i$, $g^i$, $m^i$, $n^i_{r,s}$, $p^i_{r,s}$, $d^i_{r}$, $v^i$, $w^i_{r}$, $x^i_{r}$, $y^i_{r,s}$, and $z^i_{r}$ satisfy the system of ODEs
    {\footnotesize
    \begin{align}\label{eqn:big_ODE}
        \begin{split}
            \partial_tf^i + \eta^2\, m^i + \frac{1}{2}\sum_{j=1}^N \sum_{\ell=1}^N\rho_{j,\ell}\,\eta_j\,\eta_\ell\, P_{j,\ell}^i &= 0\,,\\
            \partial_tg^i + \kappa_i(\omega_1^i)^2 - \frac{1}{4k_i}(d_i^i)^2 + \sum_{j=1}^N\biggl( 2\omega_1^jg^i + \frac{1}{2k_j}d_j^id_j^j - \omega_1^jd_j^i\biggr) &= 0\,,\\
            \partial_t m^i + \kappa_i(\omega_0^i)^2 - 2\theta m^i - \frac{1}{4k_i}(x_i^i)^2 + \sum_{j=1}^N \biggl( \omega_0^jv^i + \frac{1}{2k_j}x_j^i x_j^j - \omega_0^j x_j^i\biggr) &= 0\,,\\
            \partial_t n_{r,s}^i - \biggl(\phi_i + \frac{b_i^2}{4k_i}\biggr)\delta_{r,s,i} + \biggl(\frac{b_r^2}{2k_r} - \frac{b_i}{2k_i}N_{r,i}^i + \sum_{j=1}^N \frac{b_j}{2k_j}N_{r,j}^j\biggr)\delta_{s,i}\hspace{20mm}\\
            + \frac{b_s}{2k_s}N_{r,s}^i - \frac{1}{4k_i}N_{r,i}^iN_{s,i}^i + \sum_{j=1}^N \frac{1}{2k_j}N_{r,j}^iN_{s,j}^j &= 0\,,\\
            \partial_tp_{r,s}^i + c_i\delta_{r,s,i} - \theta_sP_{r,s}^i - y_{s,r}^i - \frac{1}{4k_i}y_{i,r}^iy_{i,s}^i + \sum_{j=1}^N\frac{1}{2k_j}y_{r,j}^iy_{j,s}^j &= 0\,,\\
            \partial_t d_r^i + \biggl(\sum_j \frac{b_j}{2k_j}d_j^j- \frac{b_i}{k_i}d_i^i\biggr)\delta_{r,i} + \frac{b_r}{2k_r}d_r^i - \frac{1}{2k_i}N_{r,i}^i d_i^i\hspace{20mm}\\
            + \sum_j \biggl(\omega_1^jd_r^i + \frac{1}{2k_j}N_{r,j}^id_j^j + \frac{1}{2k_j}d_j^iN_{r,j}^j - \omega_1^jN_{r,j}^i\biggr) &= 0\,,\\
             \partial_t v^i + 2\kappa_i \omega_0^i\omega_1^i - \theta v^i + \omega_1^jv^i + \frac{1}{2k_j} d_j^ix_j^j + \frac{1}{2k_j}d_j^jx_j^i\hspace{20mm}\\
            - \frac{1}{2k_i}d_i^ix_i^i + \sum_j \biggl(2\omega_0^jg^i - \omega_0^j d_j^i -\omega_1^jx_j^i\biggr) &= 0\,,\\
            \partial_t w_r^i - d_r^i - \frac{1}{2k_i}d_i^iy_{i,r}^i - \theta_r w_r^i + \sum_j \biggl(\omega_1^jd_r^i + \frac{1}{2k_j}d_j^i y_{j,r}^i + \frac{1}{2k_j}d_j^jy_{j,r}^i - \omega_1^j y_{j,r}^i\biggr) &= 0\,,\\
            \partial_t x_r^i + \biggl(1 - \frac{b_i}{2k_i}x_i^i + \sum_j \frac{b_j}{2k_j} x_j^j\biggr)\delta_{i,r} - \theta x_r^i + \frac{b_r}{2k_r}x_r^i - \frac{1}{2k_i}N_{r,i}^ix_i^i\hspace{20mm}\\
            + \sum_j \biggl(\omega_0^jd_r^i - \omega_0^jN_{r,j}^i + \frac{1}{2k_j}N_{r,j}^i x_j^j + \frac{1}{2k_j}N_{r,j}^jx_j^i\biggr) &= 0\,,\\
            \partial_t y_{r,s}^i - \frac{b_i}{2k_i}y_{i,s}^i \delta_{r,i} - N_{r,s}^i - \theta_s y_{r,s}^i - \frac{1}{2k_i}N_{r,i}^iy_{i,s}^i + \frac{b_r}{2k_r}y_{r,s}^i\hspace{20mm}\\
            + \sum_j \biggl(\frac{b_j}{2k^j}y_{j,s}^j\delta_{r,i} + \frac{1}{2k_j}N_{r,j}^iy_{j,s}^j + \frac{1}{2k_j}N_{r,j}^jy_{j,s}^i\biggr) &= 0\,,\\
            \partial_t z_r^i - x_r^i - \theta z_r^i - \theta_rz_r^i - \frac{1}{2k_i}x_i^iy_{i,r}^i + \sum_j \biggl(\omega_0^jw_r^i - \omega_0^jy_{j,r}^i + \frac{1}{2k_j}x_j^iy_{j,r}^j + \frac{1}{2k_j}y_{j,r}^ix_j^j\biggr) &= 0\,,
        \end{split}
    \end{align}
    }
    where $\omega_0$ and $\omega_1$ are as in \eqref{eqn:FunctionG}, $N^i_{r,s} = n^i_{r,s} + n^i_{s,r}$, $P^i_{r,s} = p^i_{r,s} + p^i_{s,r}$, and where all terminal conditions are $0$ except $n_{r,s}^i(T) = -a_i\delta_{r,s,i}$. Then the system of PDEs in \eqref{eqn:HJB_h} has solution given by
    \begin{align}
        \begin{split}
            h^i(t, \alpha, q^I, \bs{q}^B, \bs{u})&= f^i(t) + g^i(t)(q^I)^2 + m^i(t) \alpha^2 + \sum_{r=1}^N\sum_{s=1}^Nn_{r,s}^i(t)q^rq^s\\
            &\hspace{5mm} + \sum_{r=1}^N\sum_{s=1}^Np_{r,s}^i(t)u^ru^s + \sum_{r=1}^Nd_r^i(t)q^Iq^r+ v^i(t)q^I\alpha + \sum_{r=1}^Nw_r^i(t)q^Iu^r\\
            &\hspace{10mm} + \sum_{r=1}^N x_r^i(t)q^i\alpha + \sum_{r=1}^N\sum_{s=1}^N y_{r,s}^i(t)q^ru^s + \sum_{r=1}^Nz_r^i(t)\alpha u^r\,.
        \end{split}
        \label{eqn:BrokerAnsatz2}
    \end{align}
    
\end{proposition}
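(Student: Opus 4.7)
The plan is to verify the ansatz \eqref{eqn:BrokerAnsatz2} by direct substitution into the coupled PDE system \eqref{eqn:HJB_h} and matching coefficients of monomials in the state variables $(\alpha, q^I, \bs{q}^B, \bs{u})$. Because the ansatz is a quadratic polynomial in these variables and the informed trader's feedback strategy from \eqref{eqn:OptimalInformedTradingSpeed} is affine in $(\alpha, q^I)$, the PDE \eqref{eqn:HJB_h} reduces to a polynomial identity in the states whose coefficients are ODEs in $t$ for the scalar- and tensor-valued coefficient functions listed in \eqref{eqn:big_ODE}.

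First I would compute the partial derivatives $\partial_\alpha h^i$, $\partial_{q^I}h^i$, $\partial_{q^r}h^i$, $\partial_{u^r}h^i$ of the ansatz (each affine in the state variables), and evaluate $\mathcal{L}^\alpha h^i + \sum_j \mathcal{L}^{u^j}h^i$. I would then insert these expressions, together with $\omega^j = \omega_0^j(t)\alpha + \omega_1^j(t)q^I$, into every term of \eqref{eqn:HJB_h}: $\alpha q^i$, $-\phi_i(q^i)^2$, $-\sum_j(u^j + \omega^j)\partial_{q^j}h^i$, $\kappa_i(\omega^i)^2$, $c_i(u^i)^2$, $\sum_j \omega^j\partial_{q^I}h^i$, and most importantly the two feedback contributions $\sum_j \frac{1}{2k_j}(\partial_{q^j}h^j + b_j q^j)(\partial_{q^j}h^i + b_j q^i)$ and $-\frac{1}{4k_i}(\partial_{q^i}h^i + b_i q^i)^2$. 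After expansion, the left-hand side becomes a polynomial in $(\alpha, q^I, \bs{q}^B, \bs{u})$; equating the coefficient of each distinct monomial to zero yields one of the eleven ODEs in \eqref{eqn:big_ODE}. The monomial $1$ gives the equation for $f^i$, $(q^I)^2$ gives $g^i$, $\alpha^2$ gives $m^i$, $q^r q^s$ gives $n^i_{r,s}$ via the symmetrized object $N^i_{r,s}$, $u^r u^s$ gives $p^i_{r,s}$ via $P^i_{r,s}$, and the remaining six bilinear monomials yield the ODEs for $d^i_r$, $v^i$, $w^i_r$, $x^i_r$, $y^i_{r,s}$, and $z^i_r$. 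Terminal conditions are read off from $h^i(T,\cdot) = -a_i(q^i)^2$, forcing $n^i_{r,s}(T) = -a_i\delta_{r,s,i}$ and all other coefficient functions to vanish at $T$.

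The main obstacle is the careful bookkeeping of the many bilinear cross-terms produced by the feedback contribution $\sum_j \frac{1}{2k_j}(\partial_{q^j}h^j + b_jq^j)(\partial_{q^j}h^i + b_jq^i)$. Each factor $\partial_{q^j}h^i$ is itself a linear combination of $q^I$, every $q^r$, every $u^r$, and $\alpha$, with coefficients drawn from $d^i_j$, $N^i_{r,j}$, $y^i_{j,r}$, and (the counterpart to) $x^i_j$, so the product generates numerous bilinear monomials that must be attributed to the correct ODE. The $b_j q^j$ and $b_j q^i$ pieces introduce Kronecker-delta factors ($\delta_{s,i}$, $\delta_{r,i}$, $\delta_{r,s,i}$) according to whether the summation index $j$ coincides with the free index $i$ or with the dummy indices $r,s$ of the monomial being analysed; the correction $-\frac{1}{4k_i}(\partial_{q^i}h^i + b_i q^i)^2$ must then be layered on top to undo the double-counting of $j=i$ in the quadratic best-response. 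Because the ansatz contains unsymmetrized coefficients $n^i_{r,s}$ and $p^i_{r,s}$, the matching of coefficients of $q^r q^s$ and $u^r u^s$ is only well-defined after symmetrization, which is precisely why the system \eqref{eqn:big_ODE} is written in terms of $N^i_{r,s} = n^i_{r,s} + n^i_{s,r}$ and $P^i_{r,s} = p^i_{r,s} + p^i_{s,r}$. Once these index manipulations are carried through, the ansatz satisfies \eqref{eqn:HJB_h} by construction, completing the verification.
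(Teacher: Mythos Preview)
Your proposal is correct and follows exactly the approach of the paper, which simply states that the result is checked by direct substitution of the ansatz into \eqref{eqn:HJB_h} and grouping terms by their dependence on the state variables $\alpha$, $q^I$, $\bs{q}^B$, and $\bs{u}$. Your write-up is considerably more detailed than the paper's one-line proof, but the method is identical.
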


\begin{proof}
    This is checked by direct substitution of \eqref{eqn:BrokerAnsatz2} into \eqref{eqn:HJB_h} and grouping terms by their dependence on the state variables $\alpha$, $q^I$, $\bs{q}^B$, and $\bs u$. \qed
\end{proof}
\begin{theorem}[Verification Theorem for broker's optimization] Let $H^i$ be given by \eqref{eqn:BrokerAnsatz1} where $h^i$ is as in Proposition \ref{prop:Broker}. Then $H^i$ coincides with the value function defined in equation \ref{eqn:BrokerValueFunction}, and the optimal feedback control for broker $i$ is given by
    \begin{align}
        \begin{split}
            \nu^{i*}(t, \alpha, q^I, \bs{q}^B, \bs{u}) &= \frac{x_i^i}{2k_i}\alpha + \frac{b_i}{2k_i}q^i + \frac{d_i^i}{2k_i}q^I + \sum_{j=1}^N \frac{n_{j,i}^i + n_{i,j}^i}{2k_i}q^j + \sum_{j=1}^N\frac{y_{i,j}^i}{2k_i}u^j\,.
        \end{split}
        \label{eqn:OptimalBrokerTradingSpeed}
    \end{align}
\end{theorem}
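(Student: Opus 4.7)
The plan is to adapt the verification argument of Theorem \ref{the:VerificationForInformedTrader}, omitting its $\inf_{\mathbb{Q}^I}$ layer since the broker's problem \eqref{eqn:BrokerValueFunction} has no model-ambiguity component. Define the candidate value function $\hat{H}^i$ via the ansatz \eqref{eqn:BrokerAnsatz1} with $h^i$ as in Proposition \ref{prop:Broker}, and check by direct substitution that the feedback $\nu^{i*}$ displayed in \eqref{eqn:OptimalBrokerTradingSpeed} is precisely the pointwise maximizer \eqref{eqn:nu_i_star} evaluated at this ansatz. Under the Nash structure, brokers $j\neq i$ use their equilibrium feedbacks $\nu^{j*}$, each affine in the state, and the informed trader uses \eqref{eqn:OptimalInformedTradingSpeed}, also affine. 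For any admissible $\nu^i\in\mathcal{A}^i$, the state vector $(\alpha_t, S_t, Q^I_t, \bs{q}^B_t, \bs{u}_t)$ then satisfies a linear SDE system with continuous, bounded time-dependent coefficients, so Theorem 5.2.1 of \cite{oksendal2003stochastic} yields a unique strong solution with uniformly bounded second moments on $[0,T]$; the cash coordinates $\bs{x}^B_t$ follow pathwise via Picard--Lindel\"of.

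Next, I would apply It\^o's formula to $\hat{H}^i$ on $[t, T]$. Because $\hat{H}^i$ satisfies the HJB equation \eqref{eqn:HJB} in which the $\sup_{\nu^i}$ is attained at \eqref{eqn:nu_i_star}, the pointwise inequality
\begin{equation*}
\partial_t \hat{H}^i + \mathcal{L}^{(\nu^i,\nu^{-i*})}\hat{H}^i - \phi_i (q^i)^2 \le 0
\end{equation*}
holds for every admissible $\nu^i$, with equality when $\nu^i = \nu^{i*}$. Here $\mathcal{L}^{(\nu^i,\nu^{-i*})}$ denotes the full infinitesimal generator of the state given broker $i$'s control $\nu^i$ and the fixed feedbacks $\nu^{-i*}$. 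Taking conditional expectation (justified by the martingale property below) and using the terminal condition $\hat{H}^i(T,\cdot) = X^i_T + Q^i_T\,S_T - a_i(Q^i_T)^2$ gives
\begin{equation*}
\hat{H}^i(t,\cdot) \ge \mathbb{E}_t\left[X^i_T + Q^i_T\,S_T - a_i(Q^i_T)^2 - \phi_i \int_t^T (Q^i_u)^2\,du\right]
\end{equation*}
for any $\nu^i\in\mathcal{A}^i$, with equality under $\nu^{i*}$. Supping the right-hand side over $\nu^i$ then identifies $\hat{H}^i = H^i$ and $\nu^{i*}$ as an optimal Markovian control.

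The main technical obstacle is admissibility of $\nu^{i*}$ and the martingale property of the It\^o stochastic integrals produced in the step above. The feedback \eqref{eqn:OptimalBrokerTradingSpeed} is affine in $(\alpha_t,Q^I_t,\bs{q}^B_t,\bs{u}_t)$ with coefficients determined by the ODE solutions of \eqref{eqn:big_ODE}, which are taken to exist and hence be bounded on $[0, T]$; combined with uniform $L^2$-bounds on the state vector coming from the linear Gaussian structure, this yields $\mathbb{E}\int_0^T(\nu^{i*}_s)^2\,ds<\infty$, so $\nu^{i*}\in\mathcal{A}^i$. The same moment bounds, together with the fact that $\hat{H}^i$ is quadratic in the state (so that $\partial_\alpha\hat{H}^i$, $\partial_S\hat{H}^i$, and $\partial_{u^j}\hat{H}^i$ are linear in the state), ensure the It\^o integrands are in $L^2(\Omega\times[0,T])$, making the corresponding stochastic integrals true square-integrable martingales so that the conditional-expectation step is rigorous. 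A subtle internal-consistency point is that the Nash argument implicitly assumes every $\nu^{j*}$ is itself admissible and generates a well-posed state system; this holds symmetrically for each broker by the same linear-Gaussian reasoning, provided the coupled ODE system \eqref{eqn:big_ODE} admits a solution on $[0,T]$, which is a standing hypothesis of Proposition \ref{prop:Broker}.
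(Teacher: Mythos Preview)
Your proposal is correct and follows essentially the same verification-theorem approach as the paper: the paper's proof simply checks the regularity and growth conditions on $\hat H^i$, the measurability and maximizing property of $\nu^{i*}$, and the well-posedness of the controlled state SDEs, then invokes Theorem~3.5.2 of \cite{pham2009continuous}, whereas you unpack that invocation by writing out the It\^o-based argument directly (adapted from Theorem~\ref{the:VerificationForInformedTrader}) and supplying the $L^2$/martingale justifications explicitly.
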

\begin{proof}
    The function $H^i$ in \eqref{eqn:BrokerAnsatz1} is continuous on $\mathcal{T}$ with continuous first and second partial derivatives with respect to spacial variables, and additionally satisfies a quadratic growth condition. The function $\nu^{i*}$ in \eqref{eqn:OptimalBrokerTradingSpeed} is measurable and achieves the maximum in \eqref{eqn:HJB}. Under these controls the SDE's for $\alpha_t$, $S^{\nu^*}_t$, $Q^{I,\omega^*}_t$, $Q^{i,\nu^{i*},\omega^{i*}}_t$, $X^{i,\nu^{i*},\omega^{i*}}_t$, $u^i_t$ admit unique solutions. We then apply Theorem 3.5.2 of \cite{pham2009continuous} and conclude that $H^i$ is the value function and $\nu^{i*}$ is optimal. \qed
\end{proof}
\begin{remark}\label{rem:CoefficientsFunctionBroker}
    As $t\to T$, the coefficient functions $x^i_{i}$, $d^i_{i}$, $y^i_{i,j}$ and the cross-inventory terms $n^i_{j,i}$ and $n^i_{i,j}$ for 
    $j\neq i$ vanish, whereas the own-inventory coefficient $n^i_{i,i}$ does not. In line with the informed trader's behavior described in remark \ref{rem:FunctionG}, brokers therefore prioritize inventory liquidation over alpha exploitation near maturity. Moreover, because the coefficients multiplying $q^j(j\neq i)$ and $u^j$ go to zero, the influence of other brokers and of the uninformed trading flow on broker $i$'s control also becomes negligible as time approaches maturity.
\end{remark}

\section{Numerical Experiments}\label{sec:numerical}

In this section we showcase the performance of the strategies\footnote{To compute the trading strategies of the brokers, we solve the system of ODEs in Proposition \ref{prop:Broker} numerically using an explicit finite difference method with $5000$ time steps.} in the case of there being two brokers. The brokers are symmetric except for the transaction costs charged to the informed trader. We choose $\kappa_2=2\cdot \kappa_1$, i.e. the informed trader needs to pay a higher liquidity cost when trading with the second broker. Additionally, in order to focus on relationships between certain processes, we reduce the influence of the uninformed trading speeds by setting them equal ($u_t:=u^1_t=u^2_t$).\footnote{This also necessitates $\theta_1=\theta_2$, $\eta_1=\eta_2$, and $\rho_{1,2} = \rho_{2,1} = 1$.}

\begin{figure}
    \centering
    \includegraphics[width=0.32\textwidth]{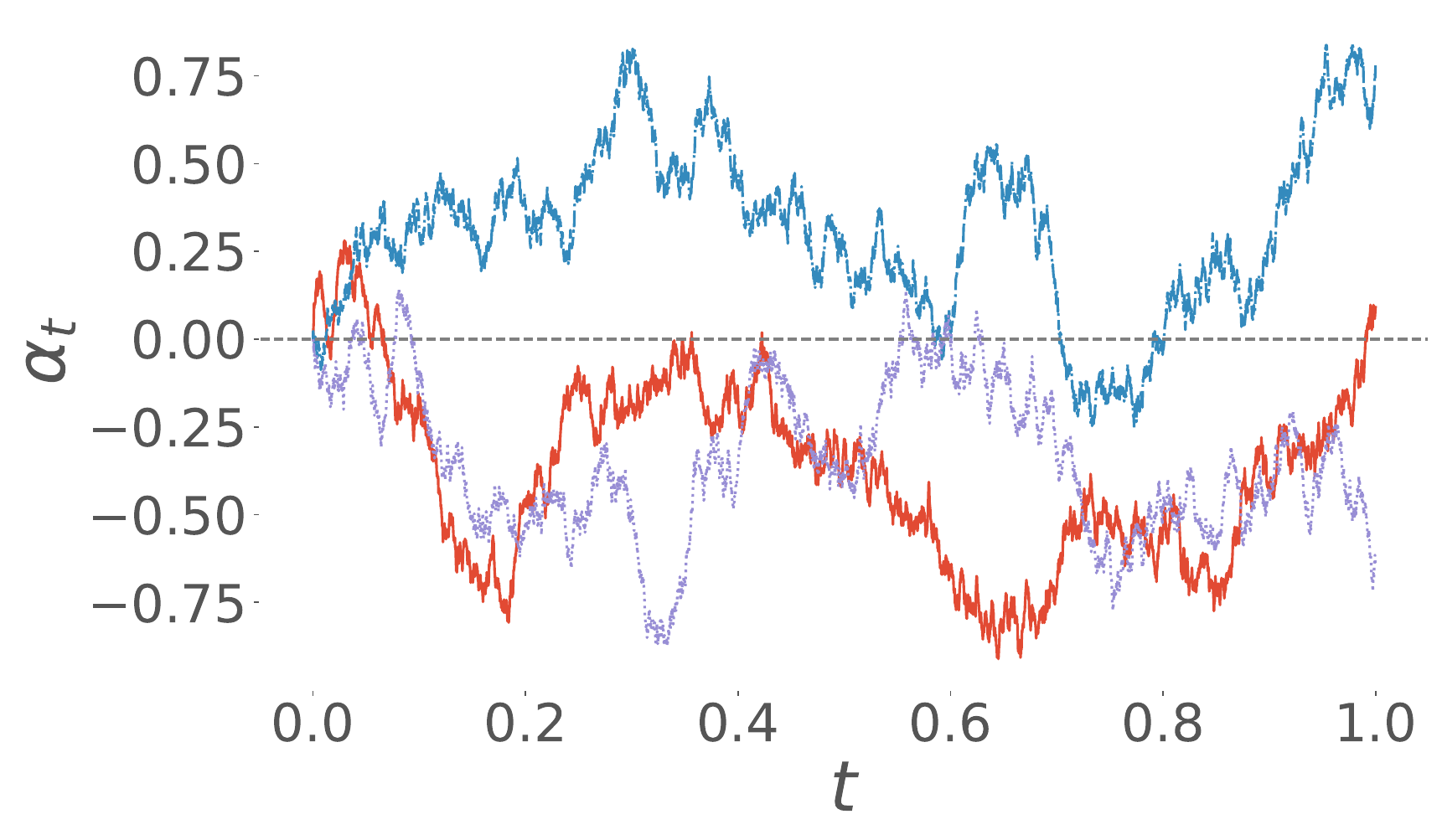}
    \includegraphics[width=0.32\textwidth]{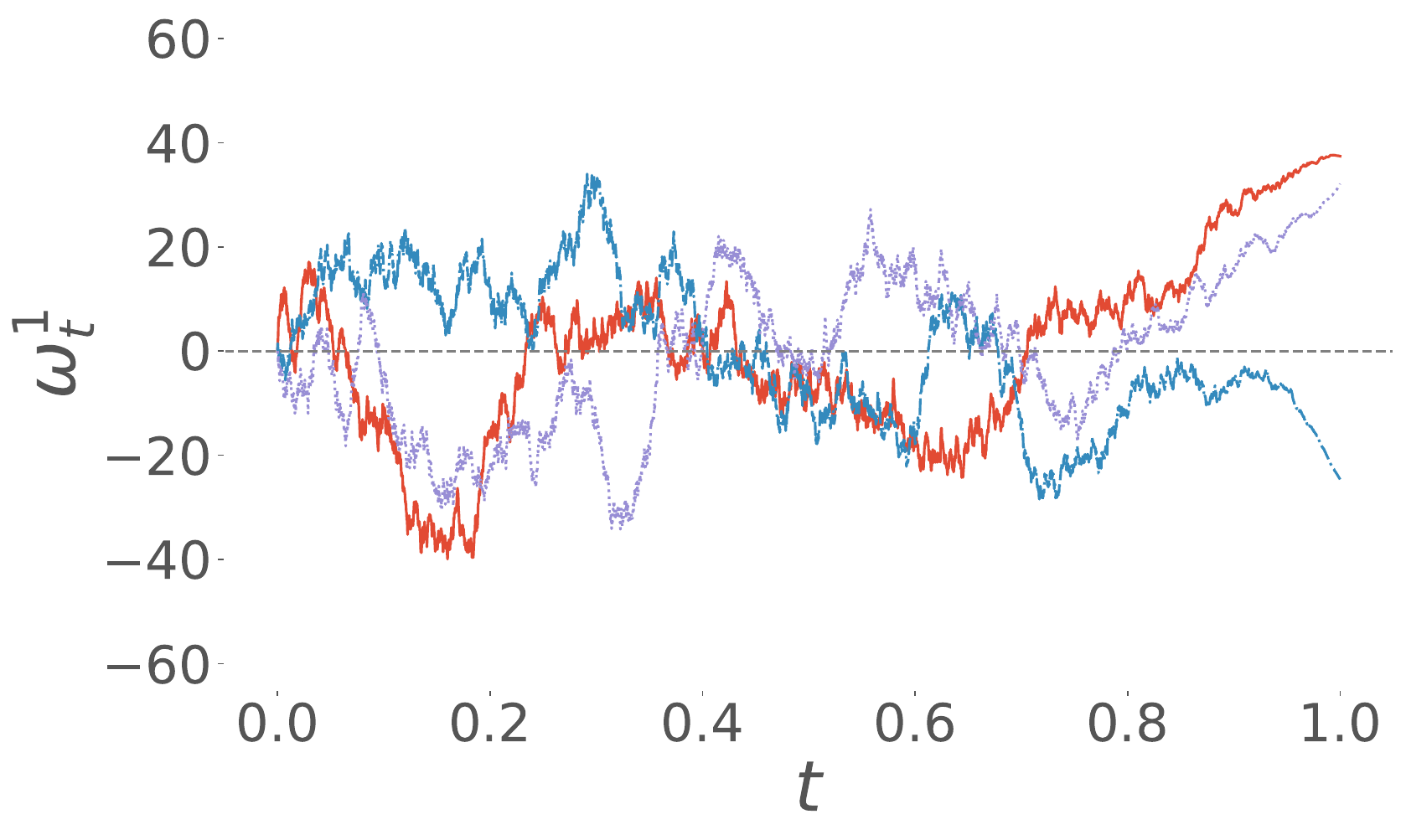}
    \includegraphics[width=0.32\textwidth]{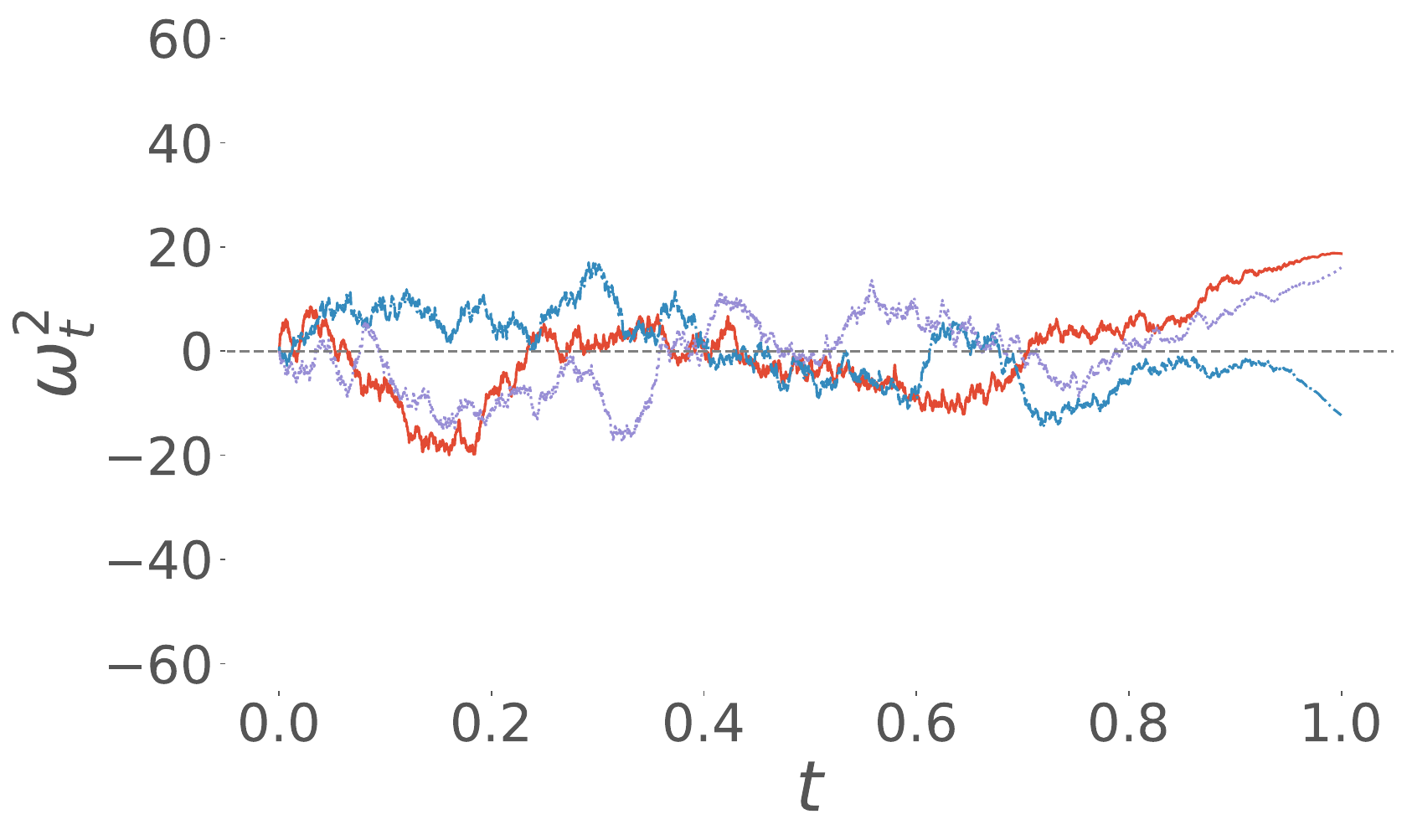}
    \includegraphics[width=0.32\textwidth]{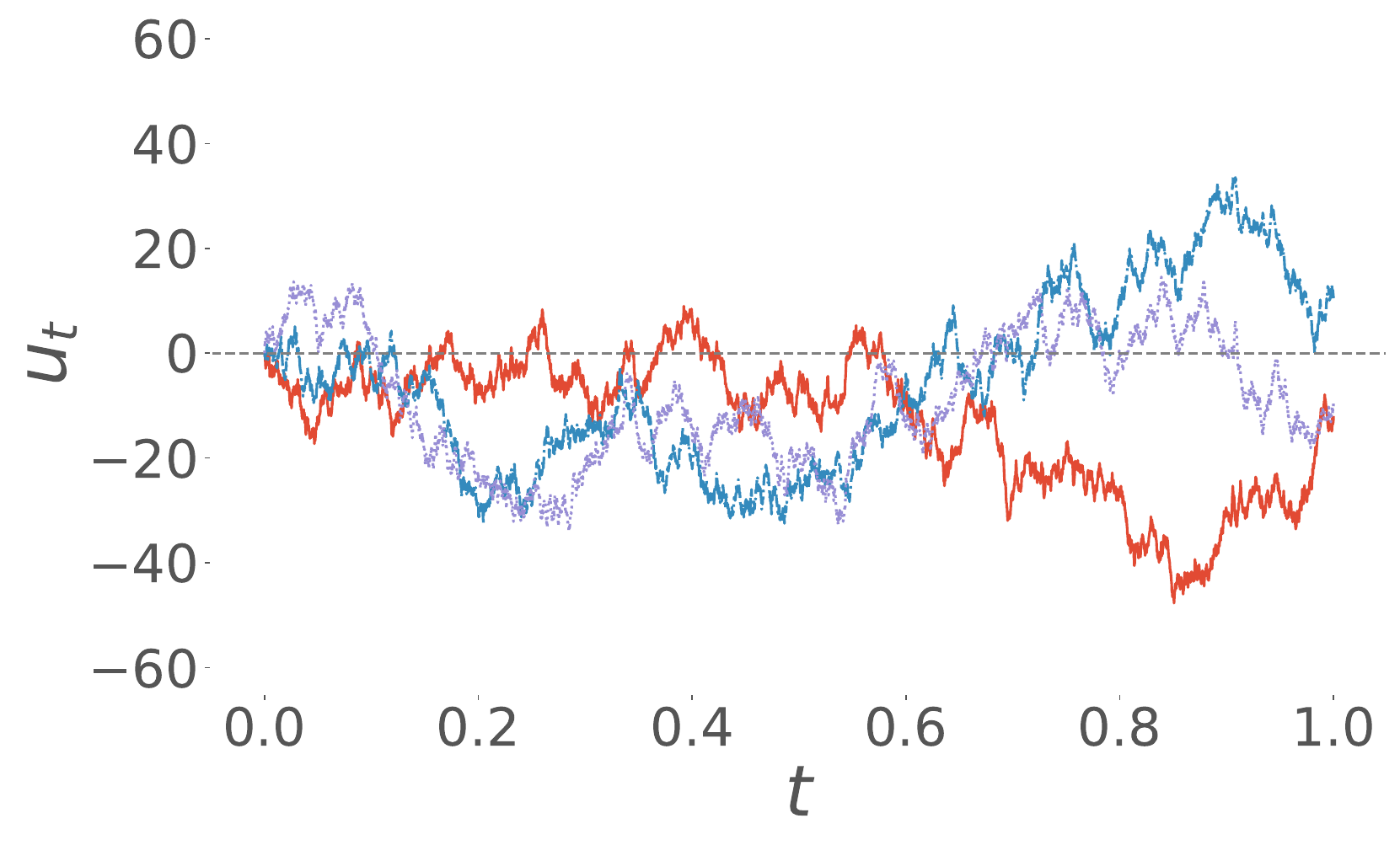}
    \includegraphics[width=0.32\textwidth]{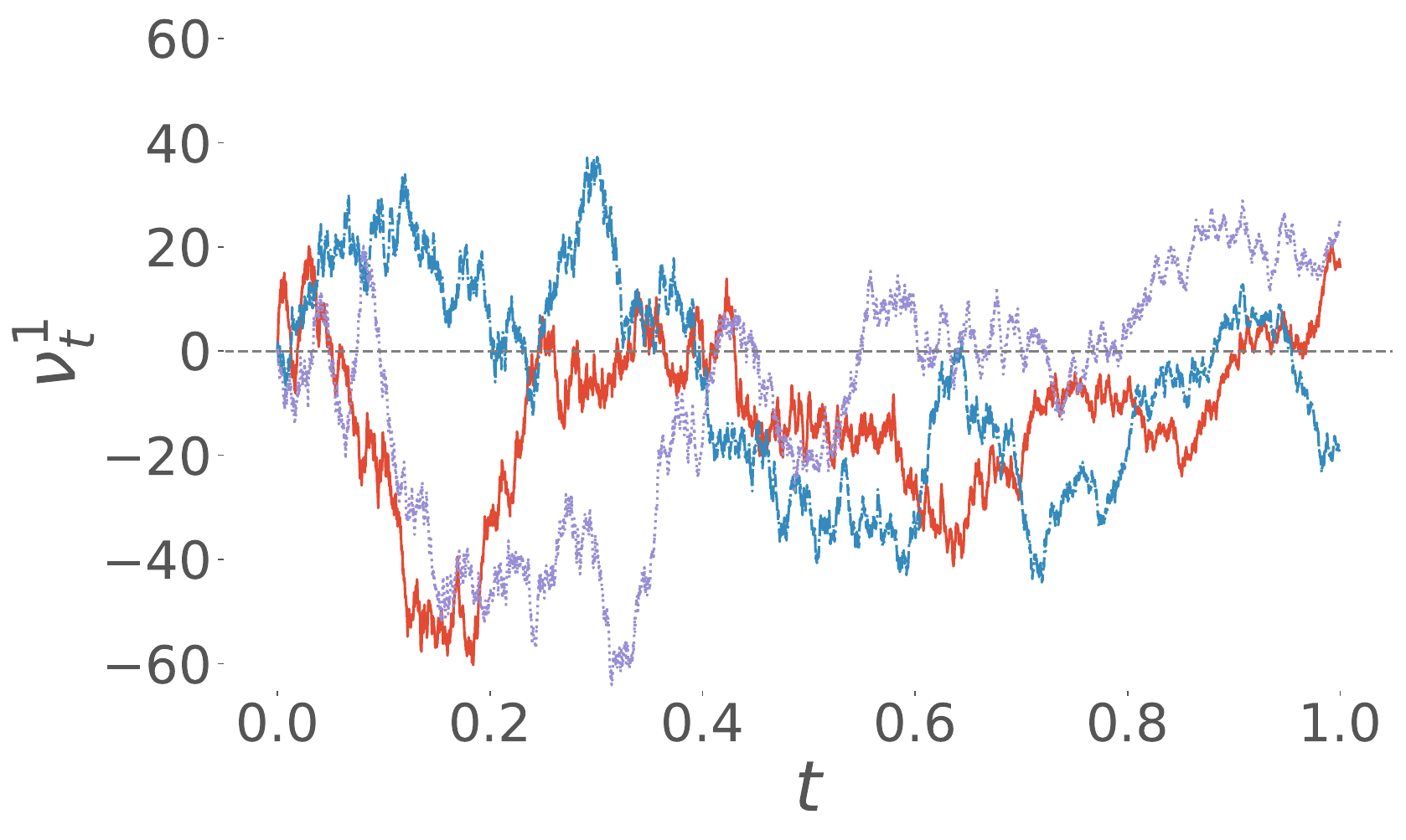}
    \includegraphics[width=0.32\textwidth]{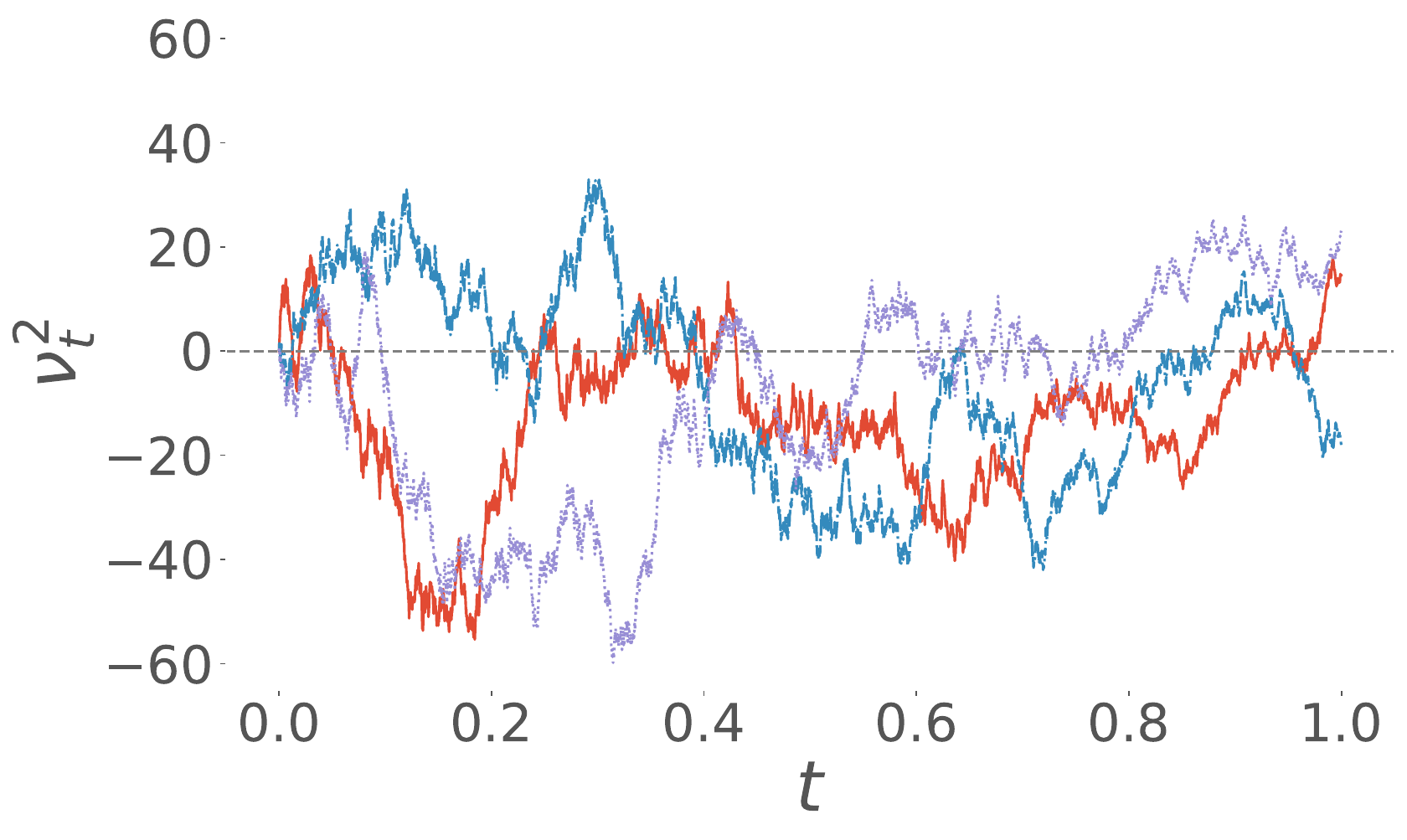}
    \includegraphics[width=0.32\textwidth]{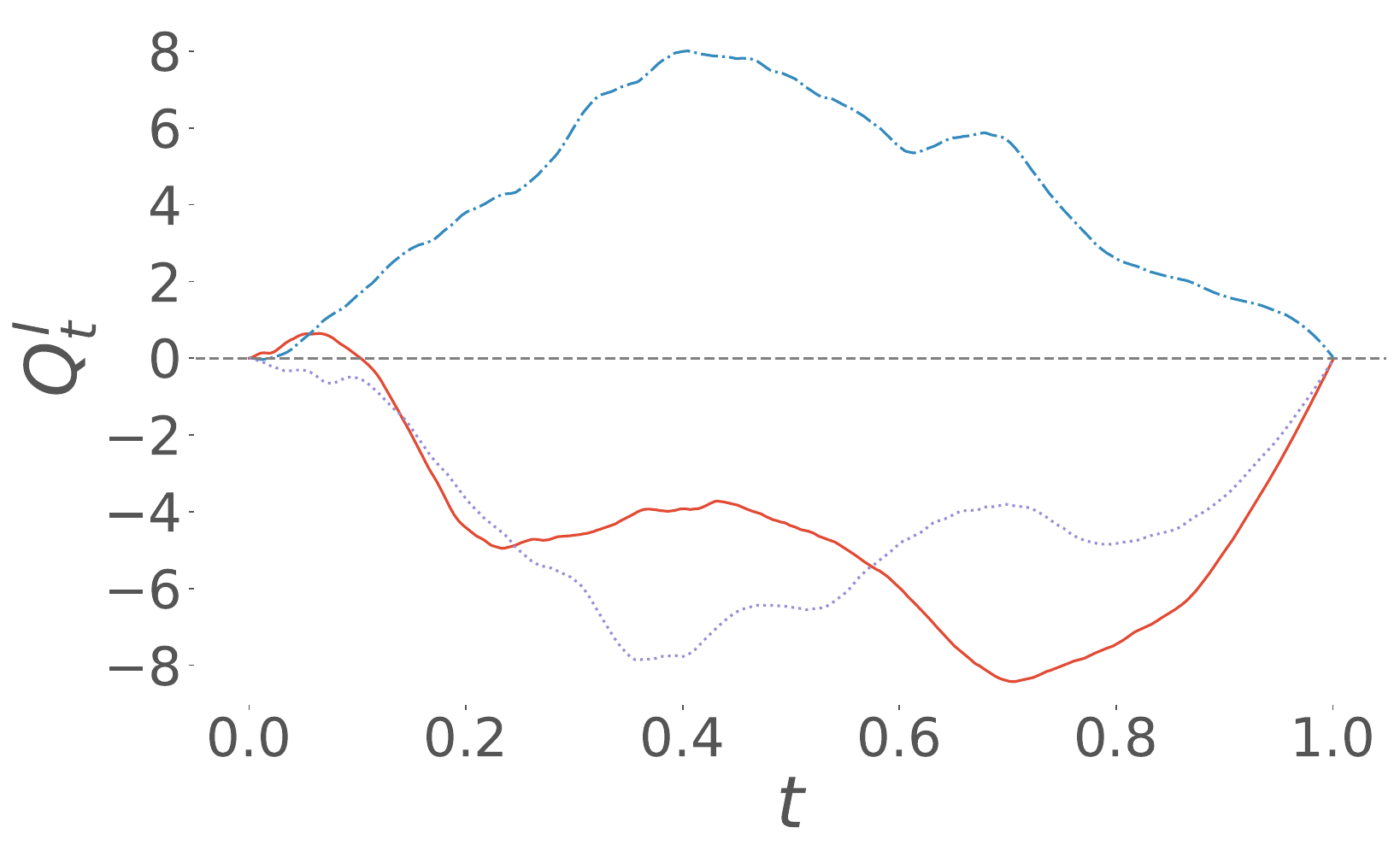}
    \includegraphics[width=0.32\textwidth]{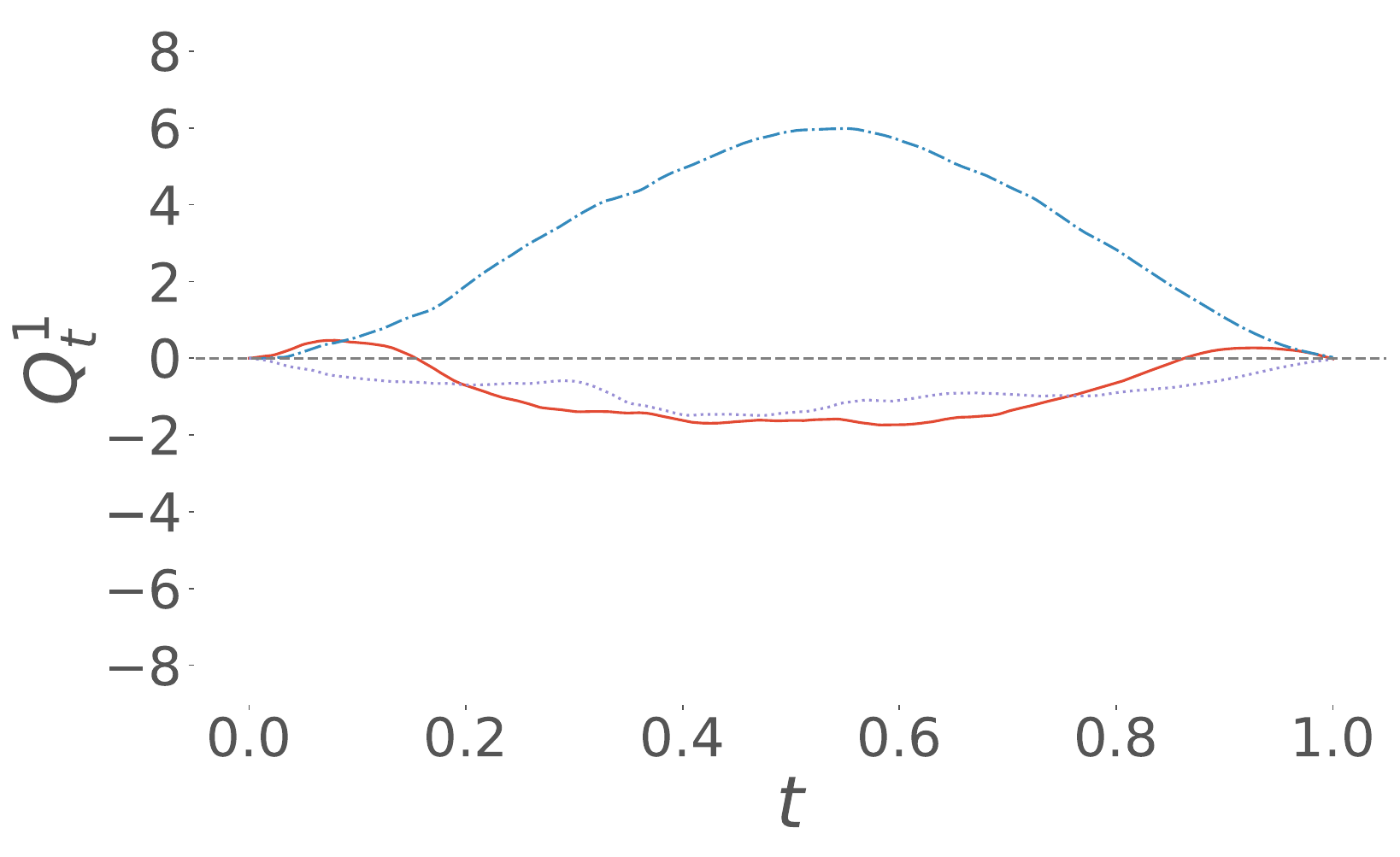}
    \includegraphics[width=0.32\textwidth]{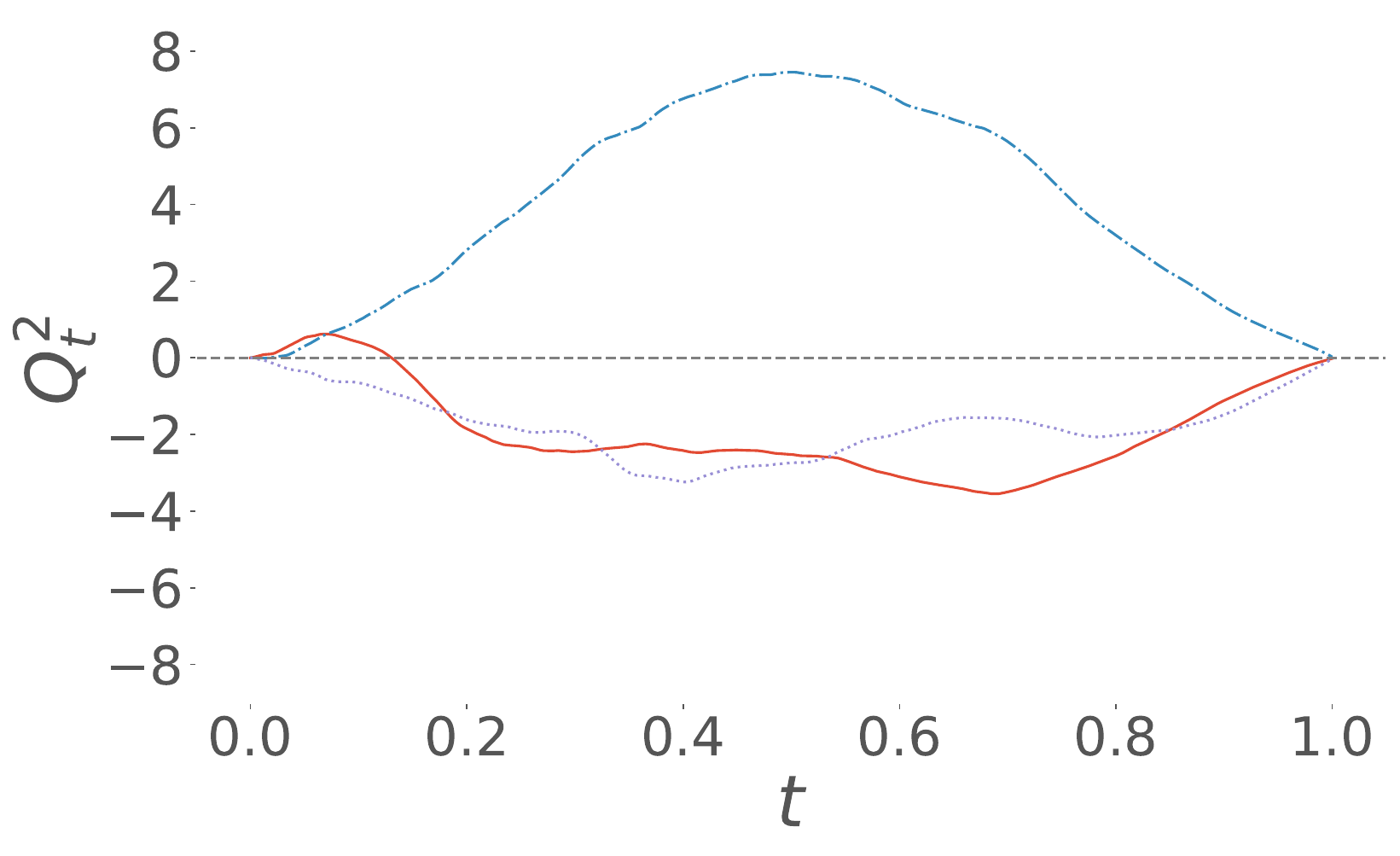}
    \caption{Sample paths for alpha $\alpha_t$, noise trading speed $u_t$, informed trading speed $\omega^{1,2}_t$, brokers' trading speed $\nu^{1,2}_t$, informed inventory $Q^I_t$, and brokers' inventory $Q^{1,2}_t$. Relevant parameter values are $S_0=100$, $\alpha_0 = 0$, $\theta = 3$, $\eta = 1$, $\sigma = 1$, $\kappa_1 = 0.001$, $\kappa_2 = 0.002$, $c_1 = c_2 = 0.001$, $k_1 = k_2 = 0.0012$, $b_1 = b_2 = 0.001$, $\phi_1 = \phi_2 = \phi_I = \psi_I = 0.01$, $a_1 = a_2 = a_I = 1$, $u^{1}_0 = u^{2}_0 = 0$, $\theta_1 = \theta_2 = 5$, $\eta_1 = \eta_2 = 50$, $Q^I_0 = Q^{1}_0 = Q^{2}_0 = 0$.}
    \label{fig:G-SamplePath}
\end{figure}

Figure \ref{fig:G-SamplePath} shows three sample paths for many processes relevant to the equilibrium. We observe that the informed trader's trading speed $\omega^2_t$ with broker 2 is half of $\omega^1_t$ with broker 1, which is a numerical demonstration of the result in \eqref{eqn:proportional_trades}. Perhaps surprisingly, broker 1's trading speed $\nu^1_t$ is not significantly larger in magnitude than broker 2's trading speed $\nu^2_t$. However, broker 2 tends to have larger magnitude of inventory compared to broker 1 because she trades less with the informed trader, and so the net trading of broker 2 is generally larger in magnitude than that of broker 1. During the first half of the time horizon, the informed trader and the brokers exhibit similar behaviour in their trading rates as each of them try to take advantage of the alpha component. However, in the later stages of the trading period, the brokers tend to exhibit trading speeds which are quite different from the informed trader. This is because they are also subject to the trades of the uninformed trader and must manage the terminal liquidation penalty posed by that order flow at the cost of not being able to fully take advantage of the alpha component.

\subsection{Robustness Analysis}

In the setting of the previous section, brokers are assumed to have full knowledge of the informed trader's model and parameter values. In particular, the feedback strategy of the brokers depends on the inventory on the informed trader. Given the relation in \eqref{eqn:total_speed} and subsequent discussion, this knowledge depends on being able to accurately estimate the informed trader's initial inventory $Q_0^I$. Note that in Figure \ref{fig:G-SamplePath} the informed trader's inventory path ends very close to zero in each sample, and if this style of dynamic game is to be repeated in a subsequent trading interval then an initial inventory estimate of $Q_0^I=0$ is reasonable. However, the brokers should be concerned with potential risks associated with the situation that the informed trader begins with non-zero inventory. In this section, we perform a robustness analysis which quantifies the loss with respect to both the performance criterion and expected P\&L due to suboptimal trading that arises from misspecification of initial inventory.

We compute these losses by performing a Monte Carlo simulation with the following methodology. Let $Q_0^I$ be the initial inventory of the informed trader who trades according to processes $\omega^{j*}$ in \eqref{eqn:OptimalInformedTradingSpeed}. Thus, the informed trader's inventory at all times is equal to
\begin{align*}
    Q_t^{I,\omega^*} &= Q_0^I + \int_0^t \omega_u^*\,du\,,\\
    \omega_t^* &= \sum_{j=1}^N\omega_t^{j*}\,,
\end{align*}
identical to the dynamic equilibrium demonstrated earlier. All brokers have an estimate\footnote{Each broker could have their own different estimate, but subsequently we will take all of the initial estimates to be zero and avoid the case of individual values.} of the informed trader's initial inventory denoted $\widehat{Q}_0^I$ and use this to construct an estimate of the full path by
\begin{align*}
    \widehat{Q}_t^{I,\omega^*} &= \widehat{Q}_0^I + \int_0^t \omega_u^*\,du\,.
\end{align*}
Estimating the inventory means that brokers also employ an estimate of the signal $\alpha$, which is denoted $\widehat{\alpha}$ and computed by an appropriately modified \eqref{eqn:AlphaExtraction}
\begin{align*}
    \widehat{\alpha}_t = \frac{\omega^{i*}_t-\omega^i_1(t)\,\widehat{Q}^{I,\omega^*}_t}{\omega^i_0(t)}\,.
\end{align*}
Using the estimates of inventory and signal, each broker then trades in a fashion that they believe results in equilibrium using the strategy
\begin{align*}
    \widehat{\nu}^i_t &= \nu^{i*}(t, \widehat{\alpha}_t, \widehat{Q}_t^{I,\omega^*}, \bs{Q}^{B,\widehat{\nu}}_t, \bs{u}_t)\,,
\end{align*}
where $\nu^{i*}$ is the feedback trading strategy given in \eqref{eqn:OptimalBrokerTradingSpeed}. With each broker using their prescribed strategy, we compute their performance criteria given by
\begin{align*}
    \widehat{H}^i &= \mathbb{E}\biggl[ X_T^{i,\widehat{\nu}^i,\omega^{i*}} + Q_T^{i,\widehat{\nu}^i}\,S_T^{\widehat{\nu}} - a_i\,(Q_T^{i,\widehat{\nu}^i})^2 - \phi_i\int_0^T (Q_t^{i,\widehat{\nu}^i})^2\,dt   \biggr]\,.
\end{align*}
As a point of comparison, we also compute their performance criteria in equilibrium when they correctly observe the initial inventory $Q_0^I$, given by
\begin{align*}
    H^{i*} &= \mathbb{E}\biggl[ X_T^{i,\nu^{i*},\omega^{i*}} + Q_T^{i,\nu^{i*}}\,S_T^{\nu^*} - a_i\,(Q_T^{i,\nu^{i*}})^2 - \phi_i\int_0^T (Q_t^{i,\nu^{i*}})^2\,dt   \biggr]\,.
\end{align*}
We also compute their expected P\&L under both the fully informed and misspecified scenarios given by
\begin{align*}
    {P\&L}^i &= \mathbb{E}\biggl[ X_T^{i,\nu^{i*},\omega^{i*}} + Q_T^{i,\nu^{i*}}\,S_T^{\nu^*}\biggr]\,,\\
    \widehat{P\&L}^i &= \mathbb{E}\biggl[ X_T^{i,\widehat{\nu}^i,\omega^{i*}} + Q_T^{i,\widehat{\nu}^i}\,S_T^{\widehat{\nu}}\biggr]\,.    
\end{align*}
In all computations, we take $\widehat{Q}_0^I = 0$, but we vary the actual initial inventory so that each performance metric above depends on $Q_0^I$. The range of values chosen are meant to represent typical inventory values of the informed trader as in the bottom left panel of Figure \ref{fig:G-SamplePath}. In Figure \ref{fig:RobustnessAnalysisValueFunction} we plot $H^{i*}-\widehat{H}^i$ and ${P\&L}^i-\widehat{P\&L}^i$ to demonstrate the effect of misspecification on performance in the two-broker scenario. In both cases we see that performance is maximized when the initial inventory is estimated correctly, as expected, and that the level of suboptimality is convex in the estimation error. This is an indication that it is important for the brokers to accurately estimate inventory, especially for large deviations from the typical starting value of zero.

\begin{figure}
    \centering
    \includegraphics[width=0.47\textwidth]{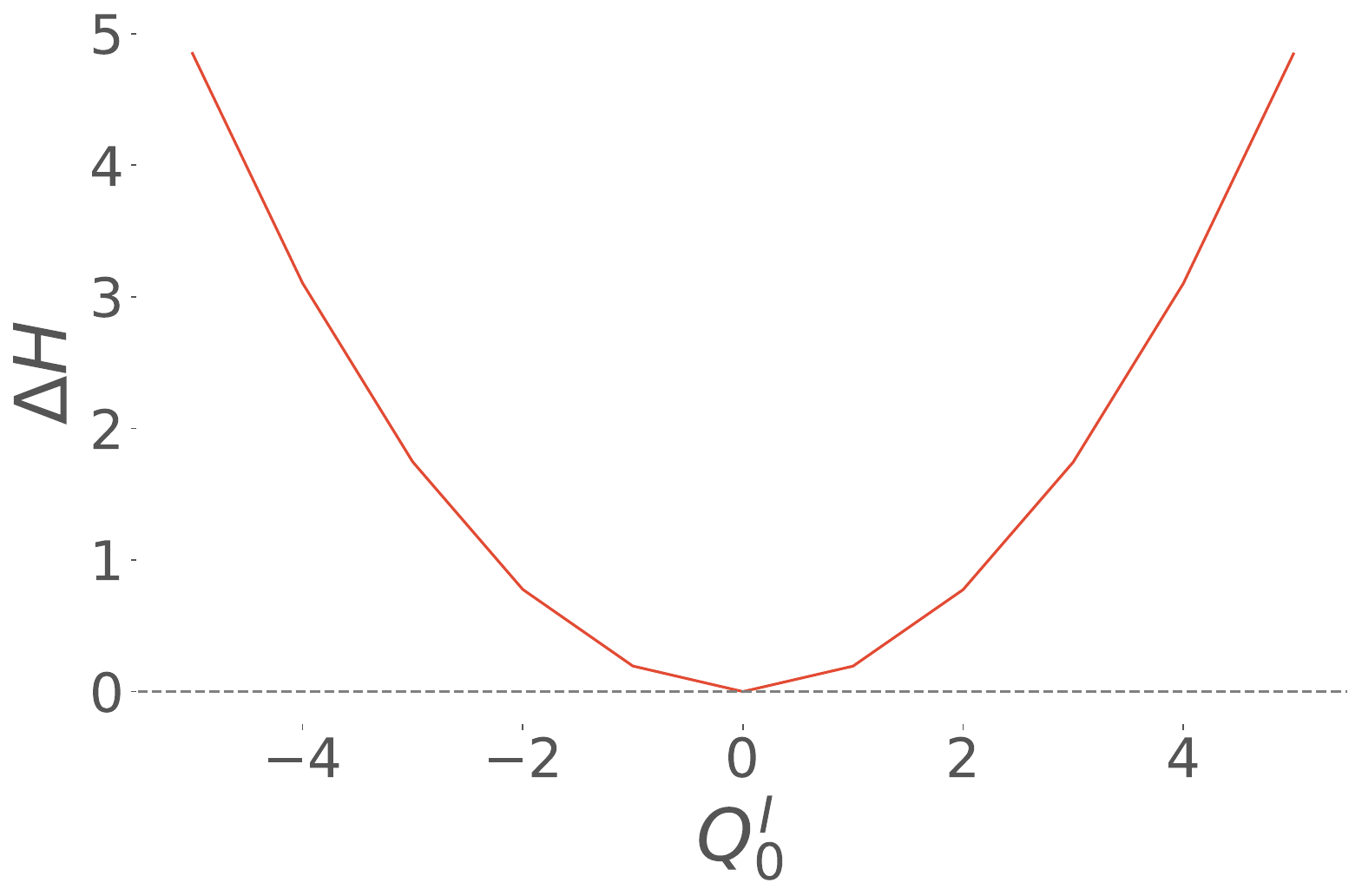}
    \includegraphics[width=0.47\textwidth]{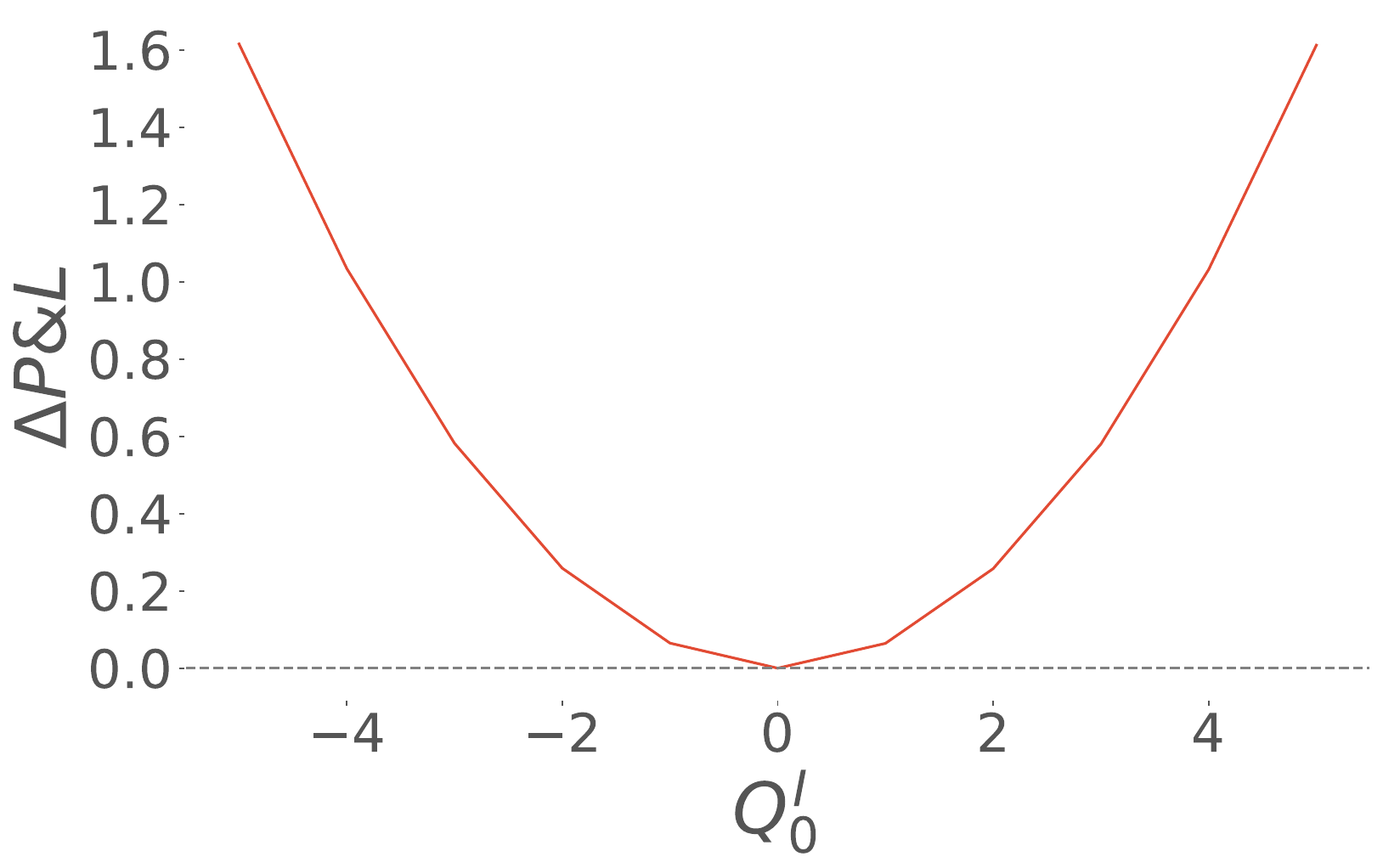}
    \caption{Change in performance criteria and expected P\&L for broker 1 under misspecification with different values of $Q^I_0$. Other parameter values are $S_0=100$, $\alpha_0 = 0$, $\theta = 3$, $\eta = 1$, $\sigma = 1$, $\kappa_1 =\kappa_2 = 0.001$, $c_1 = c_2 = 0.001$, $k_1 = k_2 = 0.0012$, $b_1 = b_2 = 0.001$, $\phi_1 = \phi_2 = \phi_I = \psi_I = 0.01$, $a_1 = a_2 = a_I = 1$, $u^{1}_0 = u^{2}_0 = 0$, $\theta_1 = \theta_2 = 5$, $\eta_1 = \eta_2 = 50$, $Q^{1}_0 = Q^{2}_0 = 0$, $\widehat{Q}^{I}_0=0$.}
\label{fig:RobustnessAnalysisValueFunction}
\end{figure}

\section{Liquidity Price Competition Between Brokers}\label{sec:liquidity_game}

Here we introduce the mechanism by which the collection of brokers compete for order flow from the informed trader to maximize their expected risk-adjusted profit. Before any trading takes place, the brokers specify the price of liquidity that they will offer to the informed trader by selecting the parameter $\kappa_i$\footnote{The liquidity price $\kappa_i$ applies only to the informed trader and changing $\kappa_i$ does not allow broker $i$ to extract additional wealth from the uninformed trader. The liquidity parameters $c_i$, along with the associated uninformed order flow $u^i$ in \eqref{eqn:uninformed_trading}, remain unaffected by the choice of $\kappa_i$.}. After all of the brokers simultaneously select this linear demand curve, trading between all agents proceeds as outlined in Section \ref{sec:model}. That is, the brokers, informed trader, and uninformed trader act according to \eqref{eqn:OptimalBrokerTradingSpeed}, \eqref{eqn:OptimalInformedTradingSpeed}, and \eqref{eqn:uninformed_trading}, respectively. Given that every agent acts according to these controls, define
\begin{align}
    V^i(\kappa_i, \kappa_{-i}) &= H^i(0, \alpha_0, S_0, Q^I_0, \bs{Q}^B_0, \bs{X}^B_0, \bs{u}_0; \nu^{-i*}, \bs{\kappa})\,,\label{eqn:BrokerInitialValueFunction}
\end{align}
where $\bs{\kappa} = (\kappa_1, \dots, \kappa_N)$ is a vector of liquidity prices between each broker and the informed trader, $\kappa_{-i}$ represents the collection of all liquidity prices other than the one set by broker $i$, and $H^i$ is the dynamic value function of broker $i$ given in \eqref{eqn:BrokerValueFunction}. All other parameters in the dynamics and performance criteria are fixed constants.

\begin{definition}[Liquidity Price Nash Equilibrium]\label{def:equilibrium}
	A liquidity price Nash equilibrium is a set of parameters $\bs{\kappa}^* = (\kappa_1^*, \dots, \kappa_N^*)$ such that for all $i\in\{1,\dots,N\}$
    \begin{align*}
        V^{i}(\kappa_i^*, \kappa_{-i}^*) \ge V^{i}(\kappa_i, \kappa_{-i}^*), \mbox{ for all } \kappa_i\neq \kappa^*_i\,.
    \end{align*}
\end{definition}
The interpretation of the equilibrium given in Definition \ref{def:equilibrium} is that we seek a set of liquidity cost parameters $\kappa_1^*,\dots,\kappa_N^*$ set by the brokers such that no broker can increase their performance by deviating from the corresponding $\kappa_i^*$. Once we have a liquidity price Nash equilibrium between the brokers, the initial value function of the informed trader can be acquired from equation \eqref{eqn:InformedValueFunctionCandidate}, and we denote it by 
\begin{align}
    V^I(\bs{\kappa}^*)&= H^I(0, \alpha_0, S_0, Q^I_0, X^I_0; \bs{\kappa}^*)\,,\label{eqn:InformedInitialValueFunction}
\end{align}
where $H^I$ is the informed trader's dynamic value function defined in \eqref{eqn:InformedValueFunction}.

\subsection{Numerical Experiment}

In this section, we numerically construct liquidity price Nash equilibria and perform sensitivity and other quantitative analysis. For reasons related to computational complexity and presentability of the results, we only consider the situation of $N=2$ brokers.

We propose the following algorithm for numerically computing an equilibrium\footnote{Although the existence and uniqueness of an equilibrium according to Definition \ref{def:equilibrium} has not been proven, our numerical experiments using the algorithm specified here consistently converge to the same limit (within numerical tolerance) regardless of the choice of initialization value $\kappa^{(0)}_2$.} given that other model parameters are fixed and denoted by $\Theta$.
\begin{enumerate}
    \item Divide the time interval $[0,T]$ into an equidistant time grid. Also specify a grid for $\kappa_1$ and $\kappa_2$. Initialize the liquidity price of the second broker as $\kappa^{(0)}_2$ which is one of the corresponding grid points.
    \item For an arbitrary $\kappa^{(n)}_2$, evaluate the function $V^{1}(\kappa_1, \kappa^{(n)}_2)$ for each value of $\kappa_1$ at its corresponding grid points. This requires numerically solving of the ODEs in Proposition \ref{prop:Broker} for which we use an explicit finite difference method. 
    \item We find the value of $\kappa_1$ which maximizes $V^{1}(\kappa_1, \kappa^{(n)}_2)$ and denote this value by $\kappa^{(n)}_1 = \kappa^*_1(\kappa^{(n)}_2)$.    
    \item In similar fashion, for a given $\kappa^{(n)}_1$ we compute $\kappa_2$ which maximizes $V^2(\kappa^{(n)}_1,\kappa_2)$ and denote this value by $\kappa^*_2(\kappa_1^{(n)})$.
    \item We introduce a learning rate parameter $\gamma \in [0,1)$ to update $\kappa^{(n+1)}_2$ for the next iteration. To reduce oscillations in searching for the equilibrium, we choose $\gamma$ to be small, and we update $\kappa_2^{(n+1)}$ according to
    \begin{align*}
        \kappa_2^{(n+1)} = (1-\gamma)\, \kappa_2^{(n)} + \gamma\,\kappa^{*}_2(\kappa^{*}_1(\kappa^{(n)}_2)).
    \end{align*}
    \item Steps 2 through 5 are repeated until convergence of the sequences $\{\kappa^{(n)}_2\}_{n\geq 0}$ and $\{\kappa_1^*(\kappa_2^{(n)})\}_{n\geq 0}$ in max absolute error within a specified tolerance. We drop the counting index in the final $\kappa_2$ and compute the corresponding $\kappa_1^*(\kappa_2)$.    
\end{enumerate}
Once convergence is attained we end up with a pair $\kappa^*_1(\Theta)$ and $\kappa^*_2(\Theta)$ which satisfy the conditions (to within numerical tolerance) of equilibrium given in Definition \ref{def:equilibrium}.

Below we showcase two symmetric broker's initial value functions to check the existence of the equilibrium numerically. Model parameters for the price dynamics and price impact are
\begin{align*}
    \alpha_0 &= 0\,, & S_0 &= 100\,,       & \theta &= 0.1\,,  & \eta &= 0.1\,,\\
    \sigma &= 0.1\,, & c_1 &= c_2=0.1\,, & k_1 &= k_2=0.2\,, & b_1 &= b_2=0.14\,, 
\end{align*}
and the penalty and uninformed trading parameters are
\begin{align*}
    a_I &= 1\,,            & a_1 &= a_2=40\,,        & \phi_I &= 0.01\,,             & \psi_I &= 0.1\,,\\
    \phi_1 &= \phi_2=20\,, & u^{1}_0 &= u^{2}_0=0\,, & \theta_1 &= \theta_2=0.001\,, & \eta_1 &= \eta_2=0.01\,. 
\end{align*}
Additionally, all initial inventories are set to $0$ when computing a liquidity price Nash equilibrium. This ensures that the source of value from each agent in the market is not due to the value of their inventory or potential value extracted from a different agent's inventory. All of the value stems from the market structure and optimal actions of each agent.

Taking first broker 1 as an example, the left panel of Figure \ref{fig:GBTG-3DMap&Single} shows the surface of broker 1's initial value function $V^{1}(\kappa_1, \kappa_2)$ over the two axes of $\kappa_1$ and $\kappa_2$. We can observe that for fixed $\kappa_2$, initial value function $V^{1}$ is concave with respect to $\kappa_1$. This concavity represents the trade-off for broker 1 between large but infrequent revenue of high liquidity prices, versus small but frequent revenue of low liquidity prices. In addition, for fixed $\kappa_1$, the initial value function $V^{1}$ is increasing in $\kappa_2$, which shows broker 1 can benefit from broker 2's increasing cost for the informed trader. This happens because as broker 2 increases their liquidity price, the informed trader will redirect a larger proportion of their order flow to broker 1. Another interesting observation is that $\kappa_1^*(\kappa_2)$ is slightly increasing in $\kappa_2$, represented by the slope of the dotted line projected on the plane. This means that if broker 2 charges a higher liquidity cost for the informed trader, broker 1 can follow with a mild, economically sensible increase in cost. In the right panel we display both surfaces of $V^{1}$ and $V^{2}$ and the corresponding lines of $\kappa^*_1(\kappa_2)$ and $\kappa_2^*(\kappa_1)$. The intersection of these two curves, marked by a point, represents a liquidity price Nash equilibrium. At this point both brokers achieve their best performance given the behaviour of the other broker.

\begin{figure}
    \centering
    \includegraphics[width=0.49\textwidth]{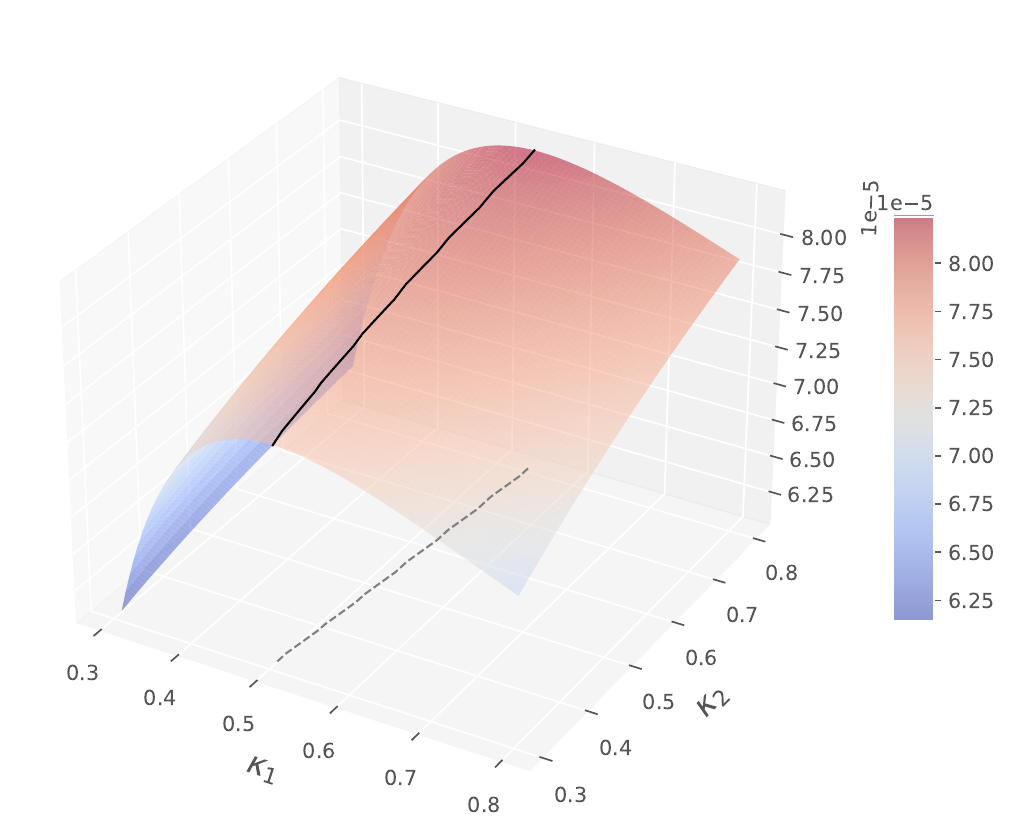}
    \includegraphics[width=0.49\textwidth]{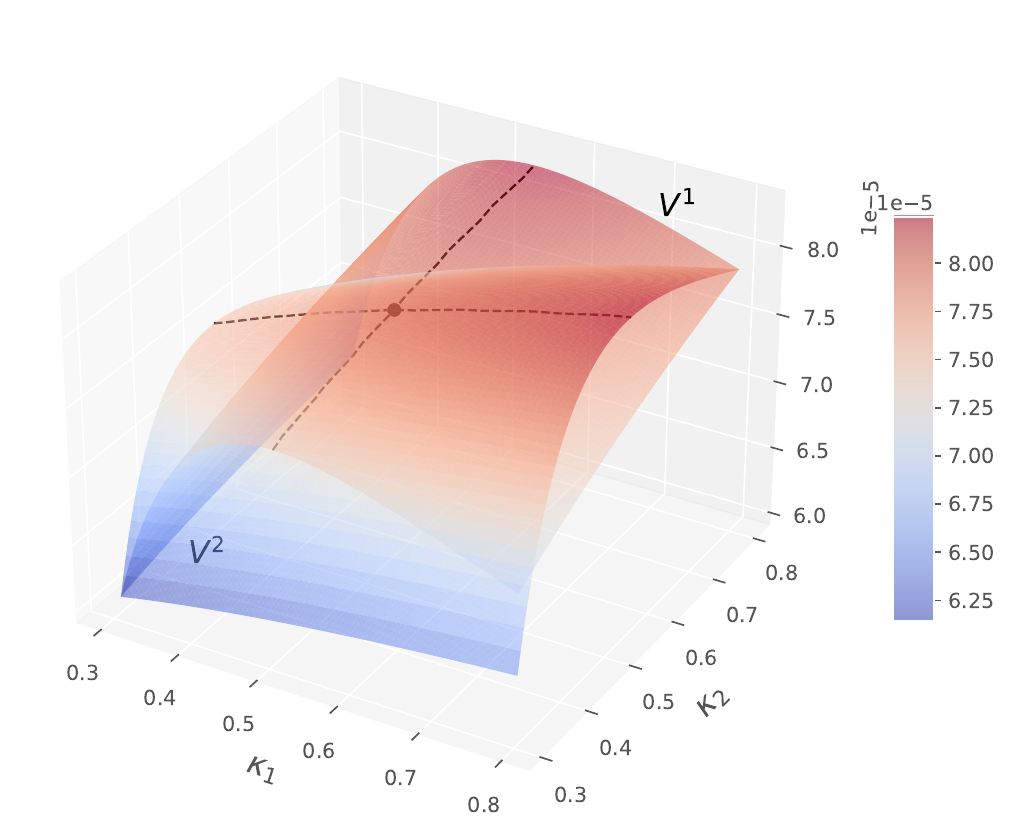}
    \caption{Left: Initial value function $V^{1}$ of broker 1. The solid line on the surface is the optimal $\kappa^{*}_1(\kappa_2)$ which maximizes $V^{1}$ given $\kappa_2$, i.e. $\kappa^*_1(\kappa_2)=\argmax_{\kappa_1} V^{1}(\kappa_1, \kappa_2)$, and the dotted line is its projection on the $\kappa_1\kappa_2$-plane. Right: Broker's initial value function $V^{1}$ and $V^{2}$. The dotted line on $V^{1}$ is $\kappa^*_1(\kappa_2)$, and the dotted line on $V^{2}$ is $\kappa^*_2(\kappa_1)$. The dot on the surface is the equilibrium.}
\label{fig:GBTG-3DMap&Single}
\end{figure}

An important consideration when investigating equilibrium in financial markets is the global welfare of such a state. For this reason we are interested in whether the liquidity price Nash equilibrium demonstrated in the right panel of Figure \ref{fig:GBTG-3DMap&Single} is Pareto efficient for the brokers. To achieve this we compute the difference between each broker's value function $V^i(\kappa_1,\kappa_2)$ for arbitrary $\kappa_1$ and $\kappa_2$ and the equilibrium value $V^i(\kappa_1^*,\kappa_2^*)$. The positive value of this difference is plotted for each broker in the left panel of Figure \ref{fig:GBTG-3DMap-Difference&Better-Choice}. For visual clarity, the region in which both of these differences are positive is shown in the right panel of Figure \ref{fig:GBTG-3DMap-Difference&Better-Choice}. Since both brokers have higher value function than their equilibrium value at all points in this region, each of these pairs $(\kappa_1,\kappa_2)$ represents a Pareto improvement. This is an indication that policies determined by an external party, such as through regulation or other restrictions, could result in greater value to the group compared to when each individual acts in their self interest.

\begin{figure}
    \centering
    \includegraphics[width=0.49\textwidth]{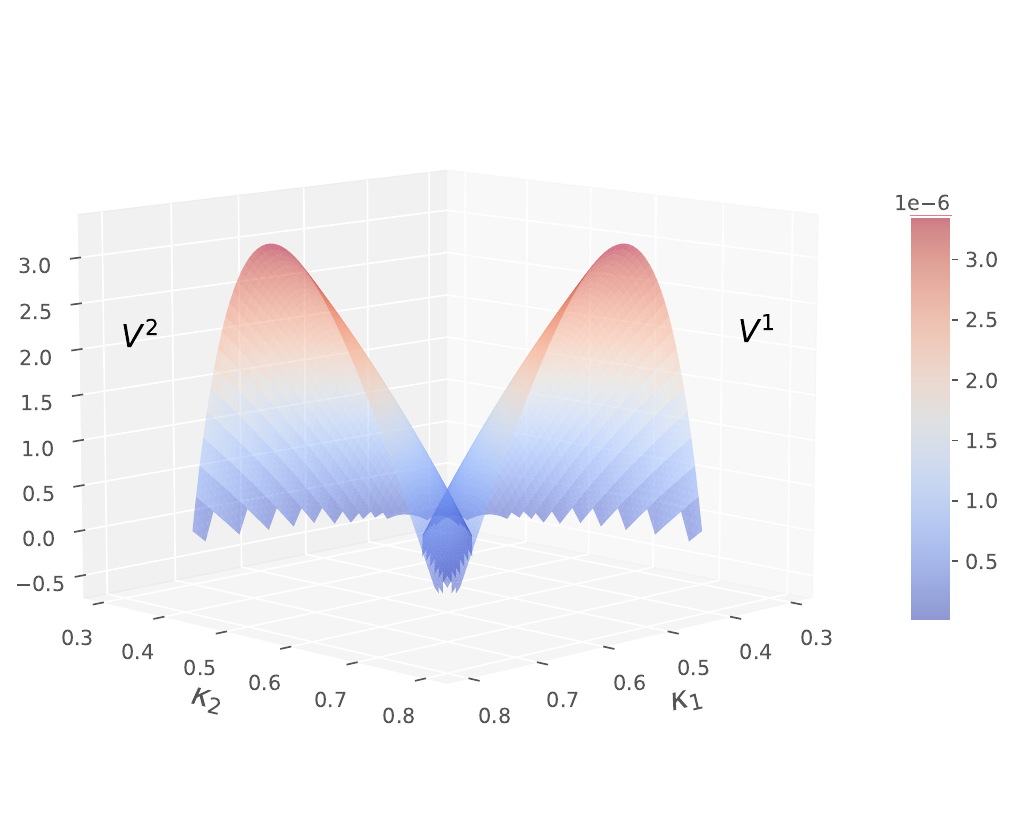}
    \includegraphics[width=0.49\textwidth]{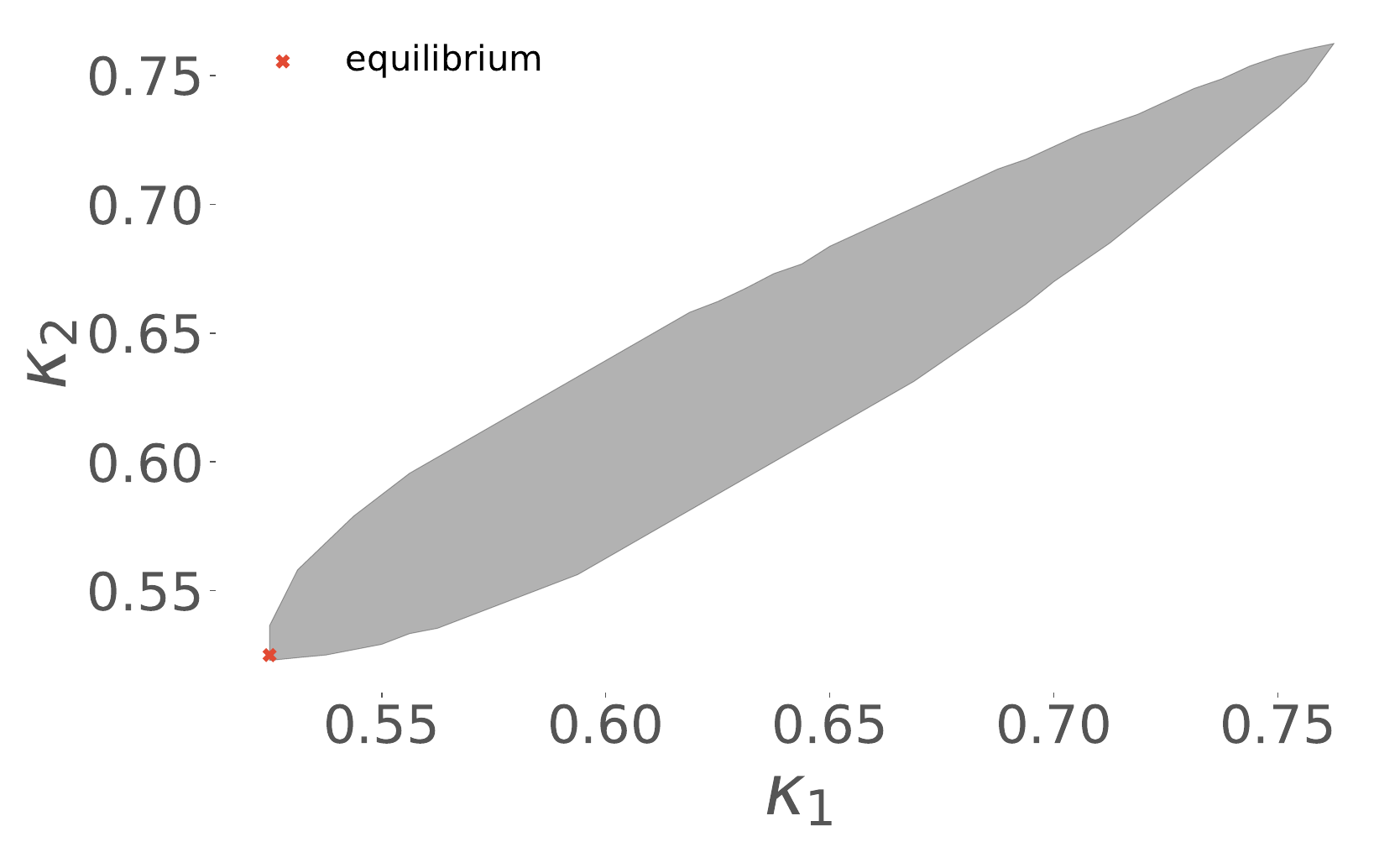}
    \caption{Left: Positive difference between $V^i(\kappa_1,\kappa_2)$ and $V^i(\kappa^*_1,\kappa^*_2)$ of two brokers. Right: Pairs of $(\kappa_1,\kappa_2)$ where positive difference surfaces overlap.}
\label{fig:GBTG-3DMap-Difference&Better-Choice}
\end{figure}

\subsubsection{Effect of Brokers' Running Inventory Penalty on the Equilibrium}\label{sec:RunningInventoryNumericalExperiment}

In this section, we investigate how a broker's running inventory penalty parameter $\phi_i$ affects the equilibrium as per Definition \ref{def:equilibrium} in the case of two asymmetric brokers. More specifically, we fix the running inventory penalty parameter $\phi_2$ of broker 2 (as well as all other parameters) and vary $\phi_1$ of broker 1.

\begin{figure}[!htp]
    \centering
    \includegraphics[width=0.49\textwidth]{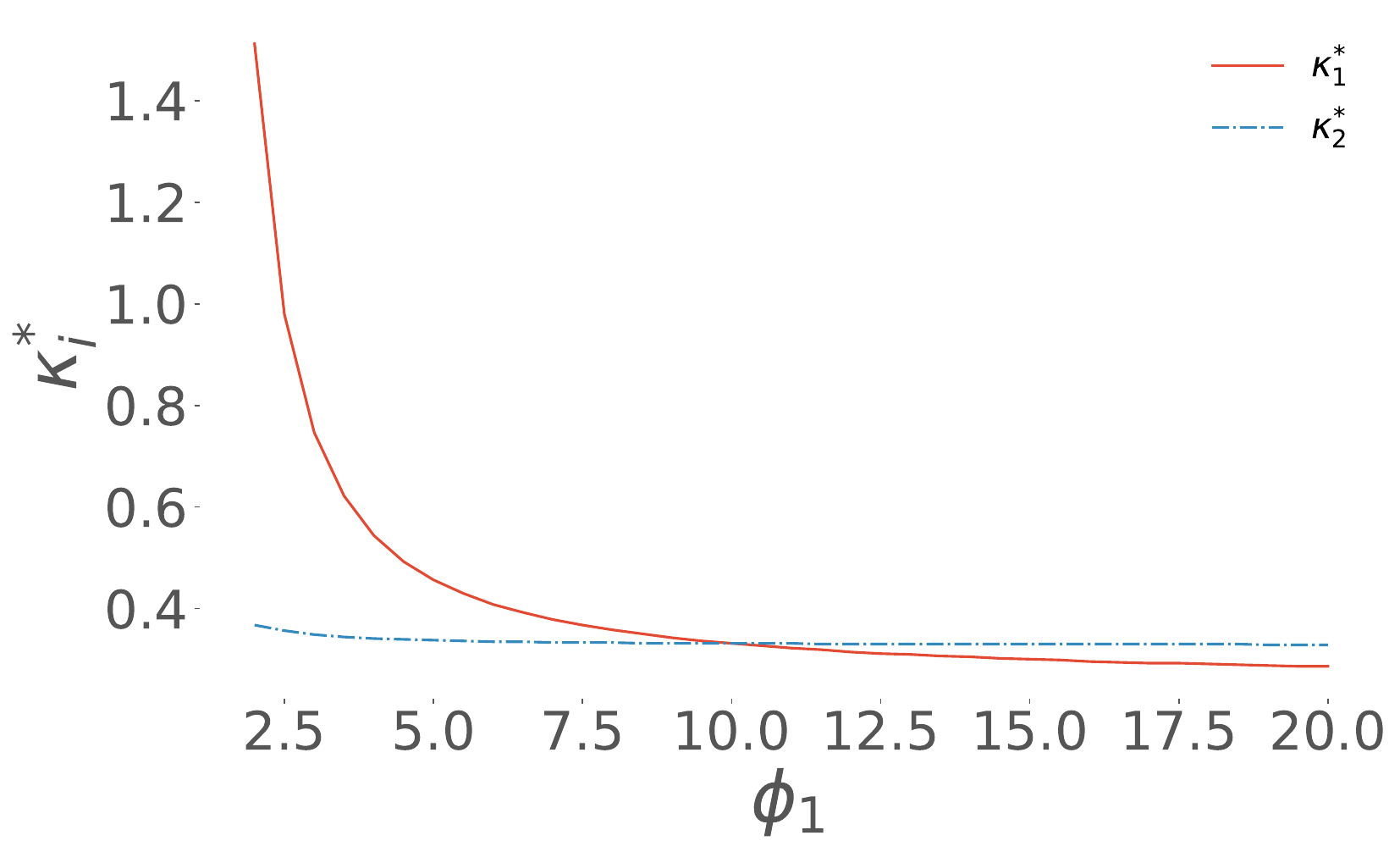}
    \includegraphics[width=0.49\textwidth]{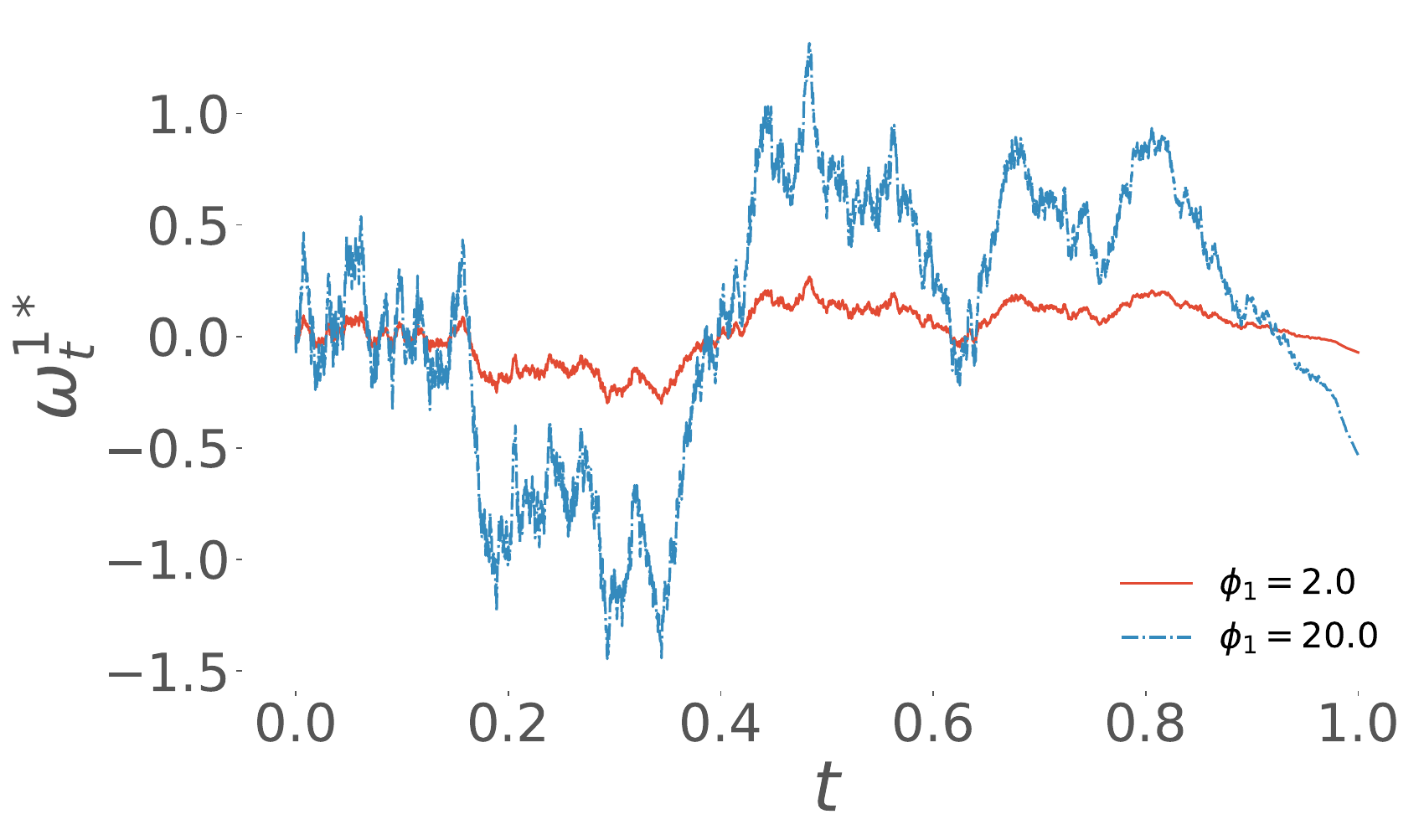}
    \caption{Left: Equilibrium with different values of broker 1's running inventory penalty parameter $\phi_1$. Right: Sample path for the informed trader's optimal trading speed $\omega^{1*}_t$ with broker 1 in the equilibrium with different values of $\phi_1$. The other parameters are $\phi_2=10,\alpha_0=0, S_0=5,\theta=1,\eta=5,\sigma=1, c_1=c_2=0.1,k_1=k_2=0.12,b_1=b_2=0.1,a_I=1,\phi_I=0.01,\psi_I=0.01,a_1=a_2=20,u^{1}_0=u^{2}_0=0,\theta_1=\theta_2=1,\eta_1=\eta_2=1,Q^I_0=Q^{1}_0=Q^{2}_0=0,X^I_0=X^{1}_0=X^{2}_0=0$.}
    \label{fig:GBTG-equilibrium-phi_B}
\end{figure}

In Figure \ref{fig:GBTG-equilibrium-phi_B}, we can see in the left panel that the optimal liquidity price $\kappa^*_1$ of broker 1 for the informed trader decreases when broker 1 receives higher running penalty parameter $\phi_1$. Therefore, the informed trader tends to trade more with broker 1 when the cost is lower, as shown in the right panel. This seems counter-intuitive as we expect broker 1 will quote higher liquidity cost to the informed trader to discourage transactions with higher penalty. However, given that brokers can monetize from the alpha of the informed trader, such speculative investment may change when penalty parameter changes. Hence, we need to consider their speculative investment at the same time. To investigate further, we define the net trading speed $\tilde{\nu}^i_t$ of broker $i$ as
\begin{align*}
    \tilde{\nu}^i_t \coloneqq \nu^i_t-\omega^i_t-u^i_t\,.
\end{align*}
With the informed trader's optimal strategy \eqref{eqn:OptimalInformedTradingSpeed} and brokers' optimal strategies \eqref{eqn:OptimalBrokerTradingSpeed}, we rewrite $\tilde{\nu}^{1*}_t$ as
\begin{align}
    \tilde\nu^{1*}_t &= r^1_1(t)\,\alpha_t + r^1_2(t)\,Q^{1*}_t + r^1_3(t)\,Q^{2*}_t + r^1_4(t)\,Q^{I*}_t + r^1_5(t)\,u^1_t + r^1_6(t)\,u^2_t\,,\label{eqn:NetTradingSpeed}
\end{align}
where
\begin{align*}
    r^1_1(t) &= \frac{x^1_1(t)}{2k_i} - \omega^1_0(t) &  r^1_2(t) &= \frac{b_1(t) + 2n_{1,1}^1(t)}{2k_1}\\
    r^1_3(t) &= \frac{n^1_{2,1}(t) + n^1_{1,2}(t)}{2k_1} & r^1_4(t) &= \frac{d^1_1(t)}{2k_1} - \omega^1_1(t)\\
    r^1_5(t) &= \frac{y^1_{1,1}(t)}{2k_1} - 1 & r^1_6(t) &= \frac{y^1_{1,2}(t)}{2k_1}
\end{align*}
The first term $r^1_1(t)\,\alpha_t$ is the speculative behavior of broker 1. The forth term $r^1_4(t)\,Q^{I*}_t$ and the fifth term $r^1_5(t)\,u^1_t$ account for how much of the informed and uninformed flow is internalized by broker 1. Figure \ref{fig:GBTG-phi_B-v_B_composition} plots each term of broker 1's net trading speed $\tilde{\nu}^{1*}_t$ in equation \eqref{eqn:NetTradingSpeed} when $\phi_1=2$ and $\phi_1=20$. From the lower left panel, we can see that even though the informed trader has higher trading volume with broker 1 when the penalty parameter is higher, the component related to the informed trader's inventory $r^1_4(t)\,Q^{I*}_t$ does not change much except for times close to $T$. From this, we can deduce that broker 1 is accepting more informed trader's transactions and liquidating faster in the case of higher running inventory penalty. Furthermore, the upper left panel shows that the speculative term associated with alpha $r^1_1(t)\,\alpha_t$ shrinks when broker 1 faces higher running inventory penalty. The other panels indicate that all the other terms are smaller in absolute size for higher penalty. Overall, we see from Figure \ref{fig:GBTG-phi_B-v_B_1_net&Q_B_1} that broker 1's net trading speed  $\tilde{\nu}^{1*}_t$ decreases in size when $\phi_1$ is higher, and therefore broker 1 indeed has a lower inventory level. Additionally, when broker 1 has a higher penalty and sets a lower transaction price for the informed trader, she actually liquidates orders from the informed trader faster and reduces her volume of speculative trading.

\begin{figure}
    \centering
    \includegraphics[width=0.32\textwidth]{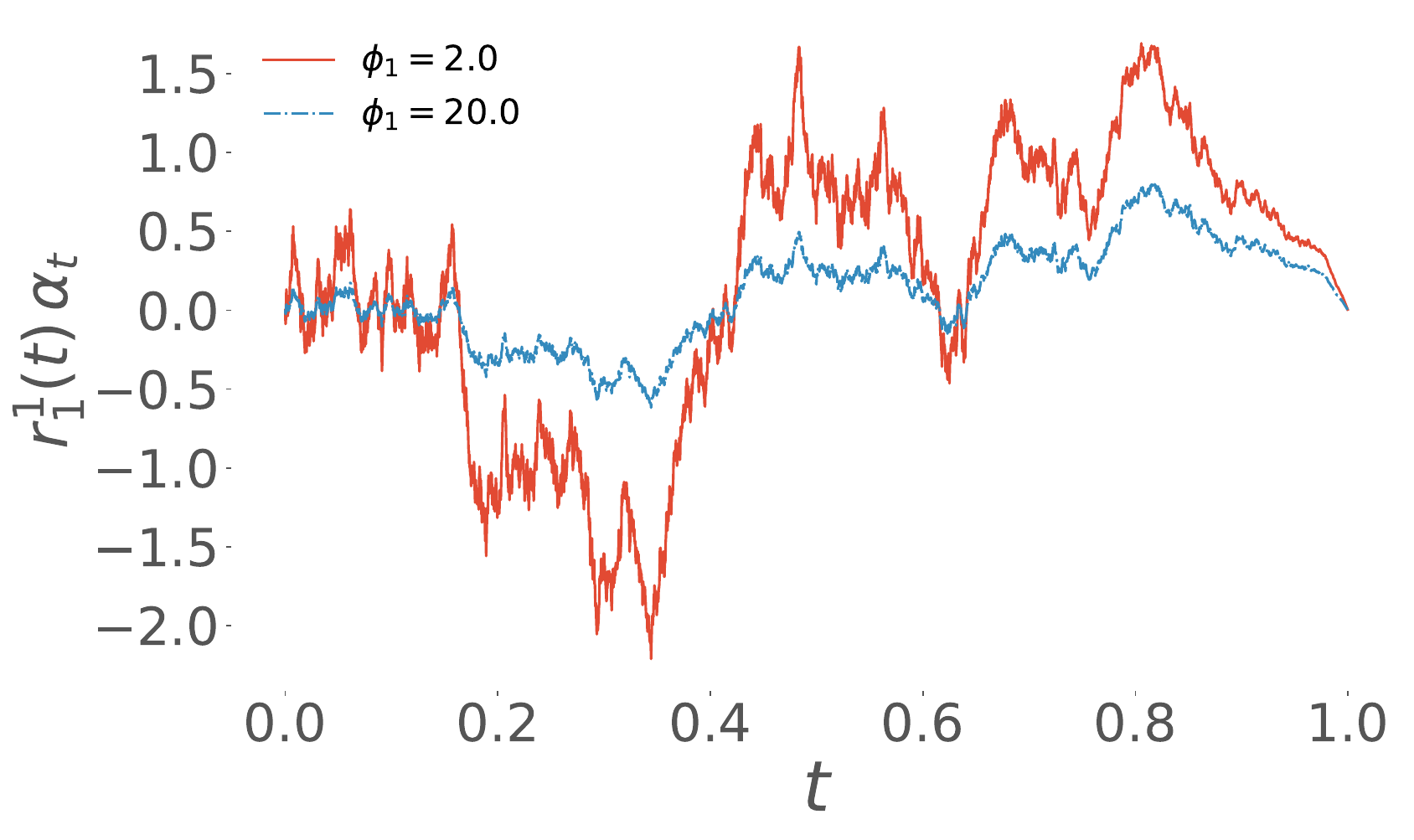}
    \includegraphics[width=0.32\textwidth]{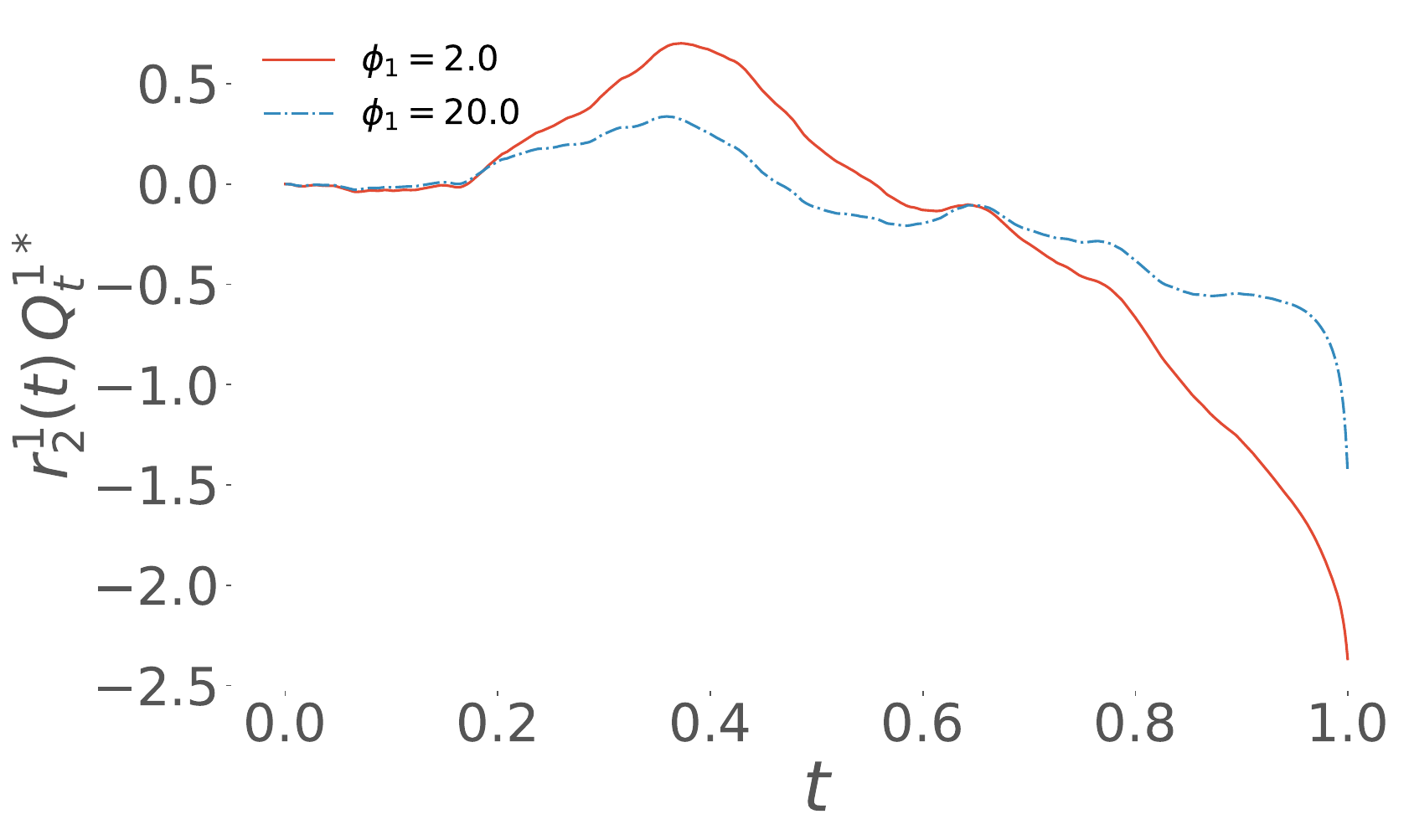}
    \includegraphics[width=0.32\textwidth]{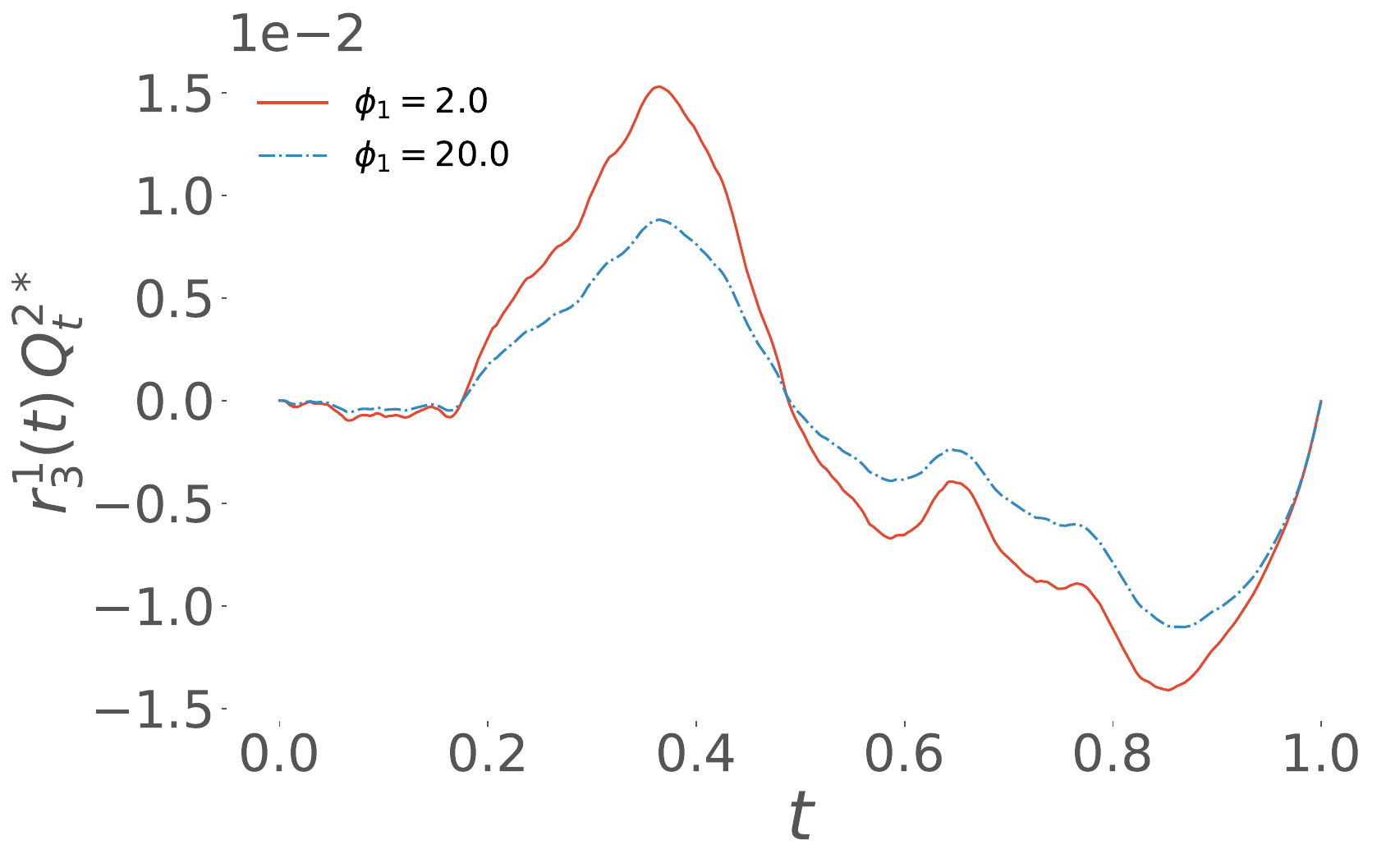}
    \includegraphics[width=0.32\textwidth]{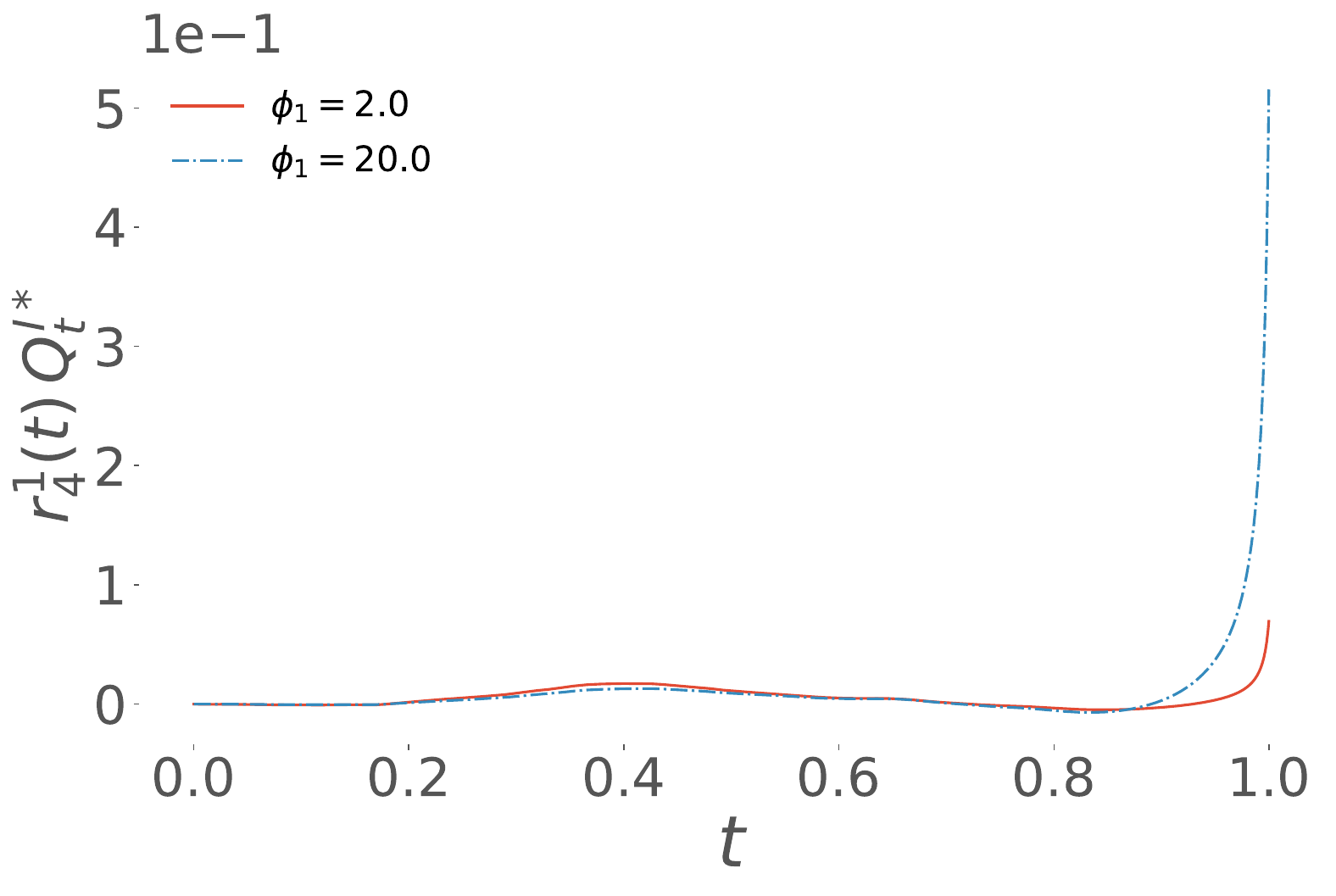}
    \includegraphics[width=0.32\textwidth]{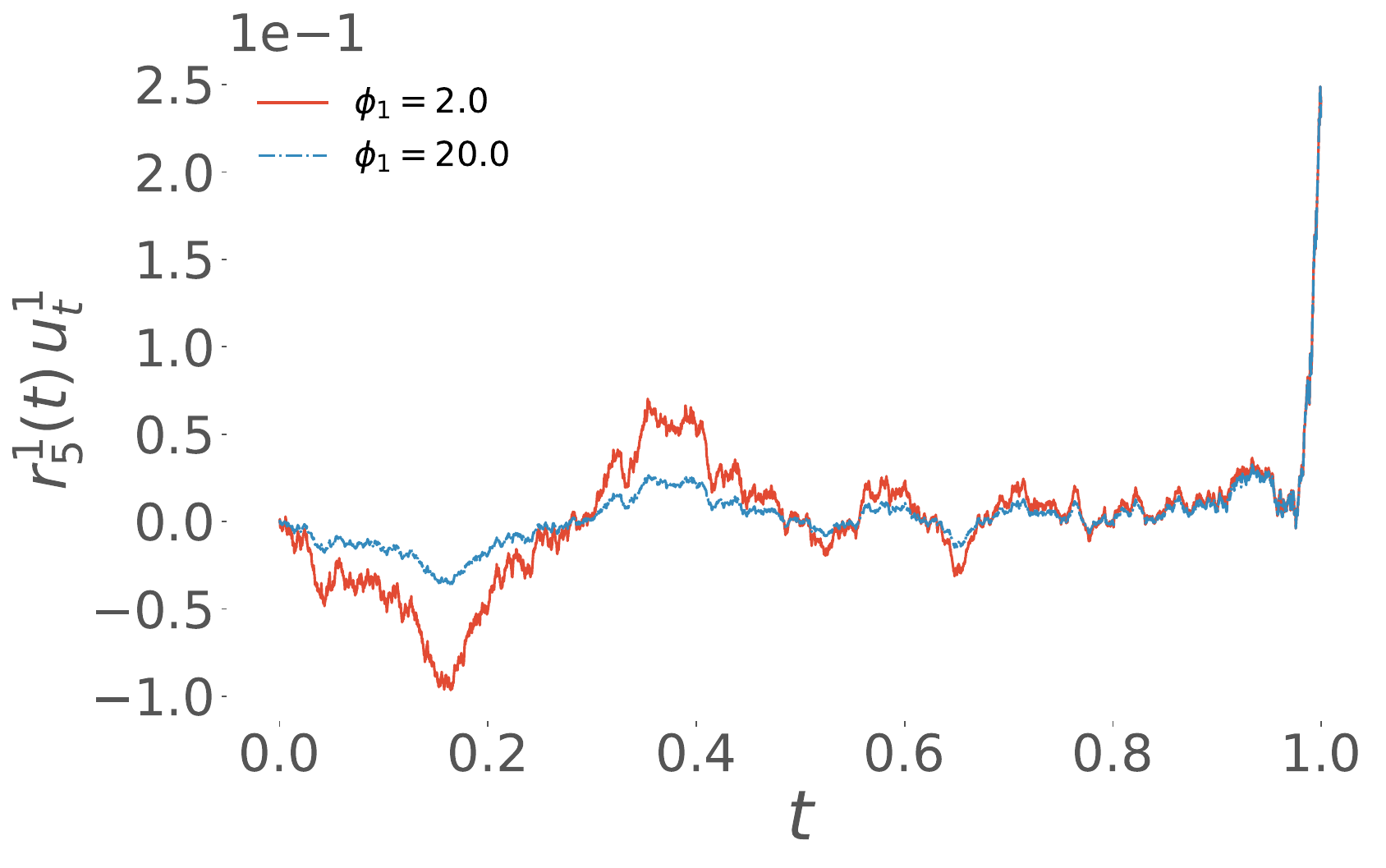}
    \includegraphics[width=0.32\textwidth]{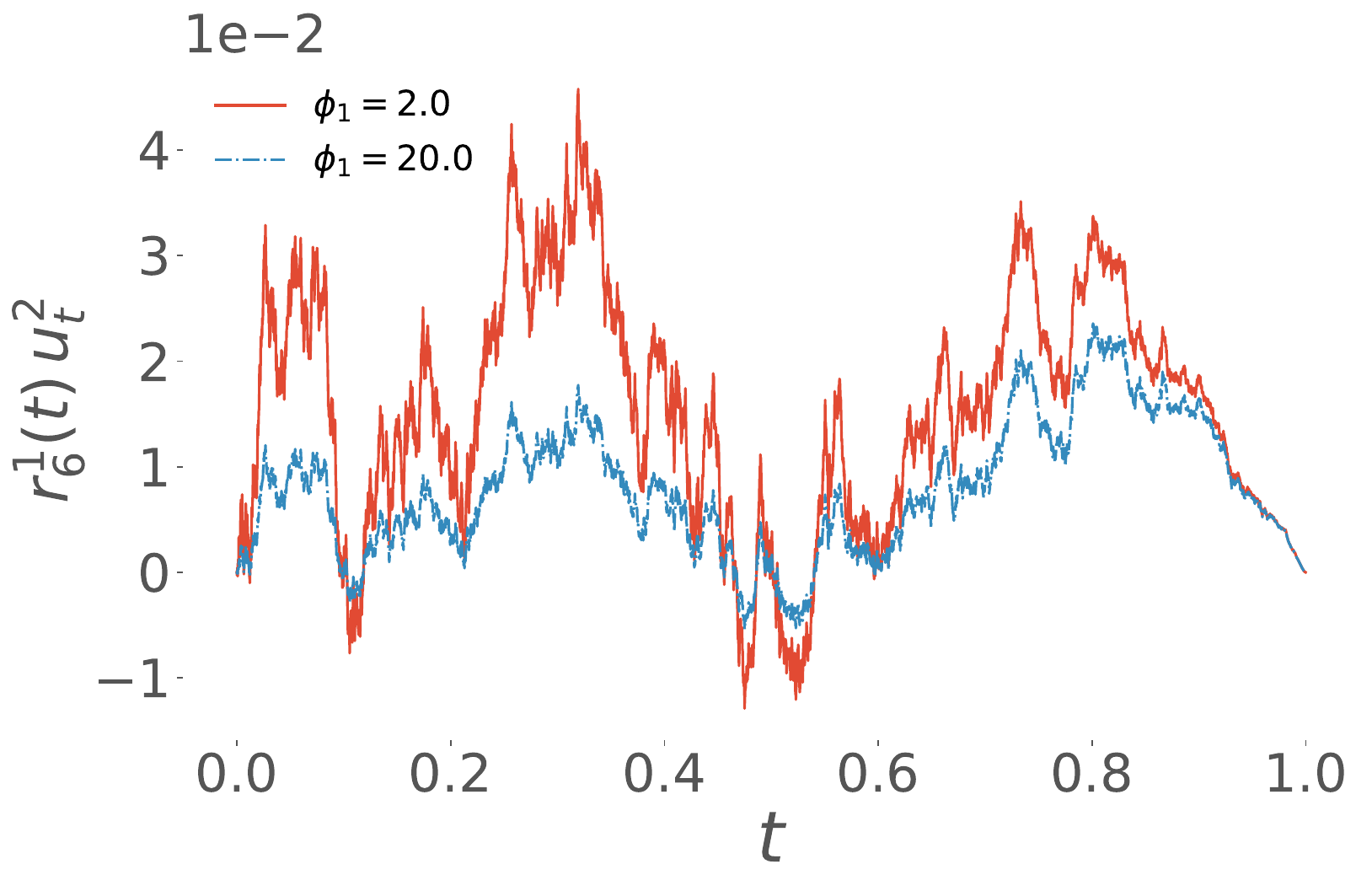}
    \caption{Sample paths for each component of broker 1's trading rate $\nu^{1*}_t$ for different values of running inventory penalty parameter $\phi_1$. The other parameters are the same as in Figure \ref{fig:GBTG-equilibrium-phi_B}. }
    \label{fig:GBTG-phi_B-v_B_composition}
\end{figure}

\begin{figure}
    \centering
    \includegraphics[width=0.49\textwidth]{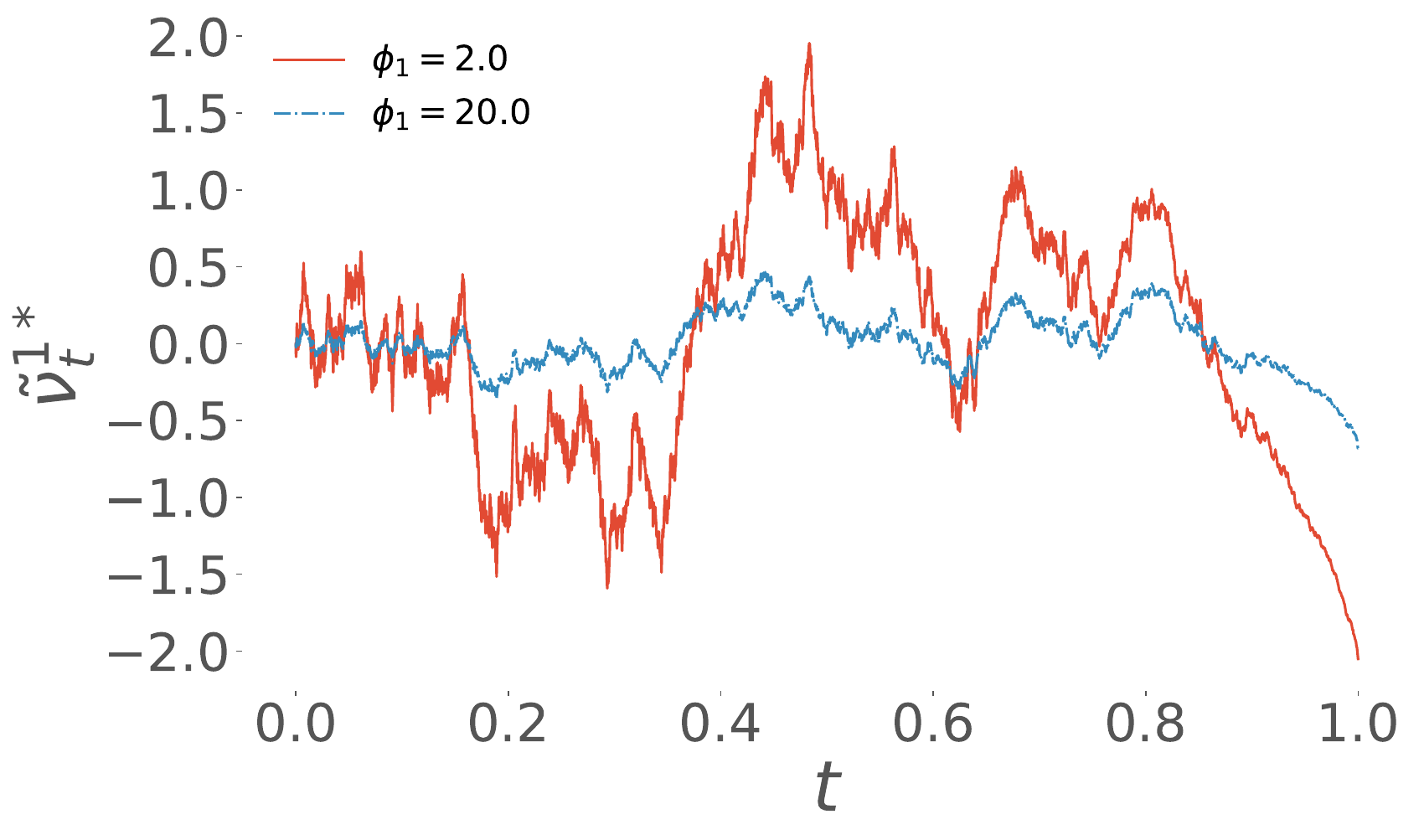}
    \includegraphics[width=0.49\textwidth]{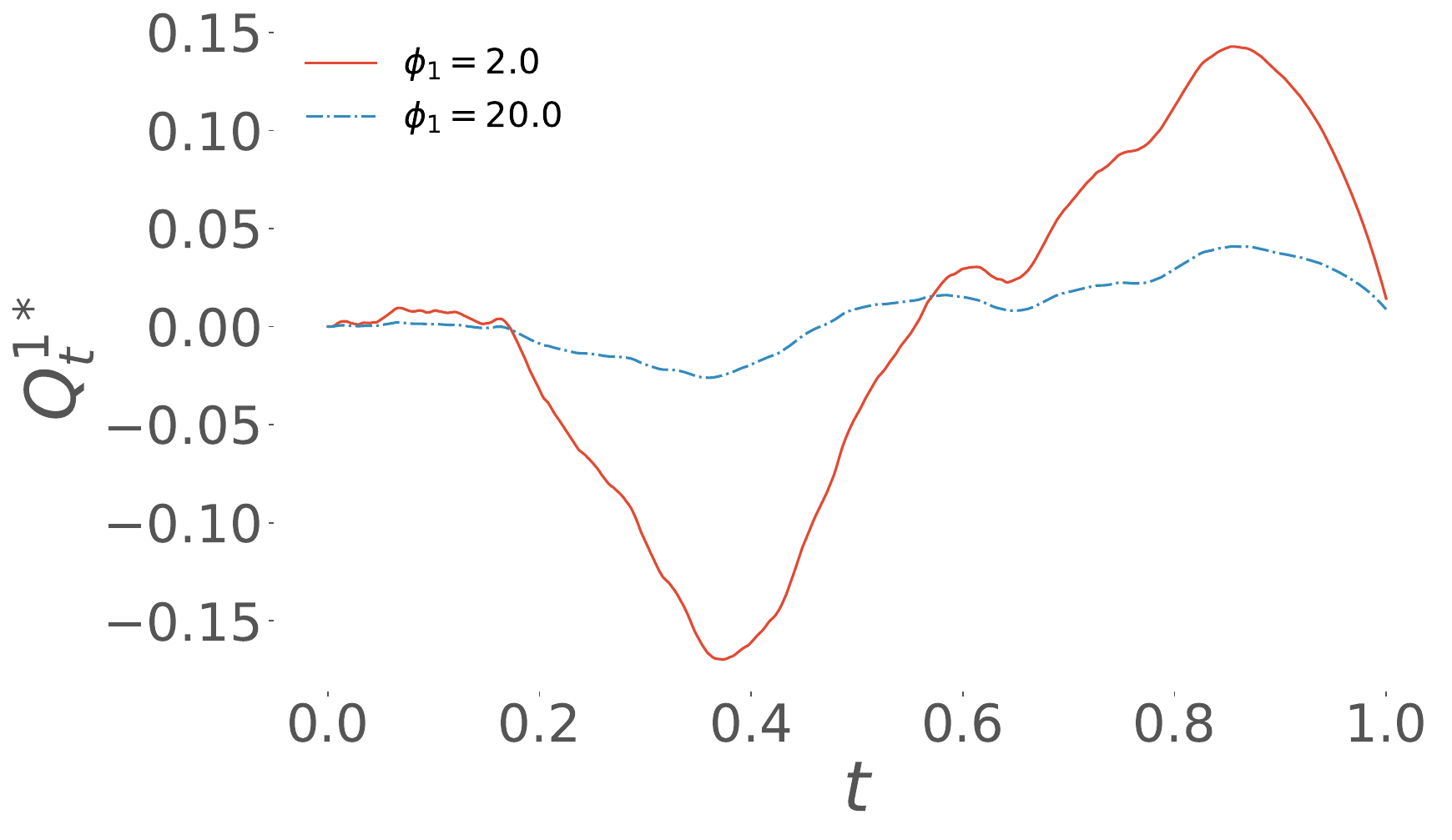}
    \caption{Sample paths for broker 1's net trading rate $\tilde{\nu}^{1*}_t$ and inventory $Q^{1*}_t$ for different values of running inventory penalty parameter $\phi_1$. The other parameters are the same as in Figure \ref{fig:GBTG-equilibrium-phi_B}.}
    \label{fig:GBTG-phi_B-v_B_1_net&Q_B_1}
\end{figure}

To confirm consistency of this conclusion, we define several quantities to aggregate and study across many realizations:
\begin{itemize}
    \item $Z^I \coloneqq \int^T_0 |\omega^1_t|\,dt$, the total trading volume of the informed trader with broker 1.
    \item $\bar{Q} \coloneqq \frac{1}{T}\int^T_0 |Q^1_t|\,dt$, broker 1's average inventory level during the entire trading horizon.
    \item $Y_{tot}^\alpha \coloneqq \int^T_0|r^1_1(t)\,\alpha_t|\,dt$, the total speculative trading volume of broker 1. This is the integral of the first term in \eqref{eqn:NetTradingSpeed}.
    \item $Y_{tot}^I \coloneqq \int^T_0|r^1_4(t)\,Q^{I*}_t|\,dt$, the total trading volume related to the informed trader's inventory of broker 1. This is the integral of the forth term in \eqref{eqn:NetTradingSpeed}.
\end{itemize}

We run 4000 simulations and investigate how these quantities change in liquidity price Nash equilibrium when $\phi_1$ increases from 2 to 20 with results summarized in Tables \ref{table:phi_B_1_t_test} and \ref{table:phi_B_1_t_test_}. For each quantity we conduct a two-sample t-test. The benefit of this is that we can check if the behaviour of the broker is really different before and after the increase in running inventory penalty. We see that the average inventory level of broker 1 decreases from 0.17 to 0.02, which indicates broker 1's trading behavior is indeed suppressed due to the higher running inventory penalty. However, we also see that the informed trader's trading volume with broker 1 increases significantly from 0.16 to 0.77, implying that broker 1 reduces her trading flow associated with alpha. This is confirmed by the decrease of speculative trading volume of broker 1 from 1.08 to 0.35. It is worth noticing that the total trading volume related to the informed trader's inventory also decreases, but this reduction (0.07) is significantly smaller compared to the decrease (0.73) in speculative volume. All of the t-stats are greater than 2 in absolute value showing high confidence that the broker's behaviour changes with the increase in penalty parameter. These observations rationalize the decreasing tendency of $\kappa^{1*}$ with respect to $\phi_1$ in Figure \ref{fig:GBTG-equilibrium-phi_B}: broker 1 reduces her trading activities to avoid a higher running inventory penalty, and she prefers to decrease her speculative trading rather than transactions with the informed trader.

\begin{table}
\begin{center}
    \begin{tabular}{ l|l|l|l|l|l }
     Char. & t-stat & p-value & average-1 & average-2 \\
     \hline
     $Z^I$ & -100.40 & 0.00 & 0.16 & 0.78 \\
     $\bar{Q}$ & 84.83 & 0.00 & 0.17 & 0.02 \\
     $Y_{tot}^\alpha$ & 81.17 & 0.00 & 1.08 & 0.35 \\
     $Y_{tot}^I$ & 4.82 & $1.44\times 10^{-6}$ & 1.42 & 1.35
     \end{tabular}
    \caption{Two sample t-test for characteristics in Scenario 1 ($\phi_1=2$) and Scenario 2 ($\phi_1=20$). The other parameters are the same as in Figure \ref{fig:GBTG-equilibrium-phi_B}.}\label{table:phi_B_1_t_test}
\end{center}
\end{table}

\begin{table}
\begin{center}
    \begin{tabular}{ l|l|l|l|l|l }
     Char. & average-1 & std-1 & average-2 & std-2 \\
     \hline
     $Z^I$ & 0.16 & 0.08 & 0.78 & 0.38 \\
     $\bar{Q}$ & 0.17 & 0.11 & 0.02 & 0.01 \\
     $Y_{tot}^\alpha$ & 1.08 & 0.54 & 0.35 & 0.18 \\
     $Y_{tot}^I$ & $1.42\times10^{-5}$ & $7.14\times 10^{-6}$ & $1.35\times10^{-5}$ & $6.75\times 10^{-6}$
     \end{tabular}
    \caption{Average and standard deviation for characteristics in Scenario 1 ($\phi_1=2$) and Scenario 2 ($\phi_1=20$). The other parameters are the same as in Figure \ref{fig:GBTG-equilibrium-phi_B}.}\label{table:phi_B_1_t_test_}
\end{center}
\end{table}

Figure \ref{fig:GBTG-phi_B-V_B&V_I} shows the dependency of each agent's value function with respect to the running inventory penalty of broker 1. On the left panel, we can observe that both brokers lose some initial value on average. This is very intuitive as broker 1 loses value with more penalty and broker 2 loses value from the malicious competition of broker 1, since she is forced to reduce her fee as shown in the left panel of Figure \ref{fig:GBTG-equilibrium-phi_B}. Meanwhile, the informed trader benefits from the increasing running inventory penalty parameter $\phi_1$, as she can conduct more transactions in this scenario. Also, from the perspective of competition of monetizing alpha, the informed trader benefits from less speculative trading of broker 1.

\begin{figure}
    \centering
    \includegraphics[width=0.49\textwidth]{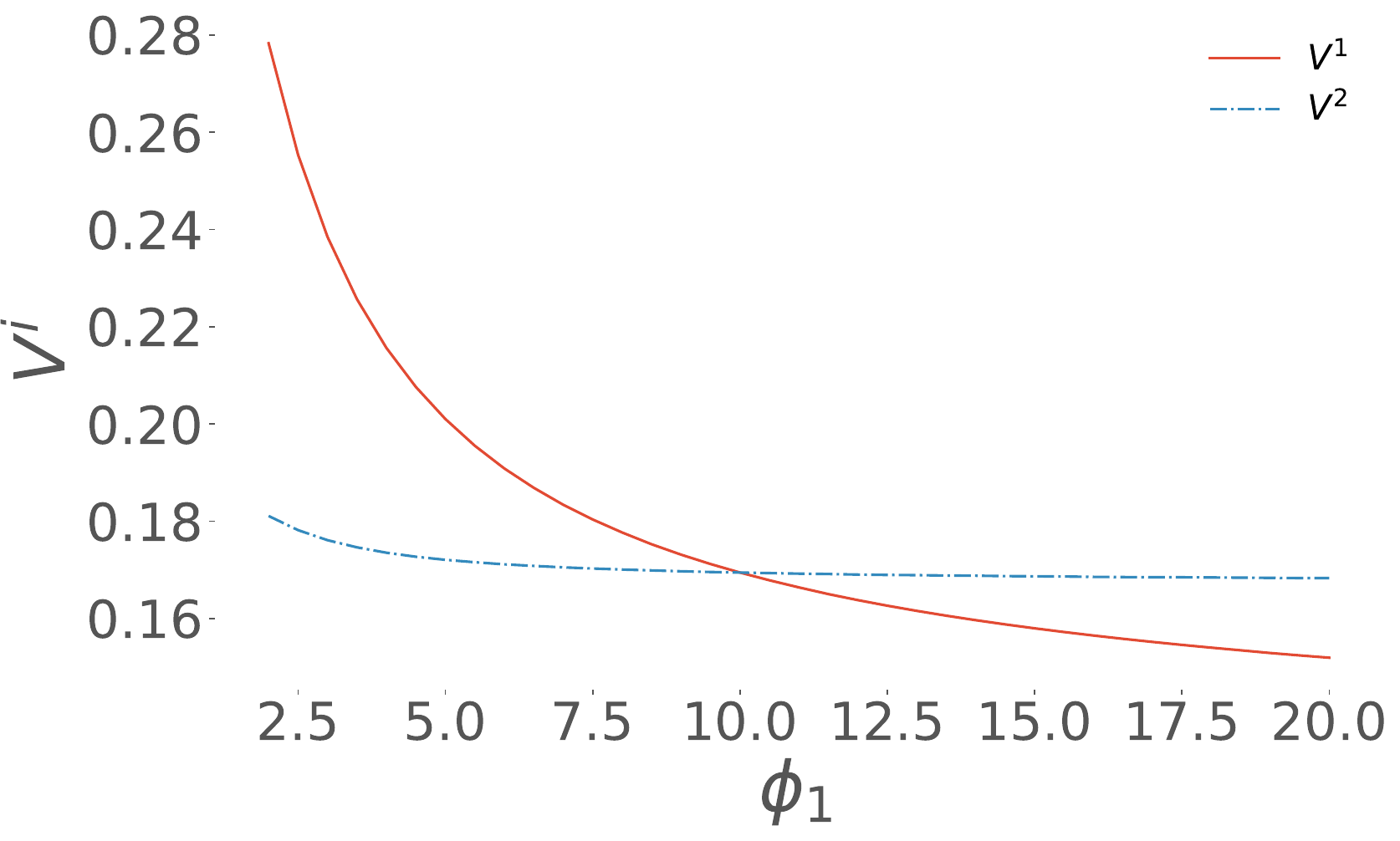}
    \includegraphics[width=0.49\textwidth]{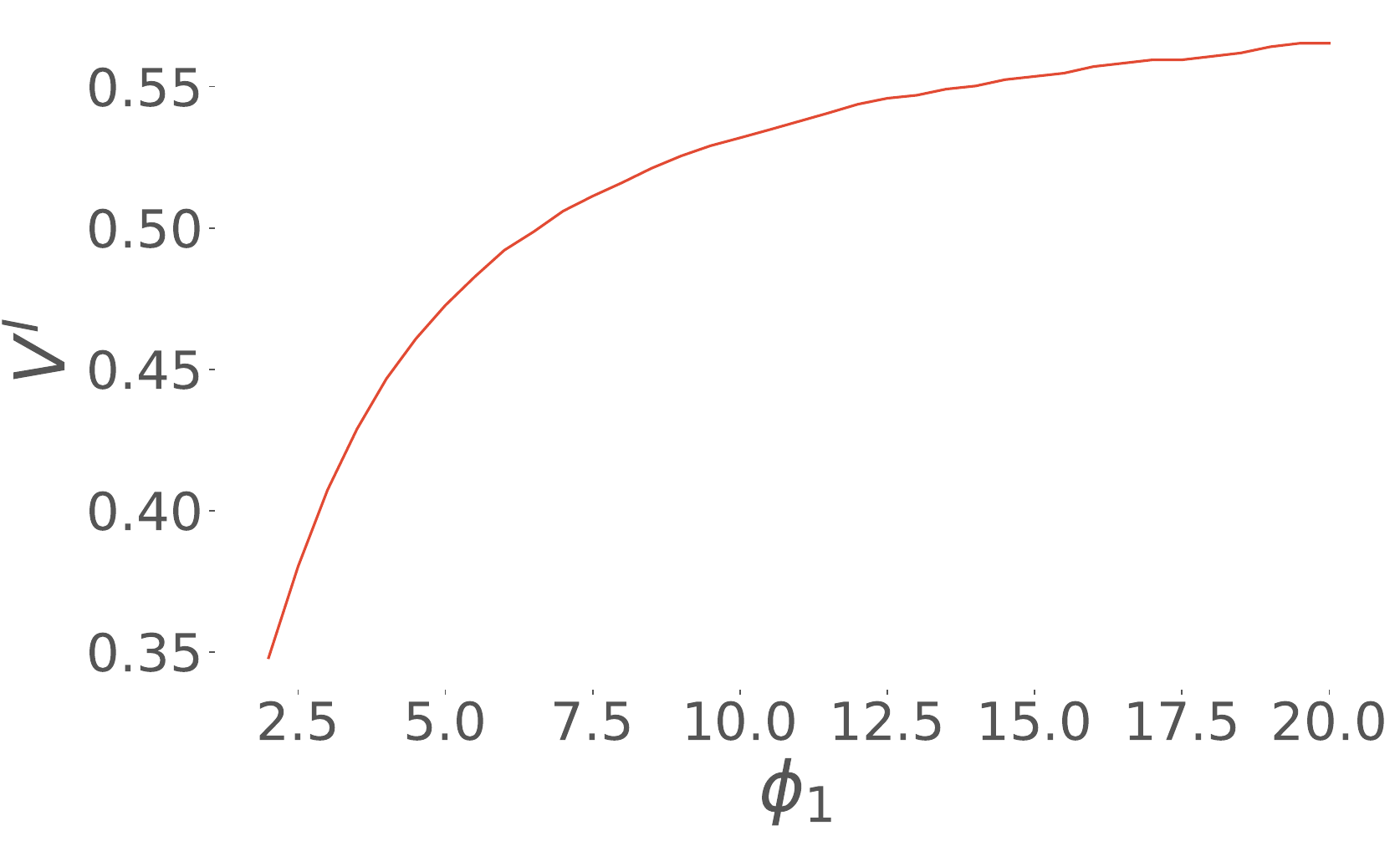}
    \caption{Brokers' and the informed trader's equilibrium value function $V^i(\kappa_1^*,\kappa_2^*)$ and $V^I(\kappa_1^*,\kappa_2^*)$ for different values of running inventory penalty parameter $\phi_1$. The other parameters are the same as in Figure \ref{fig:GBTG-equilibrium-phi_B}.}
    \label{fig:GBTG-phi_B-V_B&V_I}
\end{figure}

\subsubsection{Effect of Brokers' Terminal Penalty on the Equilibrium}\label{sec:penalty_effect}

In this section, we investigate how the terminal penalty parameter $a_i$ affects the equilibrium in the case of two asymmetric brokers. More specifically, we fix the terminal penalty parameter $a_2$ of broker 2 along with all other parameters and vary $a_1$ of broker 1.

\begin{figure}[!htp]
    \centering
    \includegraphics[width=0.49\textwidth]{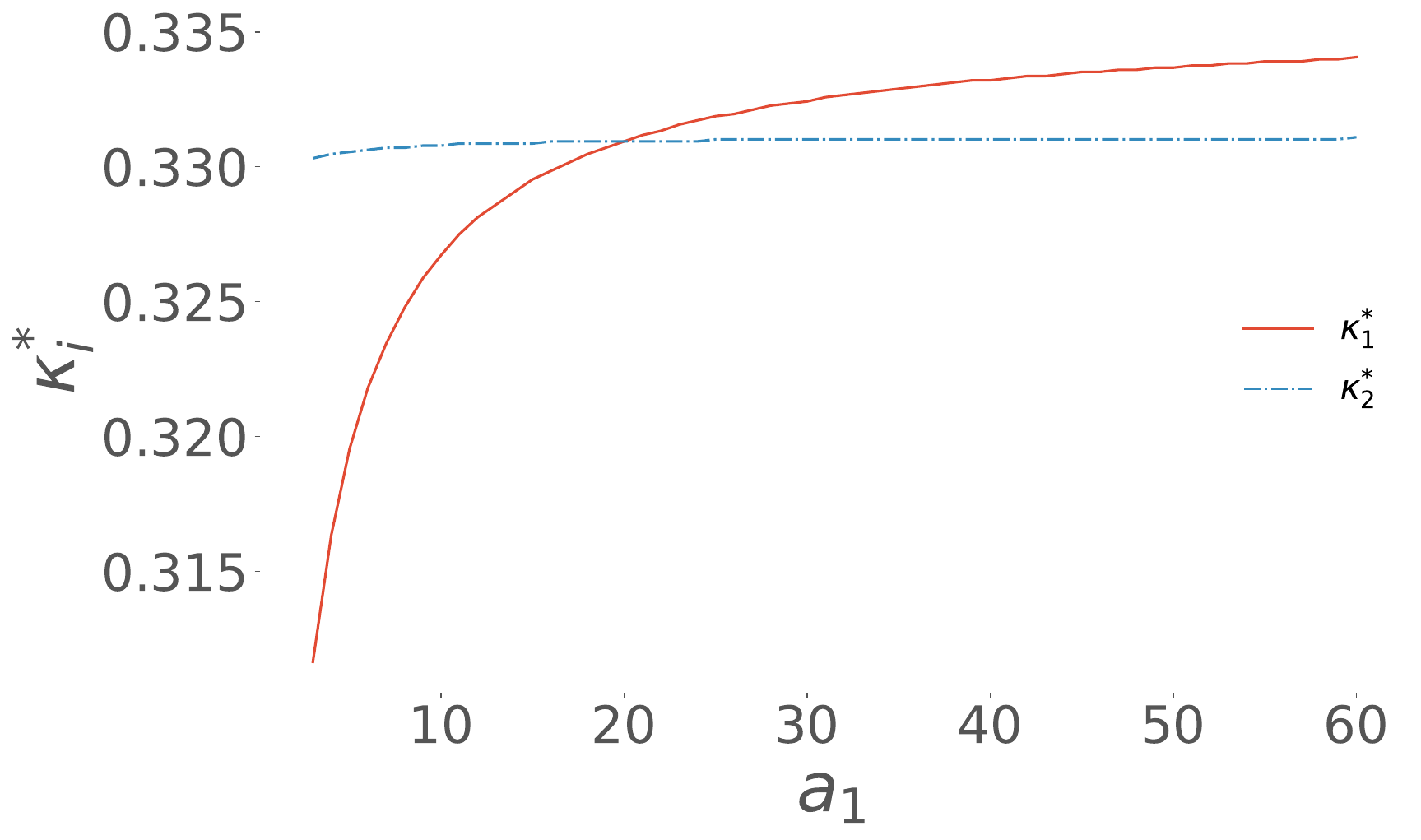}
    \includegraphics[width=0.49\textwidth]{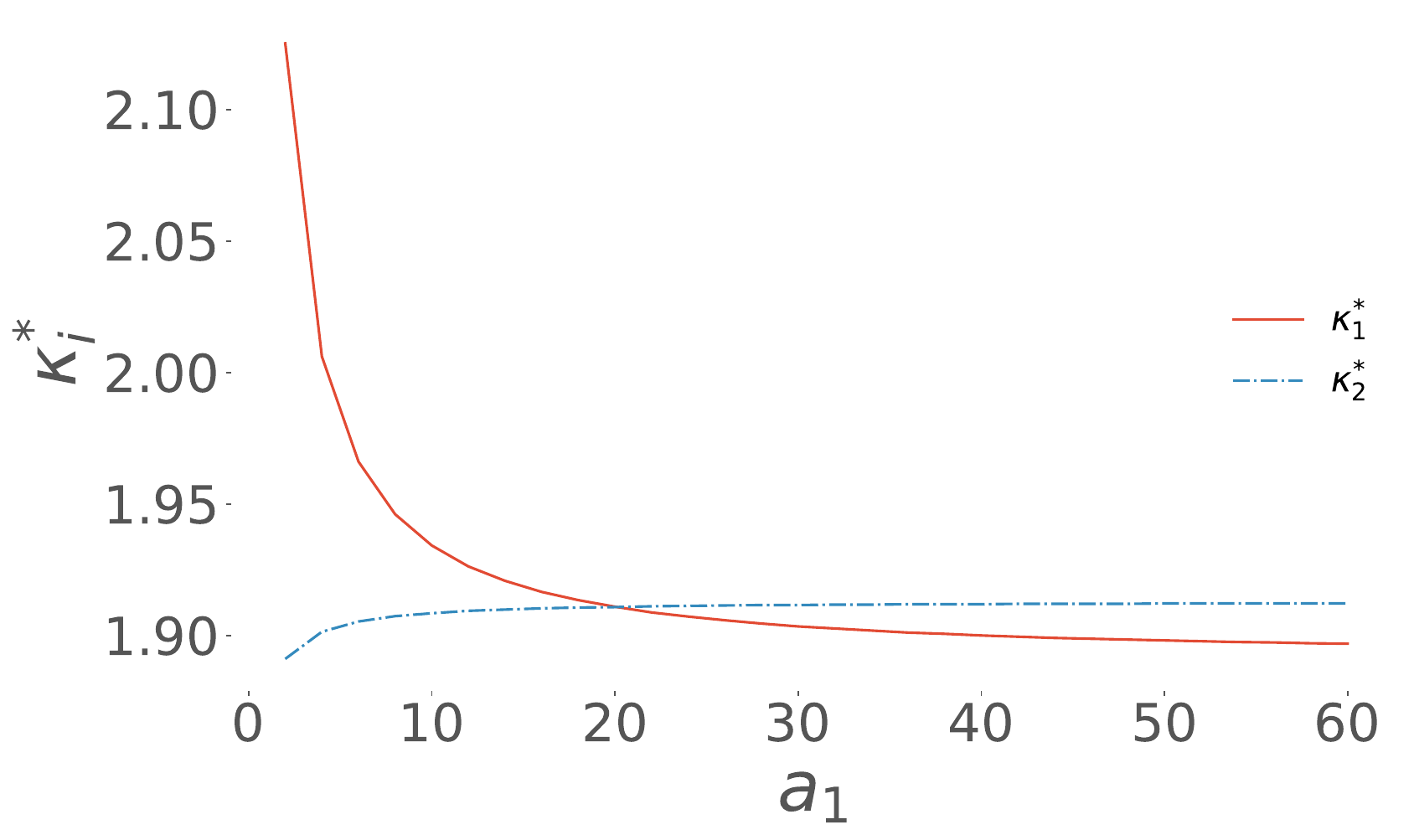}
    \caption{Equilibrium liquidity prices for differing values of broker 1's terminal inventory penalty parameter $a_1$. Left: $\phi_1=\phi_2=10$ (high). Right: $\phi_1=\phi_2=1$ (low). Other parameters are $a_2=20$, $T=1$, $N=5000$, $S_0=5$, $\sigma=1$, $\alpha_0=0$, $\theta=1$, $\eta=5$, $\phi_I=0.01$, $a_I=1$, $\psi_I=0.01$, $\theta_1=\theta_2=1$, $\eta_1=\eta_2=1$, $u^1_0=u^2_0=0$, $c_1=c_2=0.1$, $b_1=b_2=0.1$, $k_1=k_2=0.12$, $Q^I_0=Q^{1}_0=Q^{2}_0=0$, $X^I_0=X^{1}_0=X^{2}_0=0$.}
    \label{fig:GBTG-equilibrium-a_B}
\end{figure}

An interesting result is that the relationship between the equilibrium and terminal inventory penalty parameter $a_1$ also depends on the value of running inventory penalty parameters $\phi_i$ of both brokers. On the left panel of Figure \ref{fig:GBTG-equilibrium-a_B}, we can see that the optimal liquidity price $\kappa^{*}_1$ set by broker 1 increases for higher terminal penalty $a_1$, at high running inventory penalty $\phi_1$. This is economically sensible as higher liquidity price $\kappa^*_1$ will result in lower trading volume from the informed trader, and in turn reduce the terminal inventory pressure. However, on the right panel, for low running inventory penalty $\phi_1$, the optimal cost $\kappa^*_1$ is decreasing in terminal inventory penalty $a_1$. This is somewhat counterintuitive since in general we expect the broker to increase cost with higher penalty, so as to reduce traded volume. We display the t-test results for the same illustrative quantities as in Section \ref{sec:penalty_effect} in the two cases $a_1=2$ and $a_1=20$ and under two different regimes $\phi_i=1$ or $\phi_i=10$ respectively in Table \ref{table:a_B_1_t_test}. 
We conduct 4000 simulations for $\phi_i=1$ and $\phi_i=10$ respectively.

First we attempt to reconcile the counter-intuitive observation in the right panel. In the low penalty regime ($\phi_i=1$) we note that between two scenarios $a_1 = 2$ and $a_1 = 20$ we have:
\begin{itemize}
    \item As $Z^I$ increases and $Y_{tot}^I$ decreases from Scenario 1 to Scenario 2, broker 1 allows more informed trading while liquidating the informed flow faster.
    \item As $Y_{tot}^\alpha$ increases, broker 1 reduces her speculative trading volume.
    \item As $\bar{Q}$ decreases, broker 1 in general holds less inventory in the face of higher terminal penalty.
\end{itemize}
Similar to Section \ref{sec:RunningInventoryNumericalExperiment}, here we also reconcile the observation for the increase in terminal penalty. Continuing, we focus on the reason why broker 1 behaves differently in the two panels of Figure \ref{fig:GBTG-equilibrium-a_B} under different running inventory penalty. We make three additional observations from Table \ref{table:a_B_1_t_test}:
\begin{enumerate}
    \item The ratio $Y_{tot}^\alpha / Z^I$ is significantly higher at $\phi_i=1$ ($\frac{1.353}{0.118}\approx11.5$) than at $\phi_i=10$ ($\frac{0.501}{0.646}\approx0.8$). This indicates that when the running inventory penalty $\phi_i$ is low, broker 1 primarily profits through alpha. Conversely, when $\phi_i$ is high, broker 1 relies more on transactions with the informed trader.
    \item For low $\phi_i$ with higher dependency on speculative trading, broker 1 reduces $Y_{tot}^\alpha$ by 7.33\% . However, for high $\phi_i$, she barely changes $Y_{tot}^\alpha$.
    \item At high $\phi_i$, broker 1 cuts down informed trading, her main channel of profit, as indicated by the decrease of $Z^I$ by 6.48\%. 
\end{enumerate}
Therefore, we conclude that the different behaviours of broker 1 is caused by the distinct focus of trading (speculative trading or informed trading).

\begin{table}
\begin{center}
    \begin{tabular}{ l|l|l|l|l|l|l|l|l }
     Char. & average-1 & & std-1 & & average-2 & & std-2 & \\
     \hline
     & $\phi_i=1$ & $\phi_i=10$ & $\phi_i=1$ & $\phi_i=10$ & $\phi_i=1$ & $\phi_i=10$ & $\phi_i=1$ & $\phi_i=10$ \\
     \hline
     $Z^I$ & 0.118 & 0.646 & 0.051 & 0.287 & 0.131(11.17\%) & 0.604(-6.48\%) & 0.056 & 0.268 \\
     $\bar{Q}$ & 0.254 & 0.045 & 0.164 & 0.027 & 0.225(-11.24\%) & 0.043(-5.99\%) & 0.146 & 0.025 \\
     $Y_{tot}^\alpha$ & 1.353 & 0.501 & 0.669 & 0.251 & 1.253(-7.33\%) & 0.498(-0.47\%) & 0.620 & 0.250 \\
     $Y_{tot}^I$ & 0.008 & 0.064 & 0.005 & 0.046 & 0.005(-34.36\%) & 0.028(-56.23\%) & 0.003 & 0.019
     \end{tabular}
    \caption{Average and standard deviation for characteristics in Scenario 1 ($a_1=2$) and Scenario 2 ($a_1=20$) when $\phi_i=1$ and $\phi_i=10$ respectively. The numbers in parentheses represent the percentage decrease. The other parameters are the same as in Figure \ref{fig:GBTG-equilibrium-a_B}.}\label{table:a_B_1_t_test}
\end{center}
\end{table}

In Figure \ref{fig:GBTG-a_B-V_B&V_I}, broker 1 has a decreasing initial value function due to higher terminal inventory penalty both for $\phi_i=1$ and $\phi_i=10$, as expected. However, although broker 2 slightly increases her optimal liquidity price $\kappa^*_2$ to the informed trader both for $\phi_i=1$ and $\phi_i=10$ in Figure \ref{fig:GBTG-equilibrium-a_B}, she has an increasing initial value function when $\phi_i=10$ (the first panel) and a decreasing initial value function when $\phi_i=1$ (the third panel). This discrepancy arises because broker 2 attracts more customers when the running inventory penalty is higher, but loses customers when the running inventory penalty is lower compared to broker 1. The tendency of the informed trader's initial value function is more influenced by her trading volume with the two brokers. Consequently, her initial value function decreases as $\kappa^*_1$ increases when $\phi_i=10$, and increases as $\kappa^*_1$ decreases when $\phi_i=1$.

\begin{figure}
    \centering
    \includegraphics[width=0.44\textwidth]{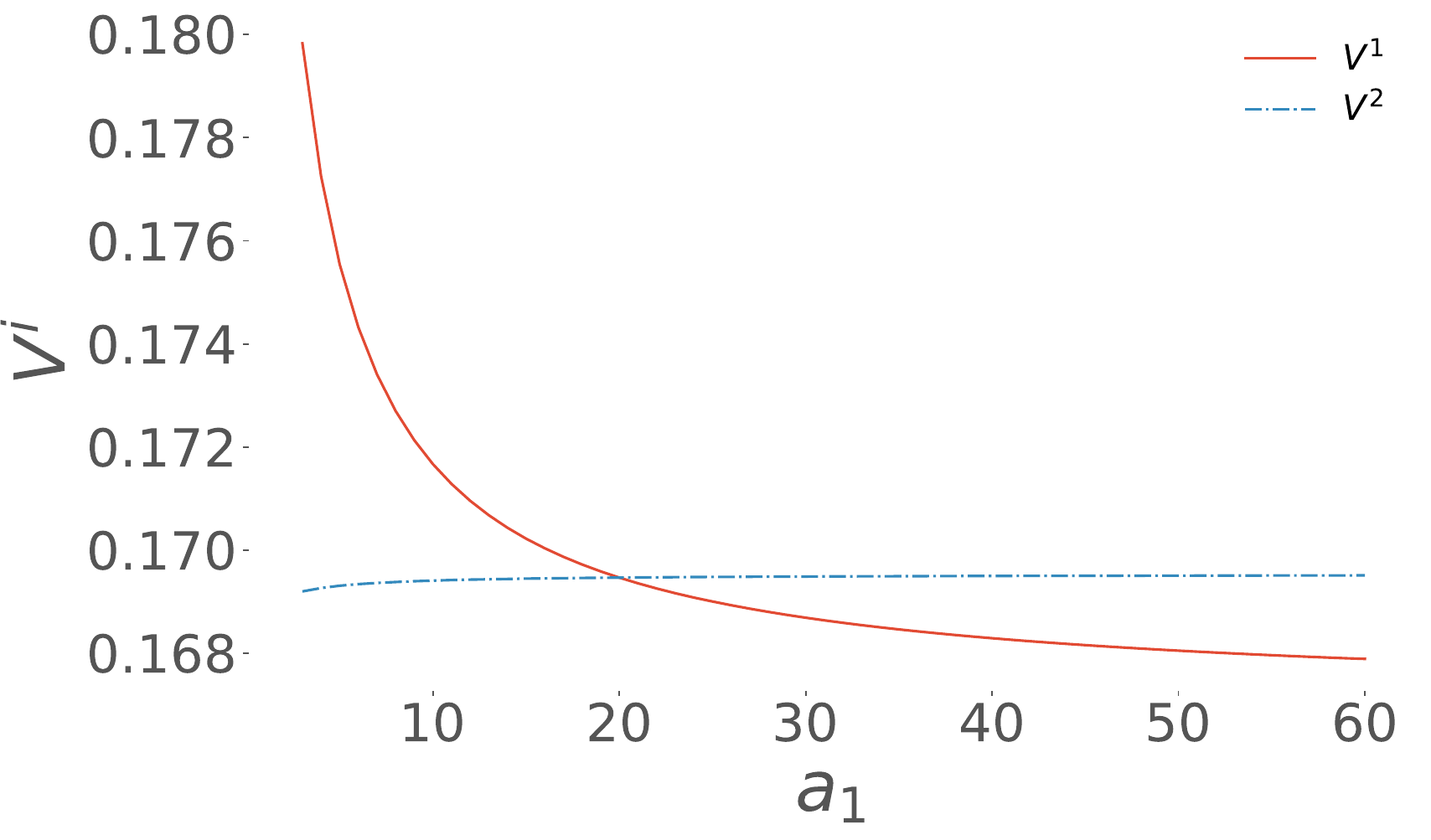}
    \includegraphics[width=0.44\textwidth]{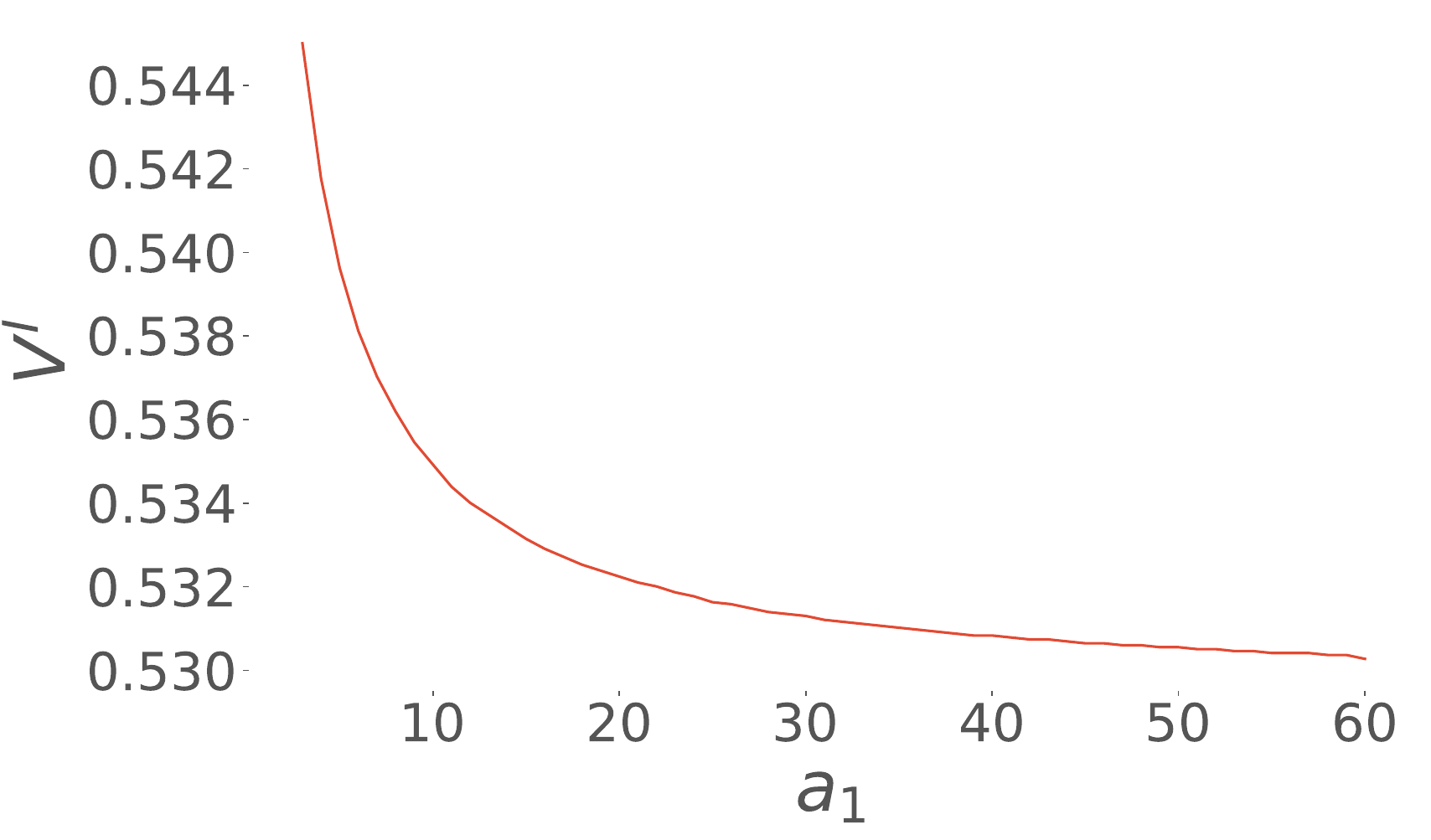}\\
    \includegraphics[width=0.44\textwidth]{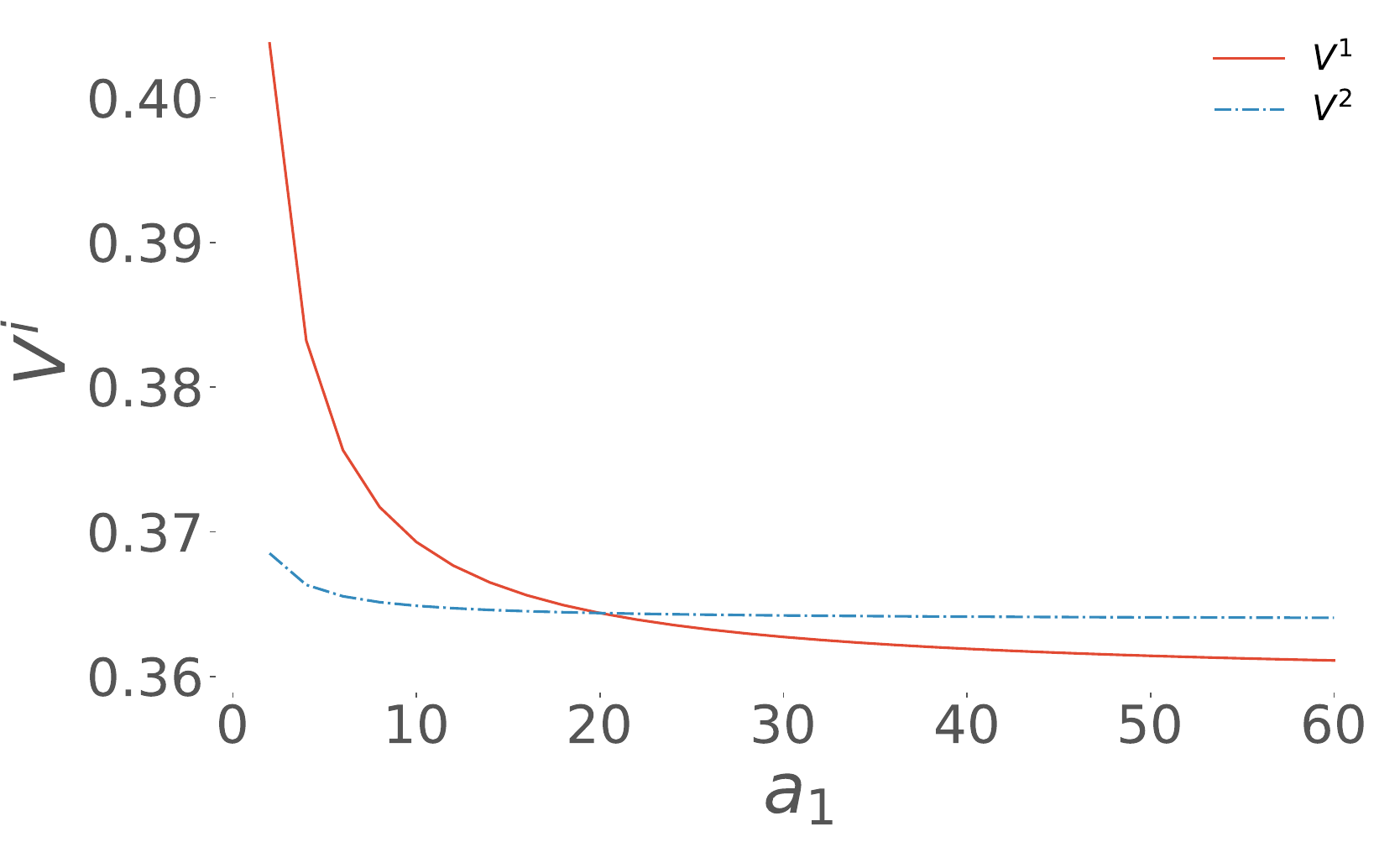}
    \includegraphics[width=0.44\textwidth]{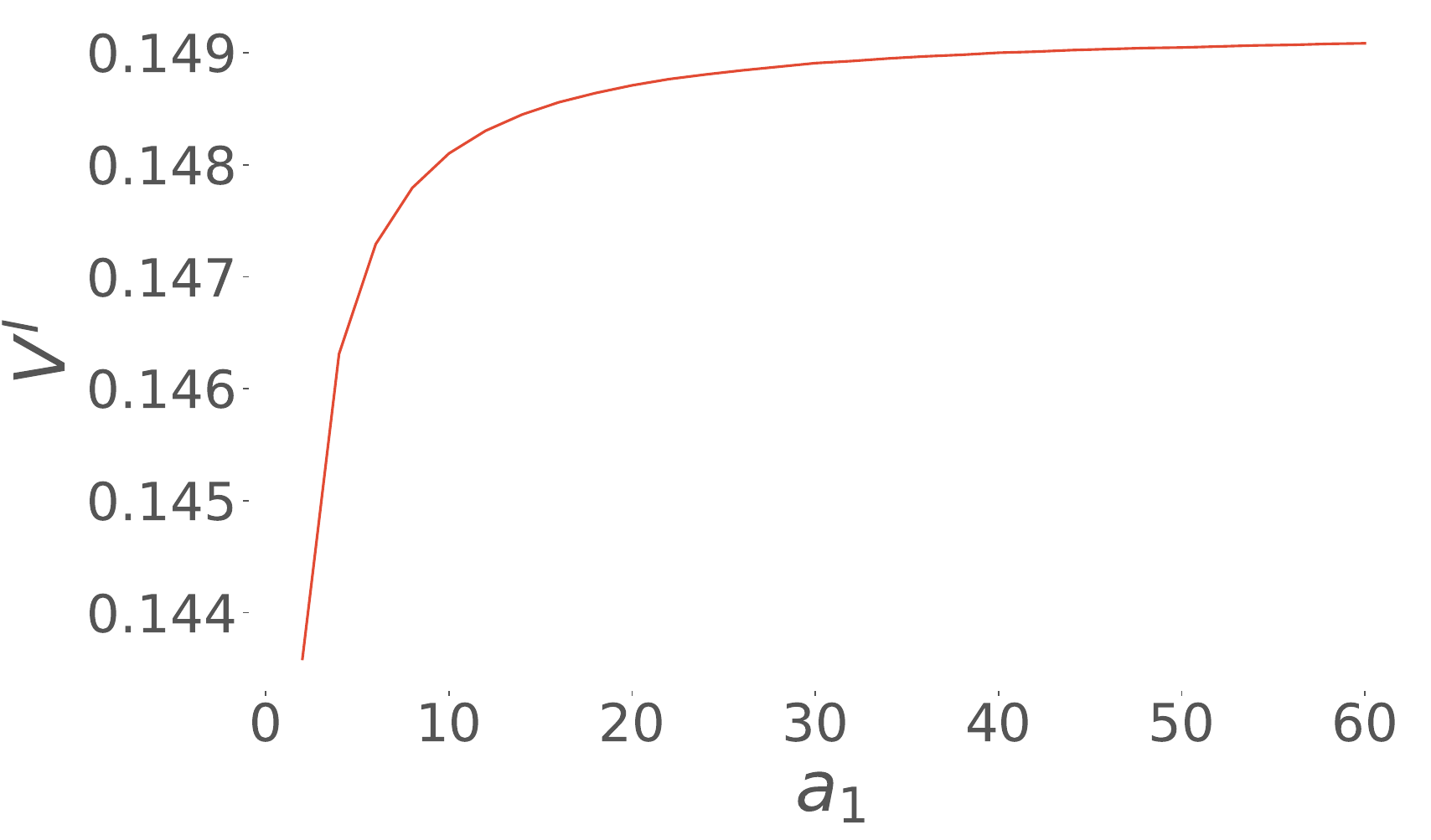}
    \caption{Initial value functions of the brokers ($V^B$) and the informed trader ($V^I$) in equilibrium for different values of terminal penalty parameter $a_1$. Two figures on the left: $\phi_1=\phi_2=10$. Two figures on the right: $\phi_1=\phi_2=1$. The other parameters are the same as in Figure \ref{fig:GBTG-equilibrium-a_B}.}
    \label{fig:GBTG-a_B-V_B&V_I}
\end{figure}

\section{Conclusion}\label{sec:conclusion}

We have studied a market making game among brokers trading with informed and uniformed counterparties. This Stackelberg-type game tells us how the informed trader decides his trading speed with each broker depending on the liquidity cost they charge, and demonstrates the internalization/externalization strategy for each broker. This is a direct trade-off between speculation and inventory risk. Furthermore, we study the competition among brokers through liquidity costs, before the sequential Stackelberg-type game between the brokers and informed trader. We numerically compute an equilibrium which we demonstrate is not Pareto efficient. The effect of model parameters (e.g. running inventory penalty) on the equilibrium and welfare of each player is then explored.

{\bf Disclosure Statement}

The authors report there are no competing interests to declare.

\bibliographystyle{chicago}
\bibliography{References}




\end{document}